\tikzset{
	smallstate/.style={state,
		circle, 
		minimum size=4mm, 
		font=\itshape
	},
	initial text=,
	>=stealth',
	>/.style={ultra thick}
} 
\newcommand{\A}{A}
\newcommand{\allcompletetrees}{\mathbb C}    
\newcommand{\alltrees}{\mathbb T}    
\newcommand{\bwdirecttraceinclusion}{\xincl{\mathsf{bw}}}
\newcommand{\bwsim}{\xsim{\mathsf{bw}}}
\newcommand{\directtraceinclusion}{\xincl{\mathsf{di}}}
\newcommand{\disim}{\xsim{\mathsf{di}}}
\newcommand{\dl}{D}    
\newcommand{\dli}{\subseteq_\dl}
\newcommand{\dom}{dom}
\newcommand{\drel}{R_\mathsf{d}}		
\newcommand{\ldrel}{\hat{R}_\mathsf{d}}		
\newcommand{\dwsimlarg}{\xsimlarg{\mathsf{dw}}}
\newcommand{\dwsim}{\xsim{\mathsf{dw}}}
\newcommand{\dwtraceinclusionlarg}{\xtraceinclusionlarg{\mathsf{dw}}}
\newcommand{\dwtraceinclusion}{\xincl{\mathsf{dw}}}
\newcommand{\f}{a}
\newcommand{\id}{{\it id}} 
\newcommand{\irel}{R_i}		
\newcommand{\ksim}[1]{\sqsubseteq^{#1}}
\newcommand{\kxsim}[2]{\ksim{#1\textrm - #2}}
\newcommand{\kdwsim}[1]{\kxsim{#1}{\mathsf{dw}}}
\newcommand{\kupsim}[2]{\kxsim{#1}{\mathsf{up}}\!\!(#2)}
\newcommand{\transasymkupsim}[2]{\prec^{#1\textrm{-}\mathsf{up}}\!\!(#2)}
\newcommand{\transkdwsim}[1]{\preceq^{#1\textrm{-}\mathsf{dw}}}
\newcommand{\transkupsim}[2]{\preceq^{#1\textrm{-}\mathsf{up}}\!\!(#2)}
\newcommand{\transasymkdwsim}[1]{\prec^{#1\textrm{-}\mathsf{dw}}}
\newcommand{\transTD}[4]{\langle{#1},{#2},\vect{#3}{#4}\rangle}	
\newcommand{\makeprunerel}[2]{\prunerel({#1},{#2})}
\newcommand{\nat}{\mathbb{N}}
\newcommand{\node}{v}
\newcommand{\nodew}{w}
\newcommand{\nodex}{x}
\newcommand{\prunerel}{P}
\newcommand{\rank}{\#}
\newcommand{\run}{\pi} 
\newcommand{\spstate}{\psi}						
\newcommand{\strictbwdirecttraceinclusion}{\xstrictincl{\mathsf{bw}}}
\newcommand{\strictbwsim}{\strictxsim{\mathsf{bw}}}
\newcommand{\strictdirecttraceinclusion}{\xstrictincl{\mathsf{di}}}
\newcommand{\strictdisim}{\strictxsim{\mathsf{di}}}
\newcommand{\strictdwsimlarg}{\strictxsimlarg{\mathsf{dw}}}
\newcommand{\strictdwsim}{\strictxsim{\mathsf{dw}}}
\newcommand{\strictdwtraceinclusion}{\xstrictincl{\mathsf{dw}}}
\newcommand{\strictdwtraceinclusionlarg}{\xstrictincllarg{\mathsf{dw}}}
\newcommand{\strictupsim}[1]{\strictxsim{\mathsf{up}}\!\!\!\;\!(#1)}
\newcommand{\strictuptraceinclusion}[1]{\xstrictincl{\mathsf{up}}\!\!\!\;\!(#1)}
\newcommand{\strictxsim}[1]{\sqsubset^{#1}}
\newcommand{\strictxsimlarg}[1]{\sqsupset^{#1}}
\newcommand{\tickNA}{\--}
\newcommand{\tickNO}{\times}
\newcommand{\tickOK}{\checkmark}  
\newcommand{\trans}[4]{\langle{#1},{#2},\vect{#3}{#4}\rangle}		
\newcommand{\tree}{t}
\newcommand{\uli}{\subseteq_\upl}    
\newcommand{\upl}{U}    
\newcommand{\upsim}[1]{\xsim{\mathsf{up}}\!\!\!\;\!\!(#1)}
\newcommand{\upsimlarg}[1]{\xsimlarg{\mathsf{up}}\!\!(#1)}
\newcommand{\uptraceinclusion}[1]{\xincl{\mathsf{up}}\!\!\!\;\!(#1)}
\newcommand{\uptraceinclusionlarg}[1]{\xincllarg{\mathsf{up}}\!\!\!\;\!(#1)}
\newcommand{\urel}{R_\mathsf{u}}		
\newcommand{\vect}[2]{#1_1\dotsc#1_{#2}}
\newcommand{\xincl}[1]{\subseteq^{#1}}
\newcommand{\xincllarg}[1]{\supseteq^{#1}}
\newcommand{\xsim}[1]{\sqsubseteq^{#1}}
\newcommand{\xsimlarg}[1]{\sqsupseteq^{#1}}
\newcommand{\xstrictincl}[1]{\subset^{#1}}
\newcommand{\xstrictincllarg}[1]{\supset^{#1}}
\newcommand{\xtraceinclusionlarg}[1]{\supseteq^{#1}}
\newcommand{\pspace}{\texttt{PSPACE}}
\newcommand{\exptime}{\texttt{EXPTIME}}
\newcommand{\libvata}{\texttt{libvata}}
\newcommand{\level}{\mathit{level}}
\newcommand{\mypar}[1]{{\bf\noindent{#1}}}
\newcommand{\ignore}[1]{}
\newcommand{\ru}{{\it RU}}
\title{Reduction of Nondeterministic Tree Automata
\thanks{This work was supported by the Czech Science Foundation, project 16-24707Y.}}
\titlerunning{Reduction of Nondeterministic Tree Automata}
\author{Ricardo Almeida\inst{1} \and Luk\'{a}\v{s} Hol\'{i}k\inst{2} \and Richard Mayr\inst{1}}
\institute{University of Edinburgh, UK \and Brno University of Technology, Czech Republic}
\date{}
\begin{document}
\maketitle

\begin{abstract}
We present an efficient algorithm to reduce the size of nondeterministic 
tree automata, while retaining their language.
It is based on new transition pruning techniques, 
and quotienting of the state space w.r.t.\ suitable equivalences.
It uses criteria based on combinations of downward and upward simulation
preorder on trees, and the more general downward and upward language inclusions.
Since tree-language inclusion is \exptime-complete, we describe
methods to compute good approximations in polynomial time.

We implemented our algorithm as a module of the well-known \libvata\ tree automata
library, and tested its performance on a given collection of tree automata
from various applications of \libvata\ in regular model checking and shape
analysis, as well as on various classes of randomly generated tree automata.
Our algorithm yields substantially smaller and sparser automata than all
previously known reduction techniques, and it is still fast enough to handle
large instances.
\end{abstract}

\section{Introduction}\label{sec:introduction}

\mypar{Background.}
Tree automata are a generalization of word automata that accept 
trees instead of words \cite{tata2008}.
They have many applications in 
model checking
\cite{abdulla:tree,Abdulla:transducers05,Bouajjani:modelChecking2006}, 
term rewriting \cite{tool:Autowrite}, and related areas of formal software
verification, e.g., shape analysis \cite{abdulla:heapForest2013,Holik:shapeForest2013,Habermehl:forestHeap2011}.
Several software packages for manipulating tree automata have been developed, e.g.,
MONA \cite{tool:MONA}, Timbuk \cite{tool:Timbuk}, Autowrite
\cite{tool:Autowrite}
and \libvata\ \cite{tool:libvata},
on which other verification tools like Forester \cite{tool:forester} are based.

For nondeterministic automata, many questions about their languages
are computationally hard. The language universality, equivalence and inclusion
problems are \pspace-complete for word automata and \exptime-complete for tree
automata \cite{tata2008}.
However, recently techniques have been developed that can solve many practical
instances fairly efficiently. 
For word automata there are antichain techniques \cite{Abdulla:whensimulation2010},
congruence-based techniques \cite{Bonchi:bisimCongr2013} and techniques based on generalized simulation
preorders \cite{mayr:advanced2013}.
The antichain techniques have been generalized to tree automata in 
\cite{Bouajjani:antichain2008,holik:efficient2011} and 
implemented in the \libvata\ library \cite{tool:libvata}.
Performance problems also arise in computing the intersection of
several languages, since the product construction multiplies the numbers of
states.

\mypar{Automata Reduction.}
Our goal is to make tree automata more computationally tractable in practice.
We present an efficient
algorithm for the reduction of nondeterministic tree automata, in the sense of 
obtaining a smaller automaton with the same language, though not 
necessarily with the absolute minimal possible number of states. 
(In general, there is no unique nondeterministic automaton with the minimal
possible number of states for a given language, i.e., there can be several
non-isomorphic nondeterministic automata of minimal size. This holds even 
for word automata.)
The reason to perform reduction is that the smaller reduced
automaton is more efficient to handle in a subsequent computation.
Thus there is an algorithmic tradeoff between the effort for 
reduction and the complexity of the problem later considered for
this automaton. 
The main applications of reduction are the following:
(1) Helping to solve hard problems like language
universality/equivalence/inclusion.
(2) If automata undergo a long chain of manipulations/combinations by operations
like union, intersection, projection, etc., then intermediate 
results can be reduced several times on the way
to keep the automata within a manageable size.
(3) There are fixed-parameter tractable problems (e.g., in model checking
where an automaton encodes a logic formula) where the size of one automaton
very strongly influences the overall complexity, and must be kept as small as
possible.

\mypar{Our contribution.}
We present a reduction algorithm for nondeterministic tree automata.
(The tool is available for download \cite{tool:libvata-heavy}.)
It is based on a combination of new transition pruning techniques for
tree automata, and quotienting of the state space w.r.t.\ suitable
equivalences.
The pruning techniques are related to those presented for word automata in 
\cite{mayr:advanced2013}, but significantly more complex due to the
fundamental asymmetry between the upward and downward directions in trees.

Transition pruning in word automata \cite{mayr:advanced2013} is based on the
observation that certain transitions can be removed (a.k.a pruned) without changing the
language, because other `better' transitions remain.
One defines some strict partial order (p.o.) between transitions 
and removes all transitions that are not maximal 
w.r.t.\ this order.
A strict p.o.\ between transitions is called \emph{good for pruning} (GFP)
iff pruning w.r.t.\ it preserves the language of the automaton.
Note that pruning reduces not only the number of transitions, but also,
indirectly, the number of states. By removing transitions, some states may
become `useless', in the sense that they are unreachable from any initial
state, or that it is impossible to reach any accepting state from them.
Such useless states can then be removed from the automaton without changing
its language.
One can obtain computable strict p.o.\ between transitions 
by comparing the possible backward- and forward behavior of their source- and target states,
respectively.
For this, one uses
computable relations like backward/forward simulation preorder and 
approximations of backward/forward trace inclusion via lookahead- or multipebble
simulations. Some such combinations of backward/forward trace/simulation
orders on states induce strict p.o.\ between transitions that are
GFP, while others do not \cite{mayr:advanced2013}.
However, there is always a symmetry between backward and forward, since finite
words can equally well be read in either direction.

This symmetry does not hold for tree automata, because the tree branches as
one goes downward, while it might `join in' side branches as one goes upward.
While downward simulation preorder (resp.\ downward language inclusion) between
states in a tree automaton is a direct generalization of forward simulation
preorder (resp.\ forward language inclusion) on words, the corresponding
upward notions do not correspond to backward on words.
Comparing upward behavior of states in tree automata depends also
on the branches that `join in' from the sides as one goes upward in the tree.
Thus upward simulation/language inclusion is only defined {\em relative} to a given
other relation that compares the downward behavior of states `joining in' from
the sides \cite{Holik:computingSim08}. So one speaks of 
``upward simulation {\em of} the identity relation'' or
``upward simulation {\em of} downward simulation''.
When one studies strict p.o.\ between transitions in tree automata in order to
check whether they are GFP, one has combinations of three 
relations: the source states are compared by an upward relation $X(Y)$ of some
downward relation $Y$, while the target states are compared w.r.t.\ some
downward relation $Z$ (where $Z$ can be, and often must be, different from $Y$).
This yields a richer landscape, and many
counter-intuitive effects.

We provide a complete picture of which combinations of upward/downward
simulation/trace inclusions are GFP on tree automata; 
cf.\ Figure~\ref{fig:GFP_relations_trees}.
Since tree-(trace)language inclusion is \exptime-complete \cite{tata2008}, we describe
methods to compute good approximations of them in polynomial time.
Finally, we also generalize results on quotienting of tree automata
\cite{Lukas:PhD13} to larger relations, such as approximations of trace inclusion.

We implemented our algorithm \cite{tool:libvata-heavy} 
as a module of the well-known 
\libvata\ \cite{tool:libvata} tree automaton
library, and tested its performance on a given collection of tree automata
from various applications of \libvata\ in regular model checking and shape
analysis, as well as on various classes of randomly generated tree automata.
Our algorithm yields substantially smaller automata than all
previously known reduction techniques (which are mainly based on
quotienting). Moreover, the thus obtained automata are also much sparser (i.e., use fewer
transitions per state and less nondeterministic branching) 
than the originals, which yields additional performance
advantages in subsequent computations.

\section{Trees and Tree Automata} 

\mypar{Trees.}
A \emph{ranked alphabet} $\Sigma$ is a set
of symbols together with a function $\rank : \Sigma \rightarrow \nat_0$. For $\f
\in \Sigma$, $\rank{(\f)}$ is called the \emph{rank} of $\f$. For
$n \geq 0$, we denote by $\Sigma_n$ the set of all symbols of $\Sigma$ which have rank $n$. 

We define a \emph{node} as a sequence of elements of $\nat$,
where $\varepsilon$ is the empty sequence.
For a node $\node \in \nat^*$, 
any node $\node'$ s.t. $\node = \node' \node''$,
for some node $\node''$, is said to be a \emph{prefix} of $\node$,
and if $\node'' \neq \varepsilon$ then $\node'$ is a \emph{strict prefix} of $\node$.
For a node $\node \in \nat^*$, we define the $i$-th child of $v$ to be the node $vi$, for some $i \in \nat$.
Given a ranked alphabet $\Sigma$, 
a \emph{tree} over $\Sigma$ is defined as a partial mapping $\tree: \nat^* \rightarrow \Sigma$ such that 
for all $v \in \nat^*$ and $i \in \nat$, 
if $vi \in \dom(\tree)$ then 
\textbf{(1)} $\node\in\dom(\tree)$, and
\textbf{(2)} $\rank(\tree(\node))\geq i$.
In this paper we consider only finite trees.

Note that the number of children of a node $v$ 
may be smaller than $\#(t(v))$.
In this case we say that the node is \emph{open}. 
Nodes which have exactly $\#(t(v))$ children are called \emph{closed}.
Nodes which do not have any children are called \emph{leaves}.
A~tree is closed if all its nodes are closed, otherwise it is open.
By $\allcompletetrees(\Sigma)$ we denote the set of all closed trees over $\Sigma$
and by $\alltrees(\Sigma)$ the set of all trees over $\Sigma$.
A~tree $\tree$ is \emph{linear} iff every node in $\dom(\tree)$ has at most one child.

The \emph{subtree} of a tree $\tree$ at $\node$ is defined as the tree $\tree_\node$ 
such that $\dom(\tree_\node) = \{\node'\mid \node\node'\in\dom(\tree)\}$ 
and $\tree_\node(\node')=\tree(\node \node')$ for all $\node' \in \dom(\tree_\node)$. 
A~tree $\tree'$ is a prefix of $\tree$ iff $\dom(\tree')\subseteq\dom(\tree)$ and 
for all $\node\in\dom(\tree')$, $\tree'(\node) = \tree(\node)$.
For $\tree \in \allcompletetrees(\Sigma)$, 
the \emph{height of a node} $\node$ of $\tree$ is given by the function $h$:
if $\node$ is a leaf then $h(\node) = 1$, otherwise $h(\node) = 1 + max(h(\node 1)), \ldots, h(\node \#(\tree(\node))))$.
We define the height of a tree $\tree \in \allcompletetrees(\Sigma)$ as $h(\epsilon)$,
i.e., as the number of levels of $\tree$.

\mypar{Tree automata, top-down.}
A (finite, nondeterministic) \emph{top-down tree automaton} (TDTA) is a quadruple 
$\A = (\Sigma,Q,\delta, I)$ where $Q$ is a finite set of states, $I \subseteq Q$ is a
set of initial states, $\Sigma$ is a ranked alphabet, 
and $\delta \subseteq Q \times \Sigma\times Q^+ $ is the set of transition rules. 
A TDTA has an unique final state, which we represent by $\spstate$.
The transition rules satisfy that if $\langle q, \f, \spstate \rangle \in\delta$ then $\rank(\f) = 0$, 
and if $\langle q, \f, q_1 \ldots q_n \rangle \in \delta$ (with $n > 0$) then $\rank(\f) = n$.

A \emph{run} of $\A$ over a tree $\tree \in \alltrees(\Sigma)$ (or a $\tree$-run in $\A$) is a partial mapping 
$\run : \nat^*\rightarrow Q$ such that $\node\in\dom(\run)$ iff 
either $\node\in\dom(\tree)$ or $\node = \node'i$ where $\node'\in\dom(\tree)$ and $i \leq \rank(\tree(\node'))$.  
Further, for every $\node\in\dom(\tree)$, there exists either
\textbf{a)} a rule $\langle q, a, \spstate \rangle$ such that $q = \run(\node)$ and $\f = \tree(\node)$,
or \textbf{b)} a rule $\transTD q \f q n$ such that $q = \run(\node)$, 
$\f = \tree(\node)$, and $q_i = \run(\node i)$ for each $i:1\leq i\leq \rank(\f)$.
A \emph{leaf of a run} $\run$ on $\tree$ is a node $\node\in\dom(\run)$ 
such that $\node i\in\dom(\run)$ for no $i \in\nat$.
We call it \emph{dangling} if $\node\not\in\dom(\tree)$. 
Intuitively, the dangling nodes of a run over $t$ are all the nodes which are in $\run$ but are missing 
in $t$ due to it being incomplete.
Notice that dangling leaves of $\pi$ are children of open nodes of $\tree$. 
The prefix of depth $k$ of a 
run $\run$ is denoted $\run_k$. Runs are always finite since the trees we are considering are finite.

We write $\tree\stackrel{\run}{\Longrightarrow} q$ to denote
that $\run$ is a $t$-run of $\A$ such that $\run(\epsilon)=q$. 
We use $\tree\Longrightarrow q$ to denote that such run $\run$ exists.
A run $\run$ is accepting if $\tree\stackrel{\run}{\Longrightarrow} q \in I$.
The \emph{downward language of a state} $q$ in $\A$ is defined by
$D_A(q)=\{\tree\in \allcompletetrees(\Sigma)\mid \tree\Longrightarrow q\}$, 
while the \emph{language} of $\A$ is defined by $L(\A)=\bigcup_{q\in I}D_A(q)$. 
The \emph{upward language} of a state $q$ in $\A$, denoted $\upl_A(q)$, is then defined as the set of open trees $\tree$,
such that there exists an accepting $\tree$-run $\run$ with exactly one dangling leaf $\node$ s.t. 
$\run(\node) = q$.
We omit the $\A$ subscript notation when it is implicit which automaton we are considering.

%

In the related literature,
it is common to define a tree automaton bottom-up,
reading a tree from the leaves to the root~\cite{tata2008,Bouajjani:antichain2008,holik:efficient2011}.
A bottom-up tree automaton (BUTA) can be obtained from a TDTA by reversing the direction of the transition rules
and by swapping the roles between the initial states and the final states. 
See Appendix~\ref{sec:appendix_A} for an example of a tree automaton 
presented in both BUTA and TDTA form.

\section{Simulations and Trace Inclusions}\label{sec:simulations}

We consider different types of relations on states of a TDTA which under-approximate language inclusion.
Note that words are but a special case of trees where every node has only one
child, i.e., words are linear trees.
\emph{Downward} simulation/trace inclusion on TDTA
corresponds to \emph{direct forward} simulation/trace inclusion in special
case of word automata, 
and \emph{upward} corresponds to \emph{backward}~\cite{mayr:advanced2013}.

\mypar{Forward simulation on word automata.}
Let $\A = (\Sigma, Q, \delta, I, F)$ be a NFA.
A \emph{direct forward simulation} $\mathrel{D}$ is a binary relation on $Q$ such that if $q \mathrel{D} r$, 
then 
\begin{enumerate}
 \item $q \in F \Longrightarrow r \in F$, and
 \item for any $\langle q, a, q' \rangle \in \delta$,
 there exists $\langle r, a, r' \rangle \in \delta$ such that $q' \mathrel{D} r'$.
\end{enumerate}
The set of direct forward simulations on $\A$ contains $\id$ and is closed
under union and transitive closure.
Thus there is a unique maximal direct forward simulation on $\A$,
which is a preorder.
We call it \emph{the direct forward simulation preorder on $\A$} and write $\disim$.

\mypar{Forward trace inclusion on word automata.}
Let $\A = (\Sigma, Q, \delta, I, F)$ be a NFA
and $w=\sigma_1\sigma_2 \ldots \sigma_n \in \Sigma^*$ a word of length $n$.
A trace of $\A$ on $w$ (or a $w$-trace) starting at $q$ is a sequence of transitions 
$\run = q_0 \stackrel{\sigma_1}{\rightarrow} q_1 \stackrel{\sigma_2}{\rightarrow} \cdots \stackrel{\sigma_n}{\rightarrow} q_n$
such that $q_0=q$.
The \emph{direct forward trace inclusion} preorder $\xincl{di}$ is a binary relation on $Q$ such that $q \xincl{di} r$ iff
\begin{enumerate}
 \item $(q \in F \Longrightarrow r \in F)$, and
 \item for every word $w=\sigma_1\sigma_2 \ldots \sigma_n \in \Sigma^*$
and for every $w$-trace (starting at $q$) 
\\ $\run_q = q \stackrel{\sigma_1}{\rightarrow} q_1 \stackrel{\sigma_2}{\rightarrow} \cdots \stackrel{\sigma_n}{\rightarrow} q_n$,
there exists a $w$-trace (starting at $r$) 
$\run_r = r \stackrel{\sigma_1}{\rightarrow} r_1 \stackrel{\sigma_2}{\rightarrow} \cdots \stackrel{\sigma_n}{\rightarrow} r_n$
such that ($q_i \in F \Longrightarrow r_i \in F$) for each $i:1 \leq i \leq n$.
\end{enumerate}
Since $\run_r$ is required to preserve the acceptance of the states in $\run_q$, 
trace inclusion is a strictly stronger notion than language inclusion
(see Figure~\ref{fig:TrInclExample} in Appendix~\ref{sec:appendix_A}).
\\
\mypar{Downward simulation on tree automata.}
Let $\A = (\Sigma,Q,\delta, I)$ be a TDTA.
A \emph{downward simulation} $D$ is a binary relation on $Q$ such that if $q \mathrel{D} r$, then 
\begin{enumerate}
 \item $(q=\spstate \Longrightarrow r=\spstate)$, and 
 \item for any $\trans q \f q n \in \delta$, 
  there exists $\trans r \f r n \in \delta$ s.t.\ $q_i \,D\, r_i$ for $i:1\leq i\leq n$.
\end{enumerate}
Since the set of all downward simulations on $\A$ is closed under union and under reflexive and transitive closure 
(cf.~Lemma 4.1 in \cite{Lukas:PhD13}),
it follows that there is one unique maximal downward simulation on $\A$,
and that relation is a preorder. 
We call it \emph{the downward simulation preorder on $\A$} and write $\dwsim$.


\mypar{Downward trace inclusion on tree automata.} 
Let $\A = (\Sigma,Q,\delta,I)$ be a TDTA.
The \emph{downward trace inclusion} preorder $\dwtraceinclusion$ is a binary
relation on $Q$ s.t. $q \dwtraceinclusion r$ iff
for every tree $\tree \in \allcompletetrees(\Sigma)$ and for every $\tree$-run 
$\run_q$ with $\run_q(\epsilon) = q$ there exists another $\tree$-run $\run_r$
s.t.
\begin{enumerate}
 \item  $\run_r(\epsilon) = r$, and
 \item ($\run_q(\node) = \spstate \,\Longrightarrow\, \run_r(\node) = \spstate$) 
 for each leaf node $\node \in dom(\tree)$.
\end{enumerate}
Generally, one way of making downward language inclusion on the states of an automaton coincide with 
downward trace inclusion is by modifying the automaton to guarantee that 
\textbf{1)} there is one unique final state which has no outgoing transitions, 
\textbf{2)} from any other state, there is a path ending in that final state. 
Note that in a TDTA these two conditions are automatically satisfied: 
\textbf{1)} since the final state is reached after reading a leaf of the tree,
and \textbf{2)} because only complete trees are in the language of the automaton. 
Thus, in a TDTA, downward language inclusion and downward trace inclusion coincide. 

\mypar{Backward simulation on word automata.}
Let $\A = (\Sigma, Q, \delta, I, F)$ be a NFA.
A \emph{backward simulation} $\mathrel{B}$ is a binary relation on $Q$ s.t. if $q \mathrel{B} r$,
then 
\begin{enumerate}
\item ($q \in F \,\Longrightarrow\, r \in F$) and ($q \in I \,\Longrightarrow\, r \in I$), and
\item for any $\langle q', a, q \rangle \in \delta$, 
there exists $\langle r', a, r \rangle \in \delta$ s.t. $q' \mathrel{B} r'$.
\end{enumerate}
Like for forward simulation,
there is a unique maximal backward simulation on $\A$,
which is a preorder.
We call it \emph{the backward simulation preorder on $\A$} and write $\bwsim$.

\mypar{Backward trace inclusion on word automata.}
Let $\A = (\Sigma, Q, \delta, I, F)$ be a NFA
and $w=\sigma_1\sigma_2 \ldots \sigma_n \in \Sigma^*$ a word of length $n$.
A $w$-trace of $\A$ ending at $q$ is a sequence of transitions 
$\run = q_0 \stackrel{\sigma_1}{\rightarrow} q_1 \stackrel{\sigma_2}{\rightarrow} \cdots \stackrel{\sigma_n}{\rightarrow} q_n$
such that $q_n=q$.
The \emph{backward trace inclusion} preorder $\bwdirecttraceinclusion$ is a binary relation on $Q$
such that $q \bwdirecttraceinclusion r$ iff 
\begin{enumerate}
 \item ($q \in F \Longrightarrow r \in F$) and ($q \in I \Longrightarrow r \in I$), and
 \item for every word $w=\sigma_1\sigma_2 \ldots \sigma_n \in \Sigma^*$
and for every $w$-trace (ending at $q$) 
$\run_q = q_0 \stackrel{\sigma_1}{\rightarrow} q_1 \stackrel{\sigma_2}{\rightarrow} \cdots \stackrel{\sigma_n}{\rightarrow} q$,
there exists a $w$-trace (ending at $r$) 
$\run_r = r_0 \stackrel{\sigma_1}{\rightarrow} r_1 \stackrel{\sigma_2}{\rightarrow} \cdots \stackrel{\sigma_n}{\rightarrow} r$
such that ($q_i \in F \Longrightarrow r_i \in F \,\land\; q_i \in I \Longrightarrow r_i \in I$) 
for each $i:1 \leq i \leq n$.
\end{enumerate}

\mypar{Upward simulation on tree automata.}
Let $\A = (\Sigma,Q,\delta, I)$ be a TDTA.
Given a binary relation $\mathrel{R}$ on $Q$, an \emph{upward simulation} $U(R)$ induced by $\mathrel{R}$ is 
a binary relation on $Q$ such that if $q \mathrel{U(R)} r$, then 
\begin{enumerate}
\item $(q=\spstate \Longrightarrow r=\spstate)$ and $(q\in I \Longrightarrow r\in I)$, and
\item for any $\trans {q'} \f q n \in \delta$ with $q_i=q$ (for some $i:1 \leq i \leq n$), there exists 
\\ $\trans {r'} \f r n \in \delta$ such that $r_i=r$, $q'  \mathrel{U(R)} r'$ and $q_j \mathrel{R} r_j$ for 
 each $j:1\leq j\neq i\leq n$.
\end{enumerate}
Similarly to the case of downward simulation, 
for any given relation $R$, 
there is a unique maximal upward simulation induced by $R$ which is
a preorder (cf.~Lemma 4.2 in \cite{Lukas:PhD13}).
We call it \emph{the upward simulation preorder on $\A$ induced by $R$} and write $\upsim{R}$.



\mypar{Upward trace inclusion on tree automata.} 
Let $\A = (\Sigma, Q, \delta, I)$ be a TDTA.
Given a binary relation $\mathrel{R}$ on $Q$,
the \emph{upward trace inclusion preorder $\uptraceinclusion{R}$ induced by $\mathrel{R}$} 
is a binary relation on $Q$ such that $q \uptraceinclusion{\mathrel{R}}\, r$ 
iff $(q = \spstate \Longrightarrow r = \spstate)$ and the following holds:
for every tree $\tree \in T(\Sigma)$ and for every $\tree$-run $\run_q$ with $\run_q(v) = q$
for some leaf $\node$ of $\tree$,
there exists a $\tree$-run $\run_r$ s.t. 
\begin{enumerate}
 \item $\run_r(\node) = r$,
 \item for all prefixes $\node'$ of $\node$,
 $(\run_q(\node') \in I \Longrightarrow \run_r(\node') \in I)$, and
 \item if $\node'x \in \dom(\run_q)$,
 for some strict prefix $\node'$ of $\node$ and some $x \in \nat$ s.t. $\node'x$ is not a prefix of $\node$,
 then $\run_q(\node'x) \mathrel{R}\, \run_r(\node'x)$.
\end{enumerate}

\noindent
Downward trace inclusion is \exptime-complete for TDTA
\cite{tata2008},
while forward trace inclusion is \pspace-complete for word automata.
The complexity of upward trace
inclusion depends on the relation $R$ (e.g., it is \pspace-complete for $R=\id$).
In contrast, downward/upward simulation preorder is computable in polynomial
time \cite{Holik:computingSim08}, but typically yields only small under-approximations of the
corresponding trace inclusions.

\section{Transition Pruning Techniques}	\label{sec:transPruning}

We define pruning relations on a TDTA $\A = (\Sigma,Q,\delta, I)$.
The intuition is that certain transitions may be deleted without changing the
language, because `better' transitions remain.
We perform this pruning (i.e., deletion) of transitions by comparing their endpoints over the same symbol $\sigma \in \Sigma$.
Given two binary relations $\urel$ and $\drel$ on $Q$, we define the following
relation to compare transitions.
$$P(\urel,\drel) = \{ (\langle p, \sigma, r_1 \cdots r_n \rangle , \langle p', \sigma, r'_1 \cdots r'_n \rangle) \, \mid \, p \mathrel{\urel} p' \mbox{ and } (r_1 \cdots r_n) \mathrel{\ldrel} (r'_1 \cdots r'_n) \},$$
where $\ldrel$ results from lifting $\drel \subseteq Q \times Q$ to $\ldrel
\subseteq Q^n \times Q^n$, as defined below.
The function $P$ is monotone in the two arguments.
If $t\, P \,t'$ then $t$ may be pruned because $t'$ is `better' than $t$.
We want $P(\urel,\drel)$ to be a strict partial order (p.o.), i.e., 
irreflexive and transitive (and thus acyclic).
There are two cases in which $P(\urel,\drel)$ is guaranteed to be a strict p.o.:
\textbf{1)} $\urel$ is some strict p.o. $<_\mathsf{u}$ and 
$\ldrel$ is the standard lifting $\hat{\leq}_\mathsf{d}$ of some
p.o. $\leq_\mathsf{d}$ to tuples.
I.e., $(r_1 \cdots r_n) \hat{\leq}_\mathsf{d} (r'_1 \cdots r'_n)$ iff
$\forall_{1 \leq i \leq n} \ldotp r_i \leq_\mathsf{d} r'_i$.
The transitions in each pair of $P(<_\mathsf{u},{\leq}_\mathsf{d})$ depart
from different states and therefore the transitions are necessarily different.
\textbf{2)} $\urel$ is some p.o. $\leq_\mathsf{u}$ and $\ldrel$ is the
lifting $\hat{<}_\mathsf{d}$ of 
some strict p.o. $<_\mathsf{d}$ to tuples (defined below).
In this case the transitions in each pair of
$P(\leq_\mathsf{u},{<}_\mathsf{d})$ may have the same origin but 
must go to different tuples of states.  
Since for two tuples $(r_1\cdots r_n)$ and $(r'_1\cdots r'_n)$ to be different it suffices that $r_i \neq r'_i$ for 
some $1 \leq i \leq n$,
we define $\hat{<}_\mathsf{d}$ as a binary relation such that $(r_1 \cdots r_n) \hat{<}_\mathsf{d} (r'_1 \cdots r'_n)$
iff $\forall_{1 \leq i \leq n} \ldotp r_i \leq_\mathsf{d} r'_i$, and
$\exists_{1 \leq i \leq n} \ldotp r_i <_\mathsf{d} r'_i$.

Let $\A = (\Sigma,Q,\delta, I)$ be a TDTA and let $P \subseteq \delta \times \delta$ 
be a strict partial order.
The pruned automaton is defined as $Prune(\A,P) = (\Sigma,Q,\delta', I)$ where
$\delta' = \{ (p,\sigma,r) \in \delta \mid \nexists (p',\sigma,r') \in \delta \ldotp (p,\sigma,r) \, P \, (p',\sigma,r') \} $.
Note that the pruned automaton $Prune(A,P)$ is unique.
The transitions are removed without requiring the re-computation of the relation $P$, 
which could be expensive. 
Since removing transitions cannot introduce new trees in the language,
$L(Prune(\A,P)) \subseteq L(\A)$. 
If the reverse inclusion holds too (so that the language is preserved),
we say that $P$ is \emph{good for pruning} (GFP),
i.e., $P$ is GFP iff $L(Prune(A,P)) = L(A)$.

We now provide a complete picture of which combinations of
simulation and trace inclusion relations are GFP. Recall that simulations are
denoted by square symbols $\sqsubseteq$ while trace inclusions are denoted by
round symbols $\subseteq$. For every partial order $R$, the corresponding
strict p.o.\ is defined as $R\backslash R^{-1}$. 

$P(\strictbwdirecttraceinclusion, \strictdirecttraceinclusion)$ is not GFP for
word automata (see Fig.~2(a) in \cite{mayr:advanced2013} for a counterexample).
As mentioned before, words correspond to linear trees.
Thus $P(\strictuptraceinclusion{R}, \strictdwtraceinclusion)$ is not GFP for tree automata
(regardless of the relation $R$).
Figure~\ref{fig:GFP_counterexamples} presents several more counterexamples.
For word automata, 
$P(\strictbwdirecttraceinclusion,\disim)$ and
$P(\bwsim,\strictdirecttraceinclusion)$ are not GFP
(Fig.~\ref{GFPcountEx:strictbwtraceincldisim} and \ref{GFPcountEx:bwsimstrictdirecttraceinclusion}) 
even though $P(\bwdirecttraceinclusion,\strictdisim)$ 
and $P(\strictbwsim,\directtraceinclusion)$ are (cf.~\cite{mayr:advanced2013}).
Thus $P(\strictuptraceinclusion{R},\dwsim)$ 
and $P(\upsim{R},\strictdwtraceinclusion)$ are not GFP for tree automata
(regardless of the relation $R$).
For tree automata, $P(\strictupsim{\strictdwsim},\id)$ and
$P(\strictupsim{\strictdwtraceinclusion},\strictdwsim)$
are not GFP (Fig.~\ref{GFPcountEx:strictupsim(dwsim)id} and \ref{GFPcountEx:strictupsim(dwtraceinclusion)strictdwsim}).
Moreover, a complex counterexample (see Fig.~\ref{GFPcountEx:strictupsim(dwsim)strictdwtraceincl}; App.~\ref{sec:appendix_A})
is needed to show that
$P(\strictupsim{\strictdwsim},\strictdwtraceinclusion)$ is not GFP.

\begin{small}
\begin{figure}[htbp]
\begin{center}
\subfloat[
$P(\strictupsim{\strictdwsim},\id)$ is not GFP: if we remove the blue transitions, 
	the automaton no longer accepts the tree $a(c,d)$.
	We are considering $\Sigma_0 = \{c,d\}$, $\Sigma_1 = \{b\}$ and $\Sigma_2 = \{a\}$.
]{
\begin{tikzpicture}[on grid, node distance=1cm and 1.5cm]
		\tikzstyle{vertex} = [smallstate] 

		\path node [vertex,initial] (q1)  {};
		
		\path node [vertex] (q4) [below left = 2.2 and 0.3 of q1] {};
		\path node [vertex] (q5) [right = 0.7 of q4] {};
		\path node [vertex] (q3) [left = 2.5cm of q4] {};
		\path node [vertex] (q6) [right = 2.5cm of q5] {};
				
		\path node [vertex,accepting] (f1) [below right = 1.8 and 1.2cm of q3] {$\spstate$};
		\path node [vertex,accepting] (f2) [below left = 1.8 and 1.2cm of q6] {$\spstate$};

 		\path[->]
 			
 			(q1.south) edge node [below left] {} (q3.north)
 			(q1.south) edge node [above left] {} (q6.north)
 			(q1.south) edge [dashed] node [below left] {} (q4)
 			(q1.south) edge [dashed] node [above left] {} (q5)
 			
 			(q3) edge [bend left] node [above] {$b$} (q4)
 			
 			(q3) edge [blue,very thin] node [left] {$c$} (f1)
 			(q4) edge [red,very thick] node [right] {$c$} (f1)
  			(q5) edge [blue,very thin] node [left] {$d$} (f2)
  			(q6) edge [red,very thick] node [right] {$d$} (f2)
  			
  			(q5) edge [bend left] node [above] {$b$} (q6);
  			
  		\path
			(q1) -- node [below = 1.2 ] {$a$} (q1)
			(q1) -- node [below left = 0.7 and 0.1cm] {$a$} (q1)
						
			(q3) -- node [above = 0.1cm of q3] {$\strictupsim{\strictdwsim}$} (q4)
			(q3) -- node [below = 0.3cm of q3] {$\strictdwsimlarg$} (q4)
			(q5) -- node [above = 0.1cm of q3] {$\strictupsim{\strictdwsim}$} (q6)
			(q5) -- node [below = 0.3cm of q3] {$\strictdwsimlarg$} (q6);
  		
		\begin{pgfonlayer}{background}
		\end{pgfonlayer}

	\end{tikzpicture}
	\label{GFPcountEx:strictupsim(dwsim)id}
}
\end{center}		
\subfloat[$P(\strictbwdirecttraceinclusion,\disim)$ is not GFP for words: 
	    if we remove the blue transitions, the automaton no longer accepts the word $aaa$.]
{
	\begin{tikzpicture}[on grid, node distance=1cm and 1.5cm]
		\tikzstyle{vertex} = [smallstate] 

		\path node [vertex,initial] (i) {};
		\path node [vertex] (p2) [right = of i] {};
		\path node [vertex] (p1) [above = of p2] {};
		\path node [vertex] (p3) [below = 1.2cm of p2] {};
		\path node [vertex] (q1) [right = of p2] {};
		\path node [vertex] (q2) [right = of p3] {};
		\path node [vertex,accepting] (r1) [right = of q1] {};
		\path node [vertex,accepting] (r2) [right = of q2] {};
				
		\path[->]
 			(i) edge node [above left] {$b,c$} (p1)
 			(i) edge node [above] {$a$} (p2)
 			(i) edge node [below left] {$a,b$} (p3)
 			
 			(p1) edge node [above] {$a$} (q1)
 			(p2) edge [blue,very thin] node [below] {$a$} (q1)
 			(p3) edge [red,very thick] node [below] {$a$} (q2)
 			(q1) edge [red,very thick] node [below] {$a$} (r1)
 			(q2) edge [blue,very thin] node [below] {$a$} (r2);
 			
 		\path
			(p2) -- node {\rotatebox{-90}{$\strictbwdirecttraceinclusion$}} (p3)
			(q1) -- node [right = 0.2cm of q1] {\rotatebox{90}{$\strictbwdirecttraceinclusion$}} (q2)
			(q1) -- node [left  = 0.2cm of q1] {\rotatebox{-90}{$\disim$}} (q2)
			(r1) -- node {\rotatebox{90}{$\disim$}} (r2);
		
	\end{tikzpicture}
	\label{GFPcountEx:strictbwtraceincldisim}
}		\quad
\subfloat[$P(\bwsim,\strictdirecttraceinclusion)$ is not GFP for words: 
	    if we remove the blue transitions, the automaton no longer accepts the word $aaa$.]
{
	\begin{tikzpicture}[on grid, node distance=1cm and 1.5cm]
		\tikzstyle{vertex} = [smallstate] 

		\path node [vertex,initial] (i) {};
		\path node [vertex] (p2) [right = of  i] {};
		\path node [vertex] (p1) [above = 1.2cm of p2] {};
		\path node [vertex] (q1) [right = of p1] {};
		\path node [vertex] (q2) [right = of p2] {};
		\path node [vertex] (q3) [below = of q2] {};
		\path node [vertex,accepting] (f1) [right = of q1] {};
		\path node [vertex,accepting] (f2) [right = of q2] {};
		\path node [vertex,accepting] (f3) [right = of q3] {};
		
		\path[->]
 			(i) edge [blue,very thin] node [above left] {$a$} (p1)
 			(i) edge [red,very thick] node [above] {$a$} (p2)
 			(p1) edge [red,very thick] node [above] {$a$} (q1)
 			(p2) edge [blue,very thin] node [above] {$a$} (q2)
 			(p2) edge node [below left] {$a$} (q3)
 			(q1) edge node [above] {$a,b$} (f1)
 			(q2) edge node [above] {$a$} (f2)
 			(q3) edge node [above] {$b,c$} (f3);
 			
 		\path
			(i) -- node [below = 0.8cm of i] {\rotatebox{-90}{$\bwsim$}} (i)
			(p1) -- node [left = 0.2cm of p1]  {\rotatebox{90}{$\bwsim$}} (p2)
			(p1) -- node [right = 0.2cm of p1] {\rotatebox{-90}{$\strictdirecttraceinclusion$}} (p2)
			(q1) -- node [] {\rotatebox{90}{$\strictdirecttraceinclusion$}} (q2);
				
	\end{tikzpicture}
	\label{GFPcountEx:bwsimstrictdirecttraceinclusion}
}

\subfloat[
$P(\strictupsim{\strictdwtraceinclusion},\strictdwsim)$ is not GFP: if we remove the blue transitions, 
	the tree $a(a(c,c),a(c,c))$ is no longer accepted.
	We are considering $\Sigma_0 = \{c,d\}$, $\Sigma_1 = \{b\}$ and $\Sigma_2 = \{a\}$.
]{
\begin{tikzpicture}
		\tikzstyle{vertex} = [smallstate] 

		\path node [vertex,initial] (q1)  {};
		
		\path node [vertex] (q4) [below left = 1.6 and 0.5cm of q1] {};
		\path node [vertex] (q5) [right = 1.3cm of q4] {};
		\path node [vertex] (q3) [left = 1.8cm of q4] {};
		\path node [vertex] (q6) [right = 1.8cm of q5] {};
		
		\path node [vertex] (p3) [below left = 2 and 0.2cm of q3] {};
		\path node [vertex] (p2) [left = 0.8cm of p3] {};
		\path node [vertex] (p1) [left = 0.8cm of p2] {};
		\path node [vertex] (p4) [below right = 2 and 0.8cm of q3] {};
		\path node [vertex] (p5) [right = 0.8cm of p4] {};
		\path node [vertex] (p7) [below right = 2 and 0.1cm of q5 ] {};
		\path node [vertex] (p6) [left = 0.8cm of p7] {};
		\path node [vertex] (p8) [below right = 2 and 1.2cm of q5] {};
		\path node [vertex] (p9) [right = 0.8cm of p8] {};
		\path node [vertex] (p10) [right = 0.8cm of p9] {};
		
		\path node [vertex,accepting] (f1) [below = 0.9cm of p2] {$\spstate$};
		\path node [vertex,accepting] (f2) [below right = 0.9 and 0.2cm of p4] {$\spstate$};
		\path node [vertex,accepting] (f3) [below right = 0.9 and 0.2cm of p6] {$\spstate$};
		\path node [vertex,accepting] (f4) [below = 0.9cm of p9] {$\spstate$};
		
 		\path[->]
 			(q1.south) edge node [below left] {} (q3.north)
 			(q1.south) edge node [above left] {} (q6.north)
 			(q1.south) edge [dashed] node [below left] {} (q4)
 			(q1.south) edge [dashed] node [above left] {} (q5)
 			
 			(q3.south) edge [blue, very thin] node {} (p1.north)
 			(q3.south) edge [blue, very thin] node {} (p4.north)
 			(q3.south) edge [dashed,blue,very thin] node {} (p3)
 			(q3.south) edge [dashed,blue,very thin] node {} (p4.west)
 			(q4.south) edge [red,very thick] node {} (p2)
  			(q4.south) edge [red,very thick] node {} (p5)
  			
  			(q3) edge [bend left] node [above right] {$b$} (q4)
  			
  			(q5.south) edge [blue,very thin] node {} (p6.north)
 			(q5.south) edge [blue,very thin] node {} (p10.north)
 			(q5.south) edge [dashed,blue,very thin] node {} (p6.east)
 			(q5.south) edge [dashed,blue,very thin] node {} (p8)
 			(q6.south) edge [red,very thick] node {} (p7)
  			(q6.south) edge [red,very thick] node {} (p9)
  			
  			(q5) edge [bend left] node [above left] {$b$} (q6)
  			
			(p1) edge node [below left] {$c$} (f1)
			(p2) edge node [left] {$c,d$} (f1)
			(p3) edge node [below right] {$d$} (f1)
			(p4) edge node [left] {$c$} (f2)
  			(p5) edge node [left] {$c$} (f2)
  			(p6) edge node [left] {$c$} (f3)
  			(p7) edge node [left] {$c$} (f3)
  			(p8) edge node [below left] {$c$} (f4)
			(p9) edge node [left] {$c,d$} (f4)
			(p10) edge node [below right] {$d$} (f4);
			
  		\path
			(q1) -- node [below = 0.7] {$a$} (q1)
			(q1) -- node [below left = 0.6 and 0.2cm] {$a$} (q1)
			(q3) -- node [red, below right = 0.4 and 0.7cm] {$a$} (q4)
			(q3) -- node [blue, below left = 0.7 and 0.9cm] {$a$} (q4)
			(q3) -- node [blue, below left = 0.5 and 1.3cm] {$a$} (q4)
			(q5) -- node [red, below right = 0.5 and 0.8cm] {$a$} (q6)
			(q5) -- node [blue, below left = 0.6 and 0.2cm] {$a$} (q6)
			(q5) -- node [blue, below left = 0.6 and 0.8cm] {$a$} (q6)	
			(q3) -- node [] {$\strictupsim{\strictdwtraceinclusion}$} (q4)
			(q3) -- node [below = 0.1cm] {$\strictdwtraceinclusionlarg$} (q4)
			(q5) -- node [] {$\strictupsim{\strictdwtraceinclusion}$} (q6)
			(q5) -- node [below = 0.1cm] {$\strictdwtraceinclusionlarg$} (q6)
			
			(p1) -- node {$\strictdwsim$} (p2)
			(p2) -- node {$\strictdwsimlarg$} (p3)
			(p4) -- node {$\dwsim$} (p5)
			(p6) -- node {$\dwsim$} (p7)
			(p8) -- node {$\strictdwsim$} (p9)
			(p9) -- node {$\strictdwsimlarg$} (p10)
			;

	\end{tikzpicture}
	\label{GFPcountEx:strictupsim(dwtraceinclusion)strictdwsim}
}
\caption{GFP counterexamples. 
	  A transition is drawn in dashed when a different transition by the same symbol departing from the same state
	  already exists.
	  We draw a transition in thick red when it is better than another transition (drawn in thin blue).}
	\label{fig:GFP_counterexamples}
\end{figure}
\end{small}

The following theorems and corollaries provide several relations which are GFP. 

\begin{theorem}  \label{thm:id(id)strictdwtraceinclGFP}
For every strict partial order $R \,\subset\; \dwtraceinclusion$,
it holds that $P(\id, R)$ is GFP.
\end{theorem}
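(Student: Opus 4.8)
The plan is to establish the nontrivial inclusion $L(\A) \subseteq L(\prune{\A}{P})$, since the reverse $L(\prune{\A}{P}) \subseteq L(\A)$ holds for any pruning. Write $\A' = \prune{\A}{P}$, with transition set $\delta'$. First I would record that $P(\id,R)$ is genuinely a strict partial order: it matches the second of the two situations described above, namely a partial order $\urel = \id$ together with the lifting $\hat{<}$ of the strict partial order $R$ to tuples. Hence $\A'$ is well defined, and since $\delta$ is finite and a strict p.o.\ is acyclic, every transition of $\delta$ lies below some $P$-maximal, and therefore surviving, transition.

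The key fact I would isolate is: for every $t = \trans{q}{\sigma}{r}{n} \in \delta$ there is a surviving transition $t^\ast = \trans{q}{\sigma}{r^\ast}{n} \in \delta'$ with the \emph{same} source $q$ and with $r_i \dwtraceinclusion r_i^\ast$ for every $i$. If $t$ already survives, take $t^\ast = t$; otherwise take a $P$-maximal $t^\ast$ with $t \mathrel{P} t^\ast$. Because the first component of $P(\id,R)$ is $\id$, the source is unchanged; because the second is the lifting of $R$, each index satisfies $r_i = r_i^\ast$ or $r_i \mathrel{R} r_i^\ast$, and in both cases $r_i \dwtraceinclusion r_i^\ast$, using $R \subset \dwtraceinclusion$ and reflexivity of $\dwtraceinclusion$. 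Leaf rules $\langle q,\sigma,\spstate\rangle$ are never pruned, as the strict lifting would require $\spstate \mathrel{R} \spstate$, impossible since $R$ is irreflexive.

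The main step is the claim, proved by induction on the height of $\tree$, that for every $\tree \in \allcompletetrees(\Sigma)$ and every state $q$, $\tree \in D_\A(q)$ implies $\tree \in D_{\A'}(q)$. The base case (a leaf) holds because its unique rule survives. For the step, take an $\A$-run with $\run(\epsilon)=q$ on a tree of positive height, whose root rule is some $t = \trans{q}{\sigma}{r}{n}$ with subtrees $\tree_i \in D_\A(r_i)$; replace $t$ by a surviving $t^\ast$ as above. Since in a TDTA downward trace inclusion coincides with downward language inclusion, $r_i \dwtraceinclusion r_i^\ast$ yields $\tree_i \in D_\A(r_i) \subseteq D_\A(r_i^\ast)$; as each $\tree_i$ has strictly smaller height, the induction hypothesis gives $\A'$-runs $\run'_i$ on $\tree_i$ rooted at $r_i^\ast$, and grafting these under $t^\ast$ produces an $\A'$-run on $\tree$ from $q$. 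Applied to the (pruning-invariant) initial states, this gives $L(\A) = \bigcup_{q\in I} D_\A(q) \subseteq \bigcup_{q\in I} D_{\A'}(q) = L(\A')$, so $P(\id,R)$ is GFP.

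The point I expect to require the most care is the interaction between \emph{global} pruning and \emph{local} rerouting: swapping one pruned transition for a better one reroutes each subtree to a new target $r_i^\ast$, and the run witnessing $\tree_i \in D_\A(r_i^\ast)$ may itself use transitions that were pruned. The height induction is exactly what bridges this gap, and it closes only because the better targets are related by genuine downward \emph{language} inclusion rather than a mere one-step condition — which is precisely what $R \subset \dwtraceinclusion$, combined with the TDTA coincidence of downward trace and language inclusion, supplies.
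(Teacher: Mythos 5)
Your proposal is correct and follows essentially the same route as the paper's own proof: both arguments replace non-maximal transitions top-down by $P(\id,R)$-maximal surviving ones (which exist because $\delta$ is finite and $P(\id,R)$ is a strict partial order, and which keep the same source while raising each target w.r.t.\ $R$), and both invoke $R \subset \dwtraceinclusion$ to re-accept the subtrees under the new, larger targets. The only difference is bookkeeping: you run a structural induction on tree height with the strengthened per-state claim $D_{\A}(q) \subseteq D_{\A'}(q)$, whereas the paper inducts on the number $i$ of ``good'' levels of a single accepting run, re-completing the run below the replaced level inside $\A$ at each step; the two organizations are interchangeable.
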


\begin{corollary}
 By Theorem~\ref{thm:id(id)strictdwtraceinclGFP},
 $P(\id,\strictdwtraceinclusion)$ and $P(\id,\strictdwsim)$ are GFP.
\end{corollary}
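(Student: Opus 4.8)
The plan is to obtain the corollary purely by \emph{instantiating} Theorem~\ref{thm:id(id)strictdwtraceinclGFP}, so no direct language-preservation argument is needed. The theorem's hypothesis asks for a strict partial order $R$ with $R \subseteq \dwtraceinclusion$; applying it once with $R = \strictdwtraceinclusion$ and once with $R = \strictdwsim$ yields the two claims, provided each of these two relations is verified to be (i) a strict partial order and (ii) contained in $\dwtraceinclusion$. So the entire proof reduces to checking (i) and (ii) for the two relations.

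The case $R = \strictdwtraceinclusion$ is immediate. By the convention fixed just before the theorem, $\strictdwtraceinclusion = \dwtraceinclusion \setminus \dwtraceinclusion^{-1}$. Since $\dwtraceinclusion$ was shown to be a preorder, its asymmetric part is irreflexive and transitive, hence a strict partial order, giving~(i); and~(ii) is trivial because $\strictdwtraceinclusion$ is obtained by \emph{removing} pairs from $\dwtraceinclusion$ (the containment is in fact strict, as every reflexive pair lies in $\dwtraceinclusion$ but not in $\strictdwtraceinclusion$). For $R = \strictdwsim$, the very same reasoning applied to the preorder $\dwsim$ shows $\strictdwsim = \dwsim \setminus \dwsim^{-1}$ is a strict partial order, settling~(i). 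The one genuine step is containment~(ii), and for this I would prove $\dwsim \subseteq \dwtraceinclusion$ and then chain $\strictdwsim \subseteq \dwsim \subseteq \dwtraceinclusion$.

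I would establish $\dwsim \subseteq \dwtraceinclusion$ directly from the definitions by the standard ``simulation refines trace inclusion'' argument. Assume $q \dwsim r$, and let $\tree \in \allcompletetrees(\Sigma)$ together with a $\tree$-run $\run_q$ satisfying $\run_q(\epsilon)=q$ be given; I build a matching run $\run_r$ with $\run_r(\epsilon)=r$ by induction on the structure of $\tree$. At each closed node $\node$ where $\run_q$ fires $\trans{\run_q(\node)}{\f}{q}{n}$, clause~2 of downward simulation supplies a transition $\trans{\run_r(\node)}{\f}{r}{n}$ with $q_i \dwsim r_i$, which defines $\run_r$ on the children and maintains the invariant $\run_q(\node') \dwsim \run_r(\node')$ down the tree; clause~1 then guarantees that whenever $\run_q(\node)=\spstate$ at a leaf we also have $\run_r(\node)=\spstate$, which is exactly the leaf condition demanded by $\dwtraceinclusion$.

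The main (and essentially only) obstacle is this inductive construction, i.e.\ verifying $\dwsim \subseteq \dwtraceinclusion$; everything else follows mechanically from the preorder property of $\dwtraceinclusion$ and $\dwsim$ together with the $R\setminus R^{-1}$ convention. The single point needing care in the induction is that $\run_r$ is well defined as a run: the transition chosen at each node depends only on the rule $\run_q$ takes there and on the simulation pair relating the two states at that node, so the choices are made locally and consistently, and the resulting $\run_r$ is a legitimate $\tree$-run rather than a collection of mismatched fragments.
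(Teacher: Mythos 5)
Your proposal is correct and follows essentially the same route as the paper, which states the corollary as an immediate instantiation of Theorem~\ref{thm:id(id)strictdwtraceinclGFP} without further argument: both $\strictdwtraceinclusion$ and $\strictdwsim$ are strict partial orders (asymmetric parts of the preorders $\dwtraceinclusion$ and $\dwsim$) contained in $\dwtraceinclusion$. The only substantive step you add, the inductive proof that $\dwsim \subseteq \dwtraceinclusion$ (correctly handling leaf rules via clause~1 of downward simulation), is a standard fact the paper takes for granted when it describes simulations as under-approximations of the corresponding trace inclusions.
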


\begin{theorem}  \label{thm:strictuptraceincl(id)idGFP}
For every strict partial order $R \,\subset\; \uptraceinclusion{\id}$,
it holds that $P(R, \id)$ is GFP.
\end{theorem}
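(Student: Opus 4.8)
The plan is to establish the nontrivial inclusion $L(\A)\subseteq L(Prune(\A,P))$, where $P=P(R,\id)$; the reverse inclusion is automatic since pruning only deletes transitions. Since $R$ is a strict p.o.\ and $\id$ a p.o., this instance falls under the first of the two cases above, so $P$ is a genuine strict p.o.\ and pruning deletes a transition $\langle p,\sigma,\bar r\rangle$ precisely when some co-targeted transition $\langle p',\sigma,\bar r\rangle\in\delta$ (same symbol and, crucially, the \emph{same} target tuple $\bar r$) satisfies $p\mathrel{R}p'$, hence $p\uptraceinclusion{\id}p'$. Fixing $\tree\in L(\A)$ with an accepting run $\run$, I call a node \emph{bad} if the transition used there has been pruned, and I would rewrite $\run$ into an accepting run with no bad nodes by a well-founded descent on the multiset of \emph{depths} of its bad nodes.

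For the descent step, take a bad node $\node$ of maximal depth $d$, with $\run(\node)=p$ and target tuple $\bar r=\run(\node 1)\cdots\run(\node n)$. Among the states $q$ with $p\mathrel{R}q$ and $\langle q,\sigma,\bar r\rangle\in\delta$ (nonempty, as $\node$ is bad), choose an $R$-maximal one $p^{*}$; finiteness of $Q$ and transitivity of $R$ give $p\mathrel{R}p^{*}$ (so $p\uptraceinclusion{\id}p^{*}$), and $R$-maximality forces $\langle p^{*},\sigma,\bar r\rangle$ to be \emph{unpruned}. I then apply upward trace inclusion to the upward context of $\node$: the tree $\tree'$ obtained from $\tree$ by deleting the strict descendants of $\node$, with $\node$ as a leaf and the run $\run|_{\tree'}$ placing $p$ there. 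As $p\uptraceinclusion{\id}p^{*}$, there is a $\tree'$-run $\run_{r}$ with $\run_{r}(\node)=p^{*}$ that preserves membership in $I$ along the path to $\node$ (so the root stays accepting) and preserves, \emph{by identity}, the states at the nodes branching off that path.

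Next I would reassemble a run $\tilde\run$ on all of $\tree$ by grafting: along the root-to-$\node$ path take the states and transitions of $\run_{r}$, but at $\node$ use the unpruned $\langle p^{*},\sigma,\bar r\rangle$ (legal, since its target is the original $\bar r$); below each $\node i$ reuse the original subtree of $\run$ (its root $\run(\node i)$ matches $\bar r$); and below each off-path branch node reuse the original subtree of $\run$ (its root matches by the identity condition). The key bookkeeping observation is that $\tilde\run$ differs from $\run$ only on the root-to-$\node$ path: every side subtree and the entire subtree below $\node$ are copied verbatim, so their bad nodes are unchanged, while $\node$ itself becomes good. Hence any newly created bad node lies strictly above $\node$, at depth $<d$, and at depth $d$ the number of bad nodes drops by exactly one. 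The depth-multiset therefore strictly decreases in the Dershowitz--Manna order on finite multisets over $\nat$, which is well founded; iterating produces an accepting run with no bad nodes, i.e.\ an accepting run of $Prune(\A,P)$.

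The main obstacle is the non-locality of the upward direction: in contrast to the downward case (Theorem~\ref{thm:id(id)strictdwtraceinclGFP}), relocating a state rewrites the whole context \emph{above} the node, so a plain count of pruned transitions need not decrease. Three ingredients resolve this. First, because $\drel=\id$ freezes the target tuple, the subtree \emph{below} $\node$ is preserved intact. Second, transitivity of $\uptraceinclusion{\id}$ lets me jump in a single application to an $R$-maximal, hence surviving, co-targeted transition, so the repaired node is never bad again. Third, upward inclusion confines all collateral damage to the ancestors of $\node$, which are strictly shallower, so the depth-multiset measure decreases. The one point needing care is the legitimacy of re-grafting the original side subtrees, which is exactly what the identity condition on off-path branches supplies.
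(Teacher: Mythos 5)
Your proposal is correct and follows essentially the same route as the paper's own proof: pick a deepest node carrying a pruned transition, replace its source state by an $R$-maximal co-targeted state (which survives pruning by transitivity and finiteness of $R$), invoke $p \uptraceinclusion{\id} p^{*}$ so that only the ancestor path is rewritten while the identity condition lets all side subtrees and the subtree below be re-grafted verbatim, and conclude by a well-founded descent. The only cosmetic difference is the termination measure: you use the Dershowitz--Manna multiset of bad-node depths, whereas the paper tracks the shrinking \emph{bad} subtree (the ancestor closure of nodes using pruned transitions) under proper-subtree inclusion.
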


\begin{corollary}
 By Theorem~\ref{thm:strictuptraceincl(id)idGFP}, $P(\strictuptraceinclusion{\id},\id)$ and $P(\strictupsim{\id},\id)$ are GFP.
\end{corollary}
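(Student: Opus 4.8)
The plan is to deduce this corollary directly from Theorem~\ref{thm:strictuptraceincl(id)idGFP} rather than re-proving it: for each of the two named relations $R\in\{\strictuptraceinclusion{\id},\strictupsim{\id}\}$ I only need to verify that $R$ satisfies the hypothesis of the theorem, i.e.\ that $R$ is a strict partial order with $R\subset\,\uptraceinclusion{\id}$; the theorem then gives that $P(R,\id)$ is GFP. The starting observation is that both $\uptraceinclusion{\id}$ and $\upsim{\id}$ are preorders (the former by the definition of upward trace inclusion, the latter as noted right after its definition). For any preorder $R$, its strict part $R\setminus R^{-1}$ is irreflexive and transitive, hence a strict partial order, and it is contained in $R$; since $R$ is reflexive while $R\setminus R^{-1}$ is not, the containment is in fact strict. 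Applying this to the preorder $\uptraceinclusion{\id}$ shows that $\strictuptraceinclusion{\id}$ is a strict p.o.\ with $\strictuptraceinclusion{\id}\subset\,\uptraceinclusion{\id}$, so Theorem~\ref{thm:strictuptraceincl(id)idGFP} immediately yields that $P(\strictuptraceinclusion{\id},\id)$ is GFP.

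For the second relation the containment is not immediate, and this is where the work lies. Since $\strictupsim{\id}\subseteq\,\upsim{\id}$, it suffices to establish the standard fact that upward simulation refines upward trace inclusion, namely $\upsim{\id}\subseteq\,\uptraceinclusion{\id}$; combined with the strict-part argument above applied to the preorder $\upsim{\id}$, this gives that $\strictupsim{\id}$ is a strict p.o.\ with $\strictupsim{\id}\subseteq\,\upsim{\id}\subseteq\,\uptraceinclusion{\id}$, hence $\strictupsim{\id}\subset\,\uptraceinclusion{\id}$, and Theorem~\ref{thm:strictuptraceincl(id)idGFP} applies again. To prove $\upsim{\id}\subseteq\,\uptraceinclusion{\id}$, assume $q\upsim{\id}r$; the clause $q=\spstate\Rightarrow r=\spstate$ is inherited from clause~1 of upward simulation. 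Fix a tree $\tree$ and a $\tree$-run $\run_q$ with $\run_q(\node)=q$ for a leaf $\node$, and let $\epsilon=\node^{(0)},\node^{(1)},\dotsc,\node^{(m)}=\node$ be the prefixes of $\node$ along the root-to-$\node$ path. I build $\run_r$ by induction travelling \emph{up} this path, maintaining the invariant $\run_q(\node^{(k)})\upsim{\id}\run_r(\node^{(k)})$ together with the property that every subtree hanging off the path is labelled identically in $\run_q$ and $\run_r$. The base case sets $\run_r(\node)=r$, which satisfies the invariant since $q\upsim{\id}r$. For the step from $\node^{(k)}=\node^{(k-1)}i$ up to its parent $\node^{(k-1)}$, the run $\run_q$ uses a transition $\langle\run_q(\node^{(k-1)}),\f,s_1\cdots s_n\rangle\in\delta$ with $\f=\tree(\node^{(k-1)})$ and $s_i=\run_q(\node^{(k)})$; clause~2 of upward simulation, instantiated at this transition using $\run_q(\node^{(k)})\upsim{\id}\run_r(\node^{(k)})$, supplies a transition whose $i$-th target is $\run_r(\node^{(k)})$, whose source $r'$ satisfies $\run_q(\node^{(k-1)})\upsim{\id}r'$, and whose remaining targets are \emph{equal} to those of $\run_q$ (because the inducing relation is $\id$). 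Setting $\run_r(\node^{(k-1)})=r'$ and copying $\run_q$ verbatim on the side subtrees rooted at the other children preserves the invariant. Carrying this up to the root produces $\run_r$, and the invariant discharges the three conditions of upward trace inclusion: condition~1 from the base case; condition~2 because $\run_q(\node^{(k)})\upsim{\id}\run_r(\node^{(k)})$ gives $\run_q(\node^{(k)})\in I\Rightarrow\run_r(\node^{(k)})\in I$ at every prefix; and condition~3 from the identical labelling of the side subtrees.

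The main obstacle is exactly this bottom-up construction. The delicate point is that upward trace inclusion demands that the side branches of $\run_r$ agree with those of $\run_q$ \emph{on the nose} (relation $\id$), and the induction must thread this requirement through every step while still being able to invoke the upward simulation clause --- which is precisely why the simulation is induced by $\id$ rather than by a coarser relation. The residual verifications, that $\run_r$ is well defined and is a genuine $\tree$-run over the same domain as $\run_q$ (including the dangling leaves of open nodes), are routine: each path step uses a bona fide transition of $\delta$, and the side subtrees are copied directly from the valid run $\run_q$.
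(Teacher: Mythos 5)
Your proposal is correct and takes essentially the same route as the paper: the corollary is obtained by verifying the hypotheses of Theorem~\ref{thm:strictuptraceincl(id)idGFP} for the two relations, where the strict-part argument (that $R\setminus R^{-1}$ of a preorder is irreflexive, transitive, and properly contained in $R$) disposes of $\strictuptraceinclusion{\id}$ immediately. The only substantive extra content is your explicit bottom-up induction establishing $\upsim{\id}\,\subseteq\,\uptraceinclusion{\id}$, a fact the paper treats as standard (simulation under-approximates the corresponding trace inclusion) and leaves implicit; your construction is the expected one, and it correctly isolates the key point that the inducing relation $\id$ forces the off-path states to match exactly, so the side subtrees of $\run_q$ can be copied verbatim into $\run_r$.
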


\begin{definition}\label{def:W}
Given a tree automaton $\A$, a binary relation $W$ on its states is called a
\emph{downup-relation} iff the following condition holds: 
If $p \mathrel{W} q$ then for every
tree $\tree \in \alltrees(\Sigma)$ and accepting $\tree$-run $\run$ from $p$
there exists an accepting $\tree$-run $\run'$ from $q$ 
such that $\forall_{v \,\in\, \nat^*}\  \run(v) \upsim{W}\,\, \run'(v)$.
\end{definition}

\begin{lemma}\label{lem:relationV}
 Any relation $V$ satisfying 
 \textbf{1)} $V$ is a downward simulation, and \textbf{2)} $\id \,\subseteq\, V \,\subseteq\; \upsim{V}$
 is a downup-relation. In particular, $\id$ is a downup-relation, but
 $\dwsim$ and $\upsim{\id}$ are not.
\end{lemma}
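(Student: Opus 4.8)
The plan is to show that the two hypotheses on $V$ play complementary roles: being a downward simulation lets me transport a run over a given tree from $p$ to $q$ while keeping corresponding states $V$-related, and the inclusion $V \subseteq \upsim{V}$ then upgrades this to exactly the statement required by Definition~\ref{def:W}, both at the level of acceptance and at the level of the pointwise comparison of the two runs.

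Concretely, assume $p \mathrel{V} q$, fix $\tree \in \alltrees(\Sigma)$ and an accepting $\tree$-run $\run$ with $\run(\epsilon) = p$. First I would build, by induction on depth, a $\tree$-run $\run'$ with $\run'(\epsilon) = q$ and $\run(\node) \mathrel{V} \run'(\node)$ for every $\node \in \dom(\run)$. The base case is the assumption $p \mathrel{V} q$. For the step, at an internal node $\node \in \dom(\tree)$ the run $\run$ fires some rule over $\tree(\node)$; since $V$ is a downward simulation and $\run(\node) \mathrel{V} \run'(\node)$, there is a matching rule from $\run'(\node)$ over the same symbol whose target states $V$-dominate those of $\run$, and I define $\run'$ on the (possibly dangling) children accordingly. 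Clause~1 of downward simulation together with $\id \subseteq V$ handles the leaf rules into $\spstate$. Since both runs read the same tree $\tree$, they share the domain determined by $\tree$, so $\run(\node) \mathrel{V} \run'(\node)$ holds for all $\node \in \dom(\run) = \dom(\run')$.

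It then remains to verify the two clauses of Definition~\ref{def:W} for $\run'$. For acceptance, $\run$ accepting gives $p = \run(\epsilon) \in I$; from $V \subseteq \upsim{V}$ we get $p \upsim{V} q$, and clause~1 of the definition of upward simulation yields $p \in I \Rightarrow q \in I$, so $q \in I$ and $\run'$ is accepting. For the pointwise comparison, $\run(\node) \mathrel{V} \run'(\node)$ together with $V \subseteq \upsim{V}$ gives $\run(\node) \upsim{V} \run'(\node)$ for every $\node$, which is exactly the condition demanded in Definition~\ref{def:W}. This proves $V$ is a downup-relation. The case $V = \id$ follows at once, since $\id$ is a downward simulation and $\id \subseteq \id \subseteq \upsim{\id}$ by reflexivity of the preorder $\upsim{\id}$.

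For the two negative assertions I would give explicit small automata rather than general arguments: $\dwsim$ meets clause~1 of the hypothesis but in general $\dwsim \not\subseteq \upsim{\dwsim}$, while $\upsim{\id}$ is not a downward simulation at all, so in each case the transport-then-upgrade argument breaks, and a concrete automaton (in the spirit of Fig.~\ref{fig:GFP_counterexamples}) exhibits a pair $p,q$ and an accepting run from $p$ that no accepting run from $q$ can match in the required sense. I expect the only delicate point of the whole proof to be the bookkeeping in the inductive construction of $\run'$ — handling open trees, the dangling children up to the rank, and the leaf rules into $\spstate$ uniformly, and checking that $\run'$ is a genuine run over exactly the same tree so that the quantifier $\forall_{\node \in \nat^*}$ in Definition~\ref{def:W} ranges over a common domain; once that is in place, the inclusion $V \subseteq \upsim{V}$ supplies the rest essentially for free.
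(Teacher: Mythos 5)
The paper states Lemma~\ref{lem:relationV} without any proof (there is nothing for it in the main text or in Appendix~\ref{sec:app_proofs}), so your proposal has to be judged on its own merits rather than against an official argument. Its main implication is correct, and is surely the intended argument: given $p \mathrel{V} q$ and an accepting $\tree$-run $\run$ from $p$, condition 2 of downward simulation lets you build, level by level, a run $\run'$ over the same tree with $\run'(\epsilon)=q$ and $\run(\node) \mathrel{V} \run'(\node)$ on the common domain; then $V \subseteq \upsim{V}$ upgrades this pointwise $V$-relation to the pointwise $\upsim{V}$-relation demanded by Definition~\ref{def:W}, and condition 1 of upward simulation turns $\run(\epsilon)=p\in I$ into $q\in I$, so $\run'$ is accepting. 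One small correction: the leaf rules are not handled by ``clause 1 together with $\id\subseteq V$'', but by condition 2 applied to the rule $\langle s,a,\spstate\rangle$ (whose child tuple is the single state $\spstate$) followed by condition 1, which forces the matching child to equal $\spstate$; the hypothesis $\id\subseteq V$ plays no role there. The case $V=\id$ is fine.

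The genuine gap lies in the two negative assertions, which are part of the lemma and which you only promise to prove. You correctly recognize that failure of hypotheses 1) or 2) merely breaks your sufficient condition and does not show that $\dwsim$ or $\upsim{\id}$ violates Definition~\ref{def:W}, and that explicit automata are needed --- but you never exhibit them, so this half of the statement remains unproven. Both counterexamples are tiny and should be stated. For $\dwsim$: take $\Sigma_0=\{a\}$, states $p,q$ with leaf rules $\langle p,a,\spstate\rangle$ and $\langle q,a,\spstate\rangle$, and $I=\{p\}$. Then $p \dwsim q$ and the one-node tree $a$ has an accepting run from $p$; but $q\notin I$, so there is no accepting run from $q$ at all, and even under a laxer reading of ``accepting'' the required root condition would fail, since $p\in I$ and $q\notin I$ imply that $(p,q)$ lies in no upward simulation. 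For $\upsim{\id}$: take $\Sigma_0=\{a,b\}$, $I=\{p,q\}$, and only the rules $\langle p,a,\spstate\rangle$ and $\langle q,b,\spstate\rangle$. Neither $p$ nor $q$ occurs in the right-hand side of any rule, so condition 2 of upward simulation is vacuous and $p \upsim{\id} q$ holds; yet $q$ admits no run whatsoever on the tree $a$, which $p$ accepts. Adding these two automata makes your proof complete.
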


\begin{theorem}\label{thm:strictupsim(R)dwtraceinclGFP}
For every downup-relation $W$, it holds that $P(\strictupsim{W}, \dwtraceinclusion)$ is GFP.
\end{theorem}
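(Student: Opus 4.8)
The plan is to establish the nontrivial inclusion $L(\A) \subseteq L(\prune{\A}{P})$, where $P = P(\strictupsim{W},\dwtraceinclusion)$ and $\A' := \prune{\A}{P} = (\Sigma,Q,\delta',I)$; the reverse inclusion is immediate since pruning only deletes transitions. I fix a closed tree $\tree \in L(\A)$ together with an accepting $\A$-run $\run$ on it, and describe a \emph{repair} operation that, as long as $\run$ still uses some pruned transition, rewrites $\run$ into another accepting $\A$-run of strictly smaller value under a well-founded measure. Once no pruned transition is used, the run witnesses $\tree \in L(\A')$. Concretely, let $v$ be any node at which $\run$ uses a transition $t_v = \langle \run(v),\sigma,\run(v1)\cdots\run(vn)\rangle \in \delta\setminus\delta'$. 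Since $t_v$ is pruned there is a better transition $t_v' = \langle p',\sigma,r_1'\cdots r_n'\rangle$ with $\run(v)\strictupsim{W}p'$ and $\run(vi)\dwtraceinclusion r_i'$ for every $i$.

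I repair $\run$ in three moves. (Down) Because $\tree$ is closed, downward trace inclusion coincides with downward language inclusion, so $\tree_{vi}\in D(\run(vi)) \subseteq D(r_i')$; I replace the sub-run below each child $vi$ by some accepting run from $r_i'$ and install $t_v'$ at $v$, so that $v$ now carries the state $p'$. (Up) I propagate the increase $\run(v)\strictupsim{W}p'$ towards the root using upward simulation: at the parent, condition 2 of $\upsim{W}$ supplies a transition \emph{in $\delta$} whose $i$-th target is $p'$, whose source is $\upsim{W}$-above the old parent state, and whose remaining targets $r_j$ satisfy $(\text{old sibling})\mathrel{W}r_j$; I iterate up to $\epsilon$. (Side) Each side subtree whose root state changes from $s$ to $s'$ with $s\mathrel{W}s'$ is re-run from $s'$; here I invoke that $W$ is a downup-relation to obtain an accepting run from $s'$ that dominates the old one \emph{pointwise} in $\upsim{W}$. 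Acceptance is preserved overall: the root state only moves upward in $\upsim{W}$, and upward simulation preserves membership in $I$ (condition 1), so the new root is still initial, while all leaves still reach $\spstate$ since the (downward and downup) re-runs are accepting. The only way side branches could spoil the construction is exactly what Definition~\ref{def:W} is designed to rule out: bare language inclusion would leave the side-subtree states uncontrolled, whereas pointwise $\upsim{W}$-domination is what later lets me bound the measure; note also that the root case of Definition~\ref{def:W} yields $W \subseteq \upsim{W}$, which is what makes the sibling states move up in $\upsim{W}$.

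The crux is the termination measure. Let $c\colon Q\to\nat$ be the height function of the strict partial order $\strictupsim{W}$ on the finite set $Q$, so that $q\strictupsim{W}q'$ implies $c(q)>c(q')$ and $q\upsim{W}q'$ implies $c(q)\ge c(q')$. For the current run I set $b_L=\sum_{|v|=L}c(\run(v))$ and compare runs by the sequence $(b_0,b_1,\dots,b_H)$, with $H$ the height of $\tree$, ordered lexicographically; since $\tree$ is fixed and each $b_L$ is bounded, this order is well-founded. A repair at a node $v$ of depth $L^\ast=|v|$ keeps $\tree$ (and hence each index range) unchanged, and touches states only on the root-to-$v$ path, its hanging side subtrees, and the subtree below $v$. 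At every depth $L<L^\ast$ all touched states move up in $\upsim{W}$ (ancestors by $\upsim{W}$, path-siblings by $W\subseteq\upsim{W}$, their subtrees by downup-domination), so $b_L$ does not increase; at depth $L^\ast$ the state of $v$ moves \emph{strictly} up, so $b_{L^\ast}$ strictly decreases; the uncontrolled changes in the subtree below $v$ affect only $b_L$ for $L>L^\ast$ and are therefore irrelevant to the lexicographic comparison. Hence each repair strictly decreases the measure.

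I then conclude by well-foundedness: the sequence of repairs terminates, and it can terminate only when the run uses no transition of $\delta\setminus\delta'$, i.e.\ it is an accepting $\A'$-run on $\tree$, giving $\tree\in L(\A')$ and thus $L(\A)\subseteq L(\A')$. The step I expect to be the main obstacle is precisely this termination argument, because the two directions of the repair pull against each other — the upward propagation raises states on and above $v$ while the downward re-run below $v$ is completely uncontrolled — and a naive global sum over all nodes need not decrease. What resolves the tension is the combination of the depth-lexicographic measure (so that damage below $v$ is dominated by the strict improvement at $v$) with the pointwise $\upsim{W}$-domination guaranteed by the downup-relation (so that the reshaping of side branches never increases the measure at shallower depths).
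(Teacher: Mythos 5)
Your proposal is correct and takes essentially the same approach as the paper's proof: the identical three-part repair step (re-running the subtree below the pruned transition via $\dwtraceinclusion$, propagating $\strictupsim{W}$ up to the root via stepwise upward simulation, and re-running the side branches via the downup-relation's pointwise $\upsim{W}$-domination), combined with a depth-lexicographic well-foundedness argument. The only difference is packaging: the paper phrases termination as maximality of a run w.r.t.\ the level-wise order $<_i$ and derives a contradiction, whereas you descend along an explicit numeric measure built from a height function for $\strictupsim{W}$ — the same depth-lexicographic idea in both cases.
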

\begin{proof}
Let $A' = Prune(\A,P(\strictupsim{W}, \dwtraceinclusion))$.
We show $L(\A) \subseteq L(\A')$.
If $\tree \in L(A)$ then there exists an accepting 
$\tree$-run $\hat{\run}$ in $\A$.
We show that there is an accepting $\tree$-run $\hat{\run}'$ in $A'$.

For each accepting $\tree$-run $\run$ in $\A$, let 
$\level_i(\run)$ be the tuple of states that $\run$ visits at depth $i$ in the
tree, read from left to right. Formally, let 
$(x_1, \dots, x_k)$ with $x_j \in \nat^i$ be the set of all tree positions of
depth $i$ s.t.\ $x_j \in \dom(\run)$, in lexicographically increasing order.
Then $\level_i(\run) = (\run(x_1), \dots, \run(x_k)) \in Q^k$.
By lifting partial orders on $Q$ to partial orders on tuples,
we can compare such tuples w.r.t.\ $\upsim{W}$.
We say that an accepting $\tree$-run $\run$ is $i$-good iff it does not contain
any transition from $A-A'$ from any position $\node \in \nat^*$ with
$|\node| < i$.
I.e., no pruned transition is used in the first $i$ levels of the
tree.

We now define a strict partial order $<_i$ on the set of accepting
$\tree$-runs in $\A$.
Let $\run <_i \run'$ iff 
$\exists k \le i.\ \level_k(\run) \mathrel{\strictupsim{W}} \level_k(\run')$
and $\forall l < k.\, \level_l(\run) \mathrel{\upsim{W}} \level_l(\run')$.
Note that $<_i$ only depends on the first $i$ levels of the run.
Given $\A$, $\tree$ and $i$, there are only finitely many different such
$i$-prefixes of accepting $\tree$-runs. 
By our assumption that $\hat{\run}$ is an
accepting $\tree$-run in $\A$, the set of accepting $\tree$-runs
in $\A$ is non-empty.
Thus, for any $i$, there must exist some accepting $\tree$-run $\run$ in $\A$ that is maximal
w.r.t.\ $<_i$.

We now show that this $\run$ is also $i$-good, by assuming the contrary and
deriving a contradiction. 
Suppose that $\run$ is not $i$-good. Then it must contain a
transition $\langle p, \sigma, r_1 \cdots r_n \rangle$ from $\A-\A'$
used at the root of some subtree $\tree'$ of $\tree$
at some level $j < i$.
Since $A' = Prune(\A,P(\strictupsim{W}, \dwtraceinclusion))$, 
there must exist another transition 
$\langle p', \sigma, r_1' \cdots r_n' \rangle$
in $\A'$ s.t.
(1) $(r_1, \dots, r_n) \dwtraceinclusion (r_1', \dots, r_n')$
and
(2) $p \mathrel{\strictupsim{W}} p'$.

First consider the implications of (2). Upward simulation 
propagates upward stepwise (though only in non-strict form after the first step).
So $p'$ can imitate the upward path of $p$ to the root of $\tree$, maintaining
$\upsim{W}$ between the corresponding states. The states on side branches joining in 
along the upward path from $p$ can be matched by $W$-larger states in joining side branches 
along the upward path from $p'$. From Def.~\ref{def:W}
we obtain that these $W$-larger states in $p'$s joining side branches can accept
their subtrees of $\tree$ via computations that are everywhere $\upsim{W}$
larger than corresponding states in computations from $p$s joining side branches.
So there must be an accepting run $\run'$ on $\tree$ s.t.\ (3) $\run'$ is at state $p'$ at the
root of $\tree'$ and uses transition $\langle p', \sigma, r_1' \cdots r_n' \rangle$ from $p'$, and
(4) for all $v \in \nat^*$ where $\tree(v) \notin \tree'$ we have 
$\run(v) \mathrel{\upsim{W}} \run'(v)$.
Moreover, by conditions (1) and (3), $\run'$ can be extended from $r_1', \dots, r_n'$
to accept also the subtree $\tree'$. Thus $\run'$ is an accepting $\tree$-run
in $\A$.
By conditions (2) and (4) we obtain that 
$\forall l \le j.\ \level_l(\run) \mathrel{\upsim{W}} \level_l(\run')$.
By (2) we get even $\level_j(\run) \mathrel{\strictupsim{W}} \level_j(\run')$ and thus 
$\run <_j \run'$. Since $j < i$ we
also have $\run <_i \run'$ and thus $\run$ was not maximal
w.r.t.\ $<_i$. Contradiction. 
So we have shown that for every $\tree \in L(\A)$ 
there exists an $i$-good accepting run for every finite $i$.
 
If $\tree \in L(\A)$ then there exists an accepting 
$\tree$-run $\hat{\run}$ in $\A$.
Then there exists an accepting $\tree$-run $\hat{\run}'$ 
that is $i$-good, where $i$ is the height of $\tree$.
Thus $\hat{\run}'$ 
is a run in $\A'$ and $\tree \in L(\A')$.
\qed
\end{proof}

\begin{corollary}
 It follows from Lemma~\ref{lem:relationV}  and from the fact that GFP is downward closed that 
 $P(\strictupsim{V}, \dwtraceinclusion)$, $P(\strictupsim{V}, \strictdwtraceinclusion)$,
 $P(\strictupsim{V}, \dwsim)$, $P(\strictupsim{V}, \strictdwsim)$, $P(\strictupsim{V}, \id)$,
 $P(\strictupsim{\id}, \dwtraceinclusion)$, $P(\strictupsim{\id}, \strictdwtraceinclusion)$, 
 $P(\strictupsim{\id}, \dwsim)$ and $P(\strictupsim{\id}, \strictdwsim)$ are GFP.
\end{corollary}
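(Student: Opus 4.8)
The plan is to reduce every relation in the list to the single case $P(\strictupsim{W}, \dwtraceinclusion)$ already settled by Theorem~\ref{thm:strictupsim(R)dwtraceinclGFP}, exploiting two ingredients: that $V$ and $\id$ are downup-relations, and that the GFP property survives shrinking of a pruning order. First I would invoke Lemma~\ref{lem:relationV}, which asserts that both $V$ and $\id$ are downup-relations, and apply Theorem~\ref{thm:strictupsim(R)dwtraceinclGFP} with $W := V$ and with $W := \id$. This supplies the two anchors: $P(\strictupsim{V}, \dwtraceinclusion)$ and $P(\strictupsim{\id}, \dwtraceinclusion)$ are GFP.

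Next I would make precise the downward closure of GFP. Suppose $P \subseteq P'$ are strict partial orders on $\delta$ and $P'$ is GFP. A transition is deleted by pruning exactly when a strictly better transition exists, so the smaller order $P$ deletes a subset of the transitions deleted by $P'$; hence $\prune{\A}{P}$ retains at least the transitions of $\prune{\A}{P'}$, giving $L(\prune{\A}{P}) \supseteq L(\prune{\A}{P'}) = L(\A)$. Combined with the always-valid inclusion $L(\prune{\A}{P}) \subseteq L(\A)$, this shows $P$ is GFP. Each $P(\strictupsim{W}, Z)$ in the list is itself a legitimate strict p.o.\ (case~1 or case~2 of the construction), so this closure applies to it.

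It then remains to show, for each relation $Z$ in the list and the matching $W \in \{V, \id\}$, that $P(\strictupsim{W}, Z) \subseteq P(\strictupsim{W}, \dwtraceinclusion)$. Since the first component $\strictupsim{W}$ is common to both, this follows from monotonicity of $P$ in its second argument, and reduces to checking that the tuple lifting induced by $Z$ is contained in the standard lifting of $\dwtraceinclusion$. This in turn rests on the state-level facts $\id \subseteq \dwsim \subseteq \dwtraceinclusion$ and $\strictdwtraceinclusion \subseteq \dwtraceinclusion$: for each of $Z \in \{\id, \dwsim, \strictdwsim, \strictdwtraceinclusion, \dwtraceinclusion\}$, whatever lifting is used (the standard $\hat{\leq}$ or the strict $\hat{<}$), it still forces every component pair into $\dwtraceinclusion$, hence into the standard lifting of $\dwtraceinclusion$.

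Assembling the three ingredients yields the claim. The step requiring the most care is the bookkeeping of the liftings: one must verify uniformly, for both the non-strict second arguments ($\dwtraceinclusion$, $\dwsim$, $\id$) and the strict ones ($\strictdwtraceinclusion$, $\strictdwsim$), that the lifting actually lands inside the standard lifting of $\dwtraceinclusion$, since the subset relation $P(\strictupsim{W}, Z) \subseteq P(\strictupsim{W}, \dwtraceinclusion)$ would otherwise break. Because downward simulation refines downward trace inclusion and every relevant lifting entails the componentwise $\dwtraceinclusion$ condition, this goes through routinely.
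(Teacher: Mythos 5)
Your proposal is correct and follows exactly the route the paper intends for this corollary (which it leaves implicit): anchor $P(\strictupsim{V},\dwtraceinclusion)$ and $P(\strictupsim{\id},\dwtraceinclusion)$ via Lemma~\ref{lem:relationV} and Theorem~\ref{thm:strictupsim(R)dwtraceinclGFP}, establish that GFP is downward closed under inclusion of strict partial orders on transitions, and then use monotonicity of $P$ in its second argument together with $\id \subseteq \dwsim \subseteq \dwtraceinclusion$ (and the strict liftings landing inside the standard lifting of $\dwtraceinclusion$) to place every listed relation inside the corresponding anchor. Your explicit verification of the downward-closure fact and of the lifting containments simply fills in details the paper asserts without proof.
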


\begin{theorem}  \label{thm:uptraceincl(dwsim)strictdwsimGFP}
 $P(\uptraceinclusion{\dwsim}, \strictdwsim)$ is GFP.
\end{theorem}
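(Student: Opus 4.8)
The plan is to mirror the structure of the proof of Theorem~\ref{thm:strictupsim(R)dwtraceinclGFP}. Writing $\A' = Prune(\A, P(\uptraceinclusion{\dwsim}, \strictdwsim))$, the inclusion $L(\A') \subseteq L(\A)$ is automatic, so I only need $L(\A) \subseteq L(\A')$. Given $\tree \in L(\A)$ with an accepting $\tree$-run, I would fix a well-founded strict order on the (finite) set of accepting $\tree$-runs of $\A$, take a run $\run$ maximal in it, and show $\run$ uses no pruned transition, hence is a run of $\A'$. The asymmetry with Theorem~\ref{thm:strictupsim(R)dwtraceinclGFP} dictates the shape of the order: here the source side is compared by the \emph{trace} relation $\uptraceinclusion{\dwsim}$, which I must exploit globally, in one shot, over the whole context above the pruned transition, while the target side is compared by the \emph{simulation} relation $\strictdwsim$, which propagates stepwise downward. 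Accordingly I would measure runs from the bottom up, by downward simulation, rather than top-down by upward simulation as in Theorem~\ref{thm:strictupsim(R)dwtraceinclGFP}.

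Suppose the maximal run $\run$ uses a pruned transition $t = \langle p, \sigma, r_1\cdots r_n\rangle$ at a node $v$ of depth $j$; then $\A$ contains a better transition $t' = \langle p', \sigma, r_1'\cdots r_n'\rangle$ with $p \uptraceinclusion{\dwsim} p'$ and $r_i \dwsim r_i'$ for all $i$, strictly for some $i$. First I would rebuild the context above $v$: cutting $\tree$ so that $v$ becomes a dangling leaf carrying $p$, the restriction of $\run$ witnesses membership in $\upl_\A(p)$, so $p \uptraceinclusion{\dwsim} p'$ yields a run $\run_r$ on the same context with $\run_r(v) = p'$, with the root still initial (condition~(2) of upward trace inclusion) and with every side-branch root $s$ replaced by a $\dwsim$-larger $s'$ (condition~(3)). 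Since downward simulation yields runs that are everywhere $\dwsim$-larger (in particular it implies downward language inclusion), each side-branch subtree accepted below $s$ in $\run$ is re-accepted below $s'$ by an everywhere-$\dwsim$-larger run. Finally I graft $t'$ at $v$ and, using $r_i \dwsim r_i'$, rebuild each subtree $\tree_{vi}$ below $r_i'$ by an everywhere-$\dwsim$-larger run. This produces an accepting $\tree$-run $\run'$ of $\A$ that uses $t'$ instead of $t$.

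It remains to compare $\run$ and $\run'$. I would order accepting runs by the deepest level at which their level-tuples $\level_l(\cdot)$ differ modulo $\dwsim$-equivalence, declaring $\run \prec \run'$ when at that deepest differing level $\run'$ is componentwise $\dwsim$-larger; a routine check (composing the componentwise lifting of $\dwsim$ with $\dwsim$-equivalence, level by level) shows this is a strict partial order, so maximal runs exist. The main obstacle is exactly the point where this proof departs from Theorem~\ref{thm:strictupsim(R)dwtraceinclGFP}: upward trace inclusion gives no stepwise control, so the states on the \emph{main path} from the root to $v$ (depths $0,\dots,j$) may change arbitrarily and need not be $\dwsim$-comparable between $\run$ and $\run'$. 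The resolution is that this uncontrolled part sits entirely at depth $\le j$: every position at depth $> j$ lies either in the subtree below $v$ or in a side-branch subtree, and on all of these $\run'$ is everywhere $\dwsim$-larger, while the strict gain $r_i \strictdwsim r_i'$ already appears at depth $j+1$. Hence the deepest level at which $\run$ and $\run'$ differ modulo $\dwsim$ lies strictly below the main path, the comparison resolves there in favour of $\run'$, and the uncontrolled levels $\le j$ are never examined. Thus $\run \prec \run'$, contradicting maximality of $\run$, so the maximal run uses only unpruned transitions and witnesses $\tree \in L(\A')$. The genuinely delicate step is the bookkeeping that localises all uncontrolled changes to depths $\le j$ while simultaneously guaranteeing that the upward-trace rebuilding and the downward-simulation rebuilding fit together into a single legal run that is everywhere $\dwsim$-larger off the main path; this is what makes the bottom-up measure go through.
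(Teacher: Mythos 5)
Your proposal is correct, but it takes a genuinely different route from the paper's proof of Theorem~\ref{thm:uptraceincl(dwsim)strictdwsimGFP}. The paper argues by induction on tree depth: it proves that every accepting run is dominated at level $i$ (w.r.t.\ the componentwise lifting of $\dwsim$) by some accepting run using no pruned transition in its first $i$ levels, and its induction step picks a run maximal at level $i$ inside a constrained set $S'$ of $(i-1)$-good runs, performs exactly the rebuild you describe (apply $\uptraceinclusion{\dwsim}$ to the context, graft the better transition, re-complete side branches and the subtree below by stepwise $\dwsim$-domination), and then must invoke the induction hypothesis a \emph{second} time to repair the rebuilt run, precisely because upward trace inclusion relabels the whole path uncontrollably, so the rebuilt run need not be $(i-1)$-good. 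You eliminate both the induction and the repair step by one global maximality argument: your bottom-up order (deepest level where the level tuples differ modulo $\dwsim$-equivalence) is designed so that the uncontrolled relabelling at depths $\le j$ can never be the deepest difference, since the strict gain sits at depth $j+1$ and $\dwsim$-domination holds everywhere below depth $j$; the price is the transitivity check for the order, which you rightly flag and which goes through because $\dwsim$ is a preorder. Your organisation is arguably cleaner: in the paper's write-up the claim that the repaired run $\run_3$ lies in $S'$ needs level-$(i-1)$ domination at the very node carrying the pruned transition, where the source states $p$ and $p'$ are related only by $\uptraceinclusion{\dwsim}$ and need not be $\dwsim$-comparable (this is fixable by weakening the definition of $S$), whereas your order never compares states on the path at all.
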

\begin{proof}
Let $\A' = Prune(\A,P(\uptraceinclusion{\dwsim}, \strictdwsim))$. 
We show $L(\A) \subseteq L(\A')$.
If $\tree \in L(\A)$ then there exists an accepting 
$\tree$-run $\hat{\run}$ in $\A$.
We show that there is an accepting $\tree$-run $\hat{\run}'$ in $\A'$.

For each accepting $\tree$-run $\run$ in $\A$, let 
$\level_i(\run)$ be the tuple of states that $\run$ visits at depth $i$ in the
tree, read from left to right. Formally, let 
$(x_1, \dots, x_k)$ with $x_j \in \nat^i$ be the set of all tree positions of
depth $i$ s.t.\ $x_j \in \dom(\run)$, in lexicographically increasing order.
Then $\level_i(\run) = (\run(x_1), \dots, \run(x_k)) \in Q^k$.
By lifting partial orders on $Q$ to partial orders on tuples 
we can compare such tuples w.r.t.\ $\dwsim$.
We say that an accepting $\tree$-run $\run$ is $i$-good if it does not contain
any transition from $\A-\A'$ from any position $\node \in \nat^*$ with
$|\node| < i$.
I.e., no pruned transitions are used in the first $i$ levels of the
tree.

We now show, by induction on $i$, the following property (C):
For every $i$ and every accepting
$\tree$-run $\run$ in $\A$ there exists an $i$-good accepting
$\tree$-run $\run'$ in $\A$ s.t.
$\level_i(\run) \dwsim \level_i(\run')$.

The base case is $i=0$. 
Every accepting $\tree$-run $\run$ in $\A$ is trivially $0$-good itself
and thus satisfies (C).

For the induction step, let $S$ be the set of all $(i-1)$-good accepting
$\tree$-runs $\run'$ in $\A$ s.t.\ $\level_{i-1}(\run) \dwsim
\level_{i-1}(\run')$.
Since $\run$ is an accepting $\tree$-run, by induction hypothesis, $S$ is
non-empty.
Let $S' \subseteq S$ be the subset of $S$ containing exactly those
runs $\run' \in S$ that additionally satisfy $\level_{i}(\run) \dwsim
\level_{i}(\run')$.
From $\level_{i-1}(\run) \dwsim \level_{i-1}(\run')$ and 
the fact that $\dwsim$ is preserved downward-stepwise, we obtain that $S'$ is
non-empty.
Now we can select some $\run' \in S'$ s.t. \ $\level_i(\run')$ is maximal,
w.r.t.\ $\dwsim$, relative to the other runs in $S'$.
We claim that $\run'$ is $i$-good and $\level_i(\run) \dwsim \level_i(\run')$.
The second part of this claim holds because $\run' \in S'$.

We show that $\run'$ is $i$-good by contraposition.
Suppose that $\run'$ is not $i$-good. Then it must contain a
transition 
$\langle p, \sigma, r_1 \cdots r_n \rangle$ 
from $\A-\A'$. Since $\run'$ is $(i-1)$-good, this transition must start at
depth $(i-1)$ in the tree.
Since $\A' = Prune(\A,P(\uptraceinclusion{\dwsim}, \strictdwsim))$,
there must exist another transition 
$\langle p', \sigma, r_1' \cdots r_n' \rangle$
in $\A'$ s.t.\ $p \mathrel{\uptraceinclusion{\dwsim}} p'$
and $(r_1, \dots, r_n) \strictdwsim (r_1', \dots, r_n')$.
From the definition of $\uptraceinclusion{\dwsim}$ we obtain that
there exists another accepting $\tree$-run $\run_1$ in $\A$
(that uses the transition $\langle p', \sigma, r_1' \cdots r_n' \rangle$)
s.t.\ $\level_{i}(\run') \strictdwsim \level_{i}(\run_1)$.
The run $\run_1$ is not necessarily $i$-good or $(i-1)$-good.
However, by induction hypothesis, there exists some
accepting $\tree$-run $\run_2$ in $\A$ that is $(i-1)$-good and satisfies
$\level_{i-1}(\run_1) \dwsim \level_{i-1}(\run_2)$.
Since $\dwsim$ is preserved stepwise, there also exists an
accepting $\tree$-run $\run_3$ in $\A$ (that coincides with $\run_2$ up-to
depth $(i-1)$), which is $(i-1)$-good and satisfies
$\level_{i}(\run_1) \dwsim \level_{i}(\run_3)$.
In particular, $\run_3 \in S'$.

From $\level_{i}(\run') \strictdwsim \level_{i}(\run_1)$
and
$\level_{i}(\run_1) \dwsim \level_{i}(\run_3)$ we obtain
$\level_{i}(\run') \strictdwsim \level_{i}(\run_3)$.
This contradicts our condition above that $\run'$ must be
$\level_i$ maximal w.r.t.\ $\dwsim$ in $S'$.
This concludes the induction step and the proof of property (C).

If $\tree \in L(\A)$ then there exists an accepting 
$\tree$-run $\hat{\run}$ in $\A$.
By property (C), there exists an accepting $\tree$-run $\hat{\run}'$ 
that is $i$-good, where $i$ is the height of $\tree$.
Therefore $\hat{\run}'$ does not use any transition from $\A-\A'$ and is 
thus also a run in $\A'$. So we obtain $\tree \in L(\A')$.
\qed
\end{proof}

\begin{corollary}
 It follows from Theorem~\ref{thm:uptraceincl(dwsim)strictdwsimGFP} and the fact that GFP is downward closed that 
 $P(\strictuptraceinclusion{\dwsim},\strictdwsim)$, $P(\upsim{\dwsim},\strictdwsim)$, $P(\strictupsim{\dwsim},\strictdwsim)$,
 $P(\uptraceinclusion{\id},\strictdwsim)$, $P(\strictuptraceinclusion{\id},\strictdwsim)$, $P(\upsim{\id},\strictdwsim)$ and $P(\id,\strictdwsim)$ are GFP.
 \end{corollary}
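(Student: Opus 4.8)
The plan is to reduce all seven assertions to Theorem~\ref{thm:uptraceincl(dwsim)strictdwsimGFP} using only two facts: that $P$ is monotone in both of its arguments (noted right after its definition) and that GFP is downward closed. Since every pair in the list has the \emph{same} downward component $\strictdwsim$, and $P(\uptraceinclusion{\dwsim},\strictdwsim)$ is GFP by Theorem~\ref{thm:uptraceincl(dwsim)strictdwsimGFP}, it suffices to check that each upward component $R$ occurring in the list satisfies $R \subseteq \uptraceinclusion{\dwsim}$ as a relation on $Q$. Monotonicity then gives $P(R,\strictdwsim) \subseteq P(\uptraceinclusion{\dwsim},\strictdwsim)$; and since each $P(R,\strictdwsim)$ is itself a strict partial order (the downward side is the strict lifting $\hat{<}$ of the strict p.o.\ $\strictdwsim$, which is the second criterion for $P$ being a strict p.o.; a strict upward relation only shrinks $P$ further, preserving irreflexivity and transitivity), downward closure of GFP transports the property from $P(\uptraceinclusion{\dwsim},\strictdwsim)$ down to each $P(R,\strictdwsim)$.

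The substance therefore lies in the containments, which I would organize along three independent dimensions. For strictness, every strict relation is contained in its non-strict counterpart, so $\strictuptraceinclusion{\dwsim} \subseteq \uptraceinclusion{\dwsim}$ and $\strictuptraceinclusion{\id} \subseteq \uptraceinclusion{\id}$. For simulation versus trace inclusion, I would argue that upward simulation (for a fixed inducing relation) is contained in the corresponding upward trace inclusion, since a simulation furnishes exactly the step-by-step witness run demanded by the definition of $\uptraceinclusion{R}$; this yields $\upsim{\dwsim} \subseteq \uptraceinclusion{\dwsim}$ and $\upsim{\id} \subseteq \uptraceinclusion{\id}$. For the inducing relation, I would show $\uptraceinclusion{R}$ is monotone in $R$: passing from $\id$ to $\dwsim$ only weakens the side-branch requirement, giving $\uptraceinclusion{\id} \subseteq \uptraceinclusion{\dwsim}$. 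Composing these yields $\strictupsim{\dwsim} \subseteq \upsim{\dwsim} \subseteq \uptraceinclusion{\dwsim}$, $\upsim{\id} \subseteq \uptraceinclusion{\id} \subseteq \uptraceinclusion{\dwsim}$, and $\strictuptraceinclusion{\id} \subseteq \uptraceinclusion{\id} \subseteq \uptraceinclusion{\dwsim}$; finally $\id \subseteq \uptraceinclusion{\dwsim}$ is immediate because $\uptraceinclusion{\dwsim}$ is a preorder, hence reflexive.

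I expect the only delicate step to be the monotonicity of $\uptraceinclusion{R}$ in its parameter $R$, which requires unfolding its definition and observing that $R$ appears only in the side-branch condition (clause~3): enlarging $R$ weakens that clause while leaving the leaf condition (clause~1), the initial-state condition (clause~2), and the $\spstate$-preservation preamble untouched, so any witnessing run for $\uptraceinclusion{\id}$ still witnesses $\uptraceinclusion{\dwsim}$. The simulation-into-trace-inclusion inclusions are routine once the witnessing run is exhibited, the strictness inclusions are immediate, and the remaining task---confirming each $P(R,\strictdwsim)$ is a strict partial order so that $Prune$ is well defined and downward closure applies---is pure bookkeeping.
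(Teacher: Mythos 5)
Your proposal is correct and is essentially the paper's own justification: the paper states this corollary with no further proof, relying precisely on the ingredients you spell out --- monotonicity of $P$ in its first argument, the containment of each listed upward relation in $\uptraceinclusion{\dwsim}$, the fact that each $P(R,\strictdwsim)$ is a strict partial order (via the strict lifting of $\strictdwsim$), and downward closure of GFP. The only step you describe as ``routine'' that genuinely needs the structure of $\dwsim$ is $\upsim{\dwsim} \subseteq \uptraceinclusion{\dwsim}$: the witness run demanded by upward trace inclusion must label the entire subtrees hanging off the path, and this works because the $\dwsim$-larger side states produced by the upward simulation can re-run those subtrees stepwise, i.e., it uses that the inducing relation $\dwsim$ is itself a downward simulation.
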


\section{State Quotienting Techniques}\label{sec:stateQuot}

A classic method for reducing the size of automata is state quotienting. 
Given a suitable equivalence relation on the set of states, each equivalence
class is collapsed into just one state.
From a preorder $\sqsubseteq$ one obtains an equivalence relation 
$\equiv \;:=\; \sqsubseteq \!\cap\! \sqsupseteq$.
We now define quotienting w.r.t.\ $\equiv$.
Let $\A = (\Sigma,Q,\delta,I)$ be a TDTA and let $\sqsubseteq$ be a preorder on $Q$. 
Given $q \in Q$, we denote by $[q]$ its equivalence class w.r.t $\equiv$.
For $P \subseteq Q$, $[P]$ denotes the set of equivalence classes 
$[P] = \{[p]\,|\, p \in P\}$.
We define the quotient automaton w.r.t.\ $\equiv$ 
as $\A/\!\equiv \; := \; (\Sigma, [Q], \delta_{A/\!\equiv}, [I])$,
where $\delta_{A/\!\equiv} = \{ \langle [q],\sigma,[q_1]\ldots[q_n] \rangle
\mid \langle q,\sigma,q_1\ldots q_n \rangle \in \delta_A \}$.
It is trivial that $L(A) \subseteq L(A/\!\!\equiv)$ for any $\equiv$. 
If the reverse inclusion also holds, 
i.e., if $L(A) = L(A/\!\!\equiv)$, we say that $\equiv$ is \emph{good for quotienting} (GFQ).

It was shown in \cite{Lukas:PhD13} that 
$\dwsim \!\!\cap\! \dwsimlarg$ and
$\upsim{\id} \cap \upsimlarg{\id}$ are GFQ. Here we generalize this result
from simulation to trace equivalence.
Let $\equiv^\mathsf{dw} \;:=\; \dwtraceinclusion \!\!\cap\! \dwtraceinclusionlarg$
and $\equiv^\mathsf{up}\!\!(R) :=\; \uptraceinclusion{R} \cap \uptraceinclusionlarg{R}$.

\begin{theorem}		\label{thm:dwtrinclgfq}
$\equiv^\mathsf{dw}$ is GFQ.		
\end{theorem}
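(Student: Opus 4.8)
The goal is to show that $\dwequiv$ is GFQ, i.e.\ that quotienting a TDTA by the downward-trace equivalence $\dwtraceinclusion \cap \dwtraceinclusionlarg$ preserves the language. Since $L(\A) \subseteq L(\A/\!\equiv)$ holds trivially for any congruence-like collapse, the plan is to establish only the reverse inclusion $L(\A/\!\dwequiv) \subseteq L(\A)$. So I would take an accepting run of the quotient automaton $\A/\!\dwequiv$ over a closed tree $\tree$ and build from it an accepting run of $\A$ over the same $\tree$, thereby witnessing $\tree \in L(\A)$.

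The main technical device I would use is the characterization, recorded earlier in the excerpt, that in a TDTA downward trace inclusion coincides with downward language inclusion, and that $\dwtraceinclusion$ is preserved downward-stepwise (transitions can be matched level by level). Concretely, I would process the accepting quotient run top-down, maintaining the invariant that at each node $\node$ I have chosen a concrete state $\run(\node) \in Q$ whose class is the label $[\run_{\equiv}(\node)]$ used by the quotient run, \emph{and} such that $\run(\node)$ downward-trace-includes the state chosen by the quotient run at $\node$ (or: can simulate the remaining subtree). At the root $\epsilon$, the quotient run uses some initial class $[q_0]$ with $q_0 \in I$, so I start with that initial representative. At an internal node, the quotient transition $\langle [q],\sigma,[q_1]\dots[q_n]\rangle$ lifts from some genuine transition $\langle q', \sigma, q_1' \dots q_n' \rangle \in \delta$ with $q' \equiv q$ and each $q_i' \equiv q_i$; using $q \dwtraceinclusion q'$ (from $q \equiv q'$) I transfer the already-fixed concrete state at $\node$ down through a matching transition, obtaining concrete children whose classes agree with $[q_1],\dots,[q_n]$ and which again dominate the quotient's children under $\dwtraceinclusion$.

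The step I expect to be the main obstacle is the interaction between the branching of the tree and the \emph{trace} (as opposed to merely language) strength of the inclusion. Because $\dwtraceinclusion$ is a trace inclusion, the simulating run must respect acceptance conditions \emph{simultaneously} on all leaves of the subtree, not just accept the subtree as a whole; this is exactly why I need the trace formulation rather than plain language inclusion, and why the downward-stepwise preservation property is essential — it guarantees that a single concrete transition can be chosen that serves all $n$ children coherently, rather than making incompatible independent choices in different branches. I would phrase the invariant carefully as ``$\run(\node)$ $\dwtraceinclusion$-dominates the quotient state at $\node$,'' push it down the tree by induction on depth, and finally check that at the leaves the domination forces the final state $\spstate$ to be matched by $\spstate$ (condition~2 of $\dwtraceinclusion$), so the constructed run $\run$ is genuinely accepting in $\A$. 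Assembling these level-by-level choices into one global run of $\A$ on $\tree$ then yields $\tree \in L(\A)$, completing the argument.
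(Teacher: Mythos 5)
Your overall plan (prove only $L(\A/\!\!\dwequiv) \subseteq L(\A)$, lift each quotient transition to a genuine transition of $\A$, and exploit the coincidence of downward trace and downward language inclusion in a TDTA) starts on the right track, but the step you yourself flag as the crux is discharged by a false lemma. Downward trace inclusion is \emph{not} ``preserved downward-stepwise'': from $p \dwequiv q'$ and a transition $\langle q',\sigma,q_1'\dots q_n'\rangle$ one cannot in general find a single transition $\langle p,\sigma,p_1\dots p_n\rangle$ whose children componentwise dominate (let alone lie in the same classes as) $q_1',\dots,q_n'$. Stepwise matching is the defining property of the \emph{simulation} $\dwsim$; trace/language inclusion is strictly coarser precisely because the matching run may depend on the whole tree rather than on one step. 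Concretely, take $\Sigma_0=\{b,c\}$, $\Sigma_2=\{\sigma\}$, and states with downward languages $D(q_1)=\{b,c\}$, $D(q_2)=D(r_1)=D(r_2)=D(r_4)=\{b\}$, $D(r_3)=\{c\}$, and transitions $\langle q,\sigma,q_1q_2\rangle$, $\langle r,\sigma,r_1r_2\rangle$, $\langle r,\sigma,r_3r_4\rangle$. Then $D(q)=D(r)=\{\sigma(b,b),\sigma(c,b)\}$, so $q \dwequiv r$, yet neither transition of $r$ matches the transition of $q$: both fail on the first child, since $\{b,c\}\not\subseteq\{b\}$ and $\{b,c\}\not\subseteq\{c\}$. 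Now let both $q$ and $r$ be initial; your invariant (concrete state in the right class, dominating it) is satisfied by the representative $r$ at the root, so your inductive step must work for it — but on the tree $\sigma(c,b)$, whose quotient run uses $\langle [q],\sigma,[q_1][q_2]\rangle$, there is no ``matching transition'' from $r$, even though $\sigma(c,b)\in D(r)$ via the \emph{other} transition of $r$. The weaker repair — maintain only ``the current subtree lies in the downward language of the fixed concrete state'' — is circular: establishing that invariant at the root is the entire theorem.

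The paper's proof avoids transition-level matching altogether by inducting on the \emph{height} of the tree, i.e., working bottom-up. Given the quotient transition at the root and an underlying genuine transition $\langle q',\sigma,q_1'\dots q_n'\rangle$, the induction hypothesis applied to the subtrees $t_i \in D_{A'}([q_i])$ yields $t_i \in D_A(q_i)$, hence $t_i \in D_A(q_i')$ because equivalent states have equal downward languages; gluing these subtree runs onto the genuine transition gives $t \in D_A(q')$, and only then is $q' \dwequiv q$ used to transfer membership of the \emph{whole tree} across the class via $D_A(q') = D_A(q)$. All transfers happen at the level of tree-language membership, never at the level of single transitions, which is exactly what makes the argument sound where yours breaks. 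Your treatment of the leaves and of $\spstate$ is fine, and if you restructure the rest as this height induction, the proof goes through.
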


\begin{theorem}		\label{thm:upidtrinclgfq}
 $\equiv^\mathsf{up}\!\!(\id)$ is GFQ.
\end{theorem}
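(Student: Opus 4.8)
The plan is to show that $\upequiv(\id) := \uptraceinclusion{\id} \cap \uptraceinclusionlarg{\id}$ is good for quotienting, i.e.\ that $L(\A/\!\!\equiv) \subseteq L(\A)$, since the reverse inclusion is automatic. So I would take a tree $\tree \in L(\A/\!\!\equiv)$ with an accepting run $\bar\run$ in the quotient automaton and construct an accepting run in the original $\A$. The key difference from the downward case is orientation: upward trace inclusion controls the behaviour of a state towards the root (and the side branches joining in), so the natural induction proceeds \emph{upward from the leaves}, building a genuine $\A$-run one level at a time while maintaining that the states chosen are $\upequiv(\id)$-related to the quotient states.

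The main technical step I expect is a ``two-way stepwise propagation'' argument for $\uptraceinclusion{\id}$. First I would establish that each quotient transition $\langle [p],\sigma,[q_1]\cdots[q_n]\rangle$ lifts to an actual $\A$-transition $\langle p',\sigma,q_1'\cdots q_n'\rangle$ with $p' \equiv^{\mathsf{up}}\!(\id)\, p$ and $q_i' \equiv^{\mathsf{up}}\!(\id)\, q_i$ componentwise, because the quotient transitions are exactly the $\equiv$-images of real transitions. The difficulty is that replacing states by $\equivf{\id}$-equivalent ones must be globally consistent across the whole tree: a single state in the quotient run sits at a node that is simultaneously the child in one transition and the parent in another, so the choices made for the upward direction and the downward side-branches must cohere. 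I would handle this by processing the run level by level from the leaves upward, using $\uptraceinclusion{\id}$ at each step to find, given a realised run on the subtree rooted at the current node, a real state above it that both (i) preserves membership in $I$ along the path towards the root and (ii) via the $R = \id$ condition forces the side-branches that join in to be matched by \emph{identical} states --- which is precisely what makes $\id$ the tractable case and sidesteps the relative-downward-behaviour complications.

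Concretely, I would define $\level_i$ exactly as in the proof of Theorem~\ref{thm:uptraceincl(dwsim)strictdwsimGFP}, reading tuples of states at each depth, and prove by downward induction on the level (from the maximal depth towards $\epsilon$) a property stating: for every cut of the tree at depth $i$, there is a partial $\A$-run on the part of $\tree$ below depth $i$ whose frontier states are $\upequiv(\id)$-equivalent to the corresponding quotient states. At the root this yields an $\A$-run whose root state is $\upequiv(\id)$-equivalent to a state in $I$; since $q \uptraceinclusion{\id} r$ forces $q \in I \Rightarrow r \in I$ (condition 2 of upward trace inclusion, applied along the trivial path at the root), the root state itself can be taken in $I$, giving an accepting $\A$-run and hence $\tree \in L(\A)$.

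The hardest part will be the global consistency of the side-branch matching: because $\upequiv(\id)$ is only a trace \emph{equivalence} and not a congruence, I cannot simply substitute equivalent states locally and expect a valid run. The care needed is to phrase the inductive invariant strongly enough --- carrying along entire realised subtree-runs rather than just frontier tuples --- so that when $\uptraceinclusion{\id}$ is invoked to move up one level, the side subtrees it demands are already witnessed by the runs assembled in the induction, and the $\id$ requirement guarantees those witnesses match on the nose. Once the invariant is stated this way, each inductive step is a direct application of the definition of $\uptraceinclusion{\id}$ (in both directions, using the $\cap$), and the leaf base case is immediate since every leaf state maps to $\spstate$ under both relations.
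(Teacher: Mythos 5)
Your proposal follows the same overall strategy as the paper's proof: construct the $\A$-run bottom-up, lift each quotient transition to a real transition with componentwise $\equiv^\mathsf{up}\!\!(\id)$-related states, exploit the $R=\id$ side-condition of $\uptraceinclusion{\id}$ so that already-built side branches survive intact, and transfer initiality at the root at the very end. (That the paper organises this as a structural induction on subtree height with an inner induction over child positions, while you organise it as a global level-by-level induction, is a cosmetic difference.) But there is a genuine gap at exactly the step that carries all the difficulty.

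Your invariant requires the \emph{new} frontier states at depth $i$ to be $\equiv^\mathsf{up}\!\!(\id)$-equivalent to the corresponding quotient states, and your inductive step cannot deliver this. First, a single ``direct application'' of the definition per level does not exist: a lifted transition may disagree with your committed child states in several positions, and the definition of $\uptraceinclusion{\id}$ replaces the state at \emph{one} leaf per application, so you need the paper's one-child-at-a-time inner induction. Second, and more seriously, each such application only guarantees (a) a run with the new state at that leaf, (b) preservation of membership in $I$ along the path to the root, and (c) identity of the immediate side-branch states. Your conditions (i)--(ii) ask precisely for (b) and (c) --- but these say \emph{nothing} about which state appears at the parent, and in particular do not place it in the class of the quotient run's parent state (nor of the old parent). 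So the invariant is not re-established and the induction does not close. This cannot be waved through: unlike upward \emph{simulation}, which is stepwise by definition, upward trace inclusion is a global notion that does not propagate through a single transition, and ``the re-rooted transition's parent stays in the class'' is exactly where the paper's proof does its real work --- its inner induction on $j$-good transitions explicitly asserts that the matching run $\run''$ on the one-level tree satisfies $\run''(\epsilon)\in[q]$, and everything hinges on that assertion; any proof along these lines must establish it or restructure the invariant to avoid needing it. A related directional slip: you propose to invoke $\uptraceinclusion{\id}$ ``given a realised run on the subtree rooted at the current node'' to find ``a real state above it,'' but $\uptraceinclusion{\id}$ constrains only upward contexts and is silent about downward behaviour, so it cannot be applied to a realised subtree-run; the object it applies to is the one-level context given by the lifted transition, which is how the paper uses it. Your base case and the final initiality transfer at the root are correct, but without an argument for the parent-state class membership the core of the proof is missing.
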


\noindent
In Figure~\ref{fig:GFQ_counterexample} (cf.\ Appendix~\ref{sec:appendix_A})
we present a counterexample showing that 
$\equiv \,:=\, \upsim{\dwsim \!\!\cap\! \dwsimlarg} \,\cap \upsimlarg{\dwsim \!\!\cap\! \dwsimlarg}$ is not GFQ. 
This is an adaptation from the Example 5 in \cite{Lukas:PhD13},
where the inducing relation is referred to as the \emph{downward bisimulation equivalence} and the 
automata are seen bottom-up.

One of the best methods previously known for reducing TA performs state quotienting 
based on a combination of downward and upward simulation~\cite{lukas:framework2009}.
However, this method cannot achieve any further reduction on an automaton which has been previously
reduced with the techniques we described above 
(cf.~Theorem~\ref{thm:combinedpreorder} in Appendix~\ref{sec:app_combinedPreorder}).

\section{Lookahead Simulations}\label{sec:lookahead}

Simulation preorders are generally not very good under-approximations
of trace inclusion, since they are much
smaller on many automata. Thus we consider better approximations that
are still efficiently computable.

For word automata, more general \emph{lookahead simulations} were introduced
in \cite{mayr:advanced2013}. These provide a practically useful tradeoff
between the computational effort and the size of the obtained relations.
Lookahead simulations can also be seen as a particular restriction of the 
more general (but less practically useful) 
\emph{multipebble simulations}~\cite{etessami:hierarchy2002}.
We generalize lookahead simulations to tree automata in
order to compute good under-approximations of trace inclusions.

\mypar{Intuition by Simulation Games.}
Normal simulation preorder on labeled transition graphs 
can be characterized by a game between two players, Spoiler and Duplicator. 
Given a pair of states $(q_0,r_0)$, Spoiler wants to show that 
$(q_0,r_0)$ is not contained in the simulation preorder relation, while
Duplicator has the opposite goal.
Starting in the initial configuration $(q_0,r_0)$, Spoiler chooses a
transition $q_0 \stackrel{\sigma}{\rightarrow} q_1$  
and Duplicator must imitate it \emph{stepwise} by 
choosing a transition with the same symbol $r_0 \stackrel{\sigma}{\rightarrow} r_1$. 
This yields a new configuration $(q_1,r_1)$ from which the game continues.
If a player cannot move the other wins. Duplicator wins every
infinite game. Simulation holds iff Duplicator wins. 

In normal simulation, Duplicator only knows Spoiler's very next step (see above),
while in {\em $k$-lookahead simulation} Duplicator knows Spoiler's $k$ next steps
in advance (unless Spoiler's move ends in a deadlocked state - 
i.e., a state with no transitions).
As the parameter $k$ increases, the $k$-lookahead simulation relation
becomes larger and thus approximates the trace inclusion relation
better and better. 
Trace inclusion can also be characterized by a game. 
In the trace inclusion game, Duplicator knows {\em all} steps of Spoiler in
the entire game in advance.

For every fixed $k$, $k$-lookahead simulation is computable
in polynomial time, though the complexity rises quickly in $k$: it is doubly
exponential for downward- and single exponential for upward lookahead
simulation (due to the downward branching of trees).
A crucial trick 
makes it possible to practically compute it for nontrivial $k$: Spoiler's moves are
built incrementally, and Duplicator need not respond to all of 
Spoiler's announced $k$ next steps, but only to a prefix of
them, after which he may request fresh information \cite{mayr:advanced2013}.
Thus Duplicator just uses the minimal lookahead necessary to win the current step.

\mypar{Lookahead downward simulation.}
We say that a tree $\tree$ is $k$-bounded iff for all leaves $\node$ of $\tree$,
either \textbf{a)} $|\node| = k$, or \textbf{b)} $|\node|<k$ and $\node$ is closed.
Let $\A = (\Sigma,Q,\delta, I)$ be a TDTA.
A \emph{$k$-lookahead downward simulation} $L^{k-\mathsf{dw}}$ 
is a binary relation on $Q$ such that if $q \mathrel{L^{k-\mathsf{dw}}} r$, 
then $(q=\spstate \Longrightarrow r=\spstate)$ and the following holds:
Let $\run_k$ be a run on a $k$-bounded tree $\tree_k$ with $\run(\epsilon)=q$
s.t.\ every leaf node of $\run_k$ is either at depth $k$ or
downward-deadlocked (i.e., no more downward transitions exist).
Then there must exist a run $\run_k'$ over a nonempty prefix $\tree_k'$ of $\tree_k$ 
s.t. (1) $\run_k'(\epsilon)=r$, and 
(2) for every leaf $\node$ of $\run_k'$, \:\!\! $\run_k(\node) \mathrel{L^{k-\mathsf{dw}}} \run_k'(\node)$.
%
Since, for given $\A$ and $k\ge 1$, lookahead downward simulations are closed under union,
there exists a unique maximal one  that we call
\emph{the $k$-lookahead downward simulation on $\A$}, denoted by $\kdwsim{k}$.
While $\kdwsim{k}$ is trivially reflexive, it is not transitive in general 
(cf.\ \cite{mayr:advanced2013}, App. B).
Since we only use it as a means to under-approximate the transitive trace
inclusion relation $\dwtraceinclusion$ (and require a preorder to induce an
equivalence), we work with its transitive closure
$\transkdwsim{k} := (\kdwsim{k})^+$.
In particular, $\transkdwsim{k} \ \subseteq\  \dwtraceinclusion$.

\mypar{Lookahead upward simulation.}
%
%
Let $\A = (\Sigma,Q,\delta, I)$ be a TDTA.
A \emph{$k$-lookahead upward simulation} on $\A$ induced by a relation $R$ is
a binary relation $\mathrel{L^{k-\mathsf{up}}(R)}$
on $Q$ s.t.\ if $q \mathrel{L^{k-\mathsf{up}}(R)} r$, 
then $(q = \spstate \Longrightarrow r=\spstate)$ and the following holds:
Let $\run$ be a run over a tree $\tree \in \alltrees(\Sigma)$ with 
$\run(\node) = q$ for some bottom leaf $\node$ s.t.\ either
$|\node| = k$ or $0 < |\node| < k$ and $\run(\epsilon)$ is upward-deadlocked (i.e., no more
upward transitions exist).

Then there must exist $\node',\node''$
such that $\node = \node'\node''$ and $|\node''| \ge 1$ 
and a run $\run'$ over $\tree_{\node'}$
s.t.\ the following holds.
(1) $\run'(\node'') = r$,
(2) ${\run(\node')} \mathrel{L^{k-\mathsf{up}}(R)} {\run'(\epsilon)}$,
(3) $\run(\node'x) \in I \Longrightarrow \run'(x) \in I$ for all prefixes $x$
of $\node''$,
(4) If $\node'xy \in \dom(\run)$ for some strict prefix $x$ of $\node''$ and some 
$y \in \nat$ where $xy$ is not a prefix of $\node''$ then 
${\run(\node'xy)} \mathrel{R} {\run'(xy)}$.

\smallskip

\ignore{
Let $\A = (\Sigma,Q,\delta, I)$ be a TDTA.
A \emph{$k$-lookahead upward simulation} on $\A$ induced by a relation $R$ is a binary relation $\kupsim{k}{R}$ on 
$Q$ such that if $q \mathrel{\kupsim{k}{R}} r$, 
then $q\in I \implies r\in I$ and the following holds:
Let $\run$ be a run over a linear tree $\tree$ with $|\dom(\tree)| = k$ 
and let $\nodew$ be its bottom leaf with $\run(\nodew) = q$.
Then there exists a node $\node$ of $\tree$ and a run $\run'$ over the subtree $\tree_\node$ of $\tree$ such that
\begin{enumerate}
\item
$\run'(\nodew') = r$ where $\nodew'$ is the node with $\node \nodew' = \nodew$;
\item
$\run(\node) \mathrel{\kupsim{k}{R}} \run'(\epsilon)$;
\item
for all nodes $\nodex$ of $\tree_\node$, if $\run(\node\nodex)\in I$ then $\run'(\nodex) \in I$;
\item
for all leaves $\nodex$ of $\run'$ except $\nodew'$, $\run(\node\nodex) \mathrel{R} \run'(\nodex)$.
\end{enumerate}
\medskip
}

Since, for given $\A$, $k\ge 1$ and $R$, lookahead upward simulations are closed under union,
there exists a unique maximal one that we call
\emph{the $k$-lookahead upward simulation induced by $R$ on $\A$}, denoted by $\kupsim{k}{R}$.
Since both $R$ and $\kupsim{k}{R}$ are not necessarily transitive,
we first compute its transitive closure, $R^+$, 
and we then compute $\transkupsim{k}{R} := (\kupsim{k}{R^+})^+$, which 
under-approximates the upward trace inclusion $\uptraceinclusion{R^+}$.

\ignore{
\mypar{A note on related work.} In~\cite{mayr:advanced2013}, simulations are defined in terms of 
games involving two players, Spoiler and Duplicator, and in the context of automata over infinite words. 
In the case of direct (corresponding to downward in our case) simulation between two states $p_0$ and 
$q_0$, Duplicator wants to prove that $q_0$ can stepwise mimic any behavior of $p_0$, while Spoiler wants to disprove
it. The game starts in the initial configuration $(p_0,q_0)$. Inductively, given a game configuration $(p_i,q_i)$ at 
the $i$-th round of the game, Spoiler chooses a symbol $\sigma_i \in \Sigma$ and a transition 
$p_i \stackrel{\sigma_i}{\rightarrow} p_{i+1}$. 
Then Duplicator responds by choosing a matching transition 
$q_i \stackrel{\sigma_i}{\rightarrow} q_{i+1}$, and the next configuration is $(p_{i+1},q_{i+1})$. The traces of each 
player are made out of the sequence of transitions that they take and we say that Duplicator wins the game if 
$\,\forall_{i \geq 0} \ldotp p_i \in F \Longrightarrow q_i \in F$.

In \emph{$k$-lookahead direct simulation}, 
the novelty is that Duplicator can choose at each round how much lookahead (i.e., how many moves from Spoiler)
she needs (up to $k$). 
At each round with configuration $(p_i,q_i)$, 
the game stops and Spoiler wins if $p_i \in F$ but $q_i \notin F$.
Otherwise, the game continues and Spoiler chooses a sequence of $k$ consecutive transitions 
$p_i \stackrel{\sigma_i}{\rightarrow} p_{i+1} \stackrel{\sigma_{i+1}}{\rightarrow} \cdots \stackrel{\sigma_{i+k-1}}{\rightarrow} p_{i+k}$.
Duplicator then chooses a number $1 \leq m \leq k$ and responds with a sequence of $m$ transitions 
$q_i \stackrel{\sigma_i}{\rightarrow} q_{i+1} \stackrel{\sigma_{i+1}}{\rightarrow} \cdots \stackrel{\sigma_{i+m-1}}{\rightarrow} q_{i+m}$.
The remaining $k-m$ moves of Spoiler are forgotten, and the next round of the game starts at $(p_{i+m},q_{i+m})$.

Backward (corresponding to upward in our case) simulations are defined taking the transitions backwards.
}

\ignore{
\mypar{Subsumption of linear languages.}
As mentioned above, the lookahead downward simulation under-approximates downward language inclusion and the lookahead 
upward simulation under-approximates upward language inclusion.
In the case of upward simulation, we may need more fine grained notions of relations which are approximated.
We therefore introduce the notion of linear language and subsumption of linear languages parameterized by a relation.

We say that a run $\run$ \emph{subsumes} a run $\run'$ w.r.t. a relation $R$ and a node $\node$, 
denoted $\run\sqsubseteq_R^\node\run'$, iff
\begin{enumerate}
\item both $\run$ and $\run'$ are runs over the same linear tree $\tree$ and $\node$ is their bottom leaf;
\item for all leaves $\node'$ of $\run$ except $\node$, $\pi(\node') \;R\; \pi'(\node')$;
\item if $\pi(\epsilon)\in I$ then $\pi'(\epsilon)\in I$.
\end{enumerate}
The \emph{linear language} of a state $q$ is the set $Lin(q)$ of runs which contains a run $\run$ iff it is a run over a linear tree
$\tree$ with a bottom leaf $\node$ such that $\run(\node) = q$. 
Given two states $q$ and $r$, we say that the linear language of $r$ \emph{subsumes} the linear language of $q$ w.r.t. $R$,
denoted $q \sqsubseteq_R r$, iff
for every $\run\in Lin(q)$ and its bottom leaf $\node$ with $\run(\node) = q$,
there exists $\run'\in Lin(r)$ such that $\run'(\node) = r$ 
and $\run\sqsubseteq_R^\node\run'$. We call this relation on states \emph{subsumption of linear languages induced by $R$}.

Lookahead upward simulation induced by a relation $R$ under-approximates the subsumption of linear languages induced by $R$.
Further, if $R\subseteq {\dli}$, then the subsumption of linear languages induced by $R$ under-approximates $\uli$. 
}

\section{Experiments}\label{sec:experiments}

Our tree automata reduction algorithm 
(tool available \cite{tool:libvata-heavy})
combines transition pruning
techniques (Sec.~\ref{sec:transPruning}) with quotienting techniques
(Sec.~\ref{sec:stateQuot}). Trace inclusions are under-approximated
by lookahead simulations (Sec.~\ref{sec:lookahead}) where higher
lookaheads are harder to compute but yield better 
approximations.
The parameters $x,y \ge 1$ describe
the lookahead for downward/upward lookahead simulations, respectively.
Downward lookahead simulation is harder to compute than upward
lookahead simulation, since the number of possible moves is doubly exponential
in $x$ (due to the downward branching of the tree) while for upward-simulation
it is only single exponential in $y$. We use $(x,y)$ as $(1,1)$, $(2,4)$ and $(3,7)$.

Besides pruning and quotienting, we also use the operation ${\it RU}$ that
removes useless states, i.e., states that either cannot be reached from any
initial state or from which no tree can be accepted.
Let ${\it Op}(x,y)$ be the following sequence of 
operations on tree automata:
$\ru$, 
quotienting with $\transkdwsim{x}$, 
pruning with $P(\id,\transasymkdwsim{x})$,
$\ru$,
quotienting with $\transkupsim{y}{\id}$,
pruning with $P(\transasymkupsim{y}{\id}, \id)$,
pruning with $P(\strictupsim{\id},\transkdwsim{x})$,
$\ru$,
quotienting with $\transkupsim{y}{\id}$,
pruning with $P(\transkupsim{y}{\dwsim},\strictdwsim)$,
$\ru$.
It is language preserving by the Theorems of Sections~\ref{sec:transPruning}
and \ref{sec:stateQuot}. The order of the operations is chosen according to some
considerations of efficiency. (No order is ideal for all instances.)

Our algorithm ${\it Heavy}(1,1)$ just iterates ${\it Op}(1,1)$ 
until a fixpoint is reached.
For efficiency reasons, the general algorithm ${\it Heavy}(x,y)$ does not iterate
${\it Op}(x,y)$, but uses a double loop: it iterates the sequence 
${\it Heavy}(1,1){\it Op}(x,y)$ until a fixpoint is reached.

We compare the reduction performance of several algorithms.
\begin{description}
\item[RU:]
$\ru$. (Previously present in \libvata.)
\item[RUQ:]
$\ru$ and quotienting with $\dwsim$. (Previously present in \libvata.)
\item[RUQP:]
{\bf RUQ}, plus pruning with $P(\id,\strictdwsim)$.
(Not in \libvata, but simple.)
\item[Heavy:]
${\it Heavy}(1,1)$, ${\it Heavy}(2,4)$ and ${\it Heavy}(3,7)$. (New.) 
\end{description}
\vspace{-3pt}

We tested these algorithms on three sets of automata from the
\libvata\ distribution. 
The first set are 27 moderate-sized automata 
(87 states and 816 transitions on avg.)
derived from regular model checking applications.
Heavy(1,1), on avg., reduced the number of states and transitions
\emph{to} 27\% and 14\% of the original sizes, resp. 
(Note the difference between `to' and `by'.) 
In contrast, RU did not perform any reduction in any case,
RUQ, on avg., reduced the number of states and transitions
only to 81\% and 80\% of the original sizes
and RUQP reduced the number of states and transitions
to 81\% and 32\% of the original sizes;
cf.~Fig.~\ref{fig:moderate}.
The average computation times of Heavy(1,1), 
RUQP, RUQ and RU were, respectively,
0.05s, 
0.03s, 0.006s and 0.001s.

\begin{figure}[htp]
\begin{minipage}{12.2cm}
\vspace{-10pt}
\includegraphics[width=6cm]{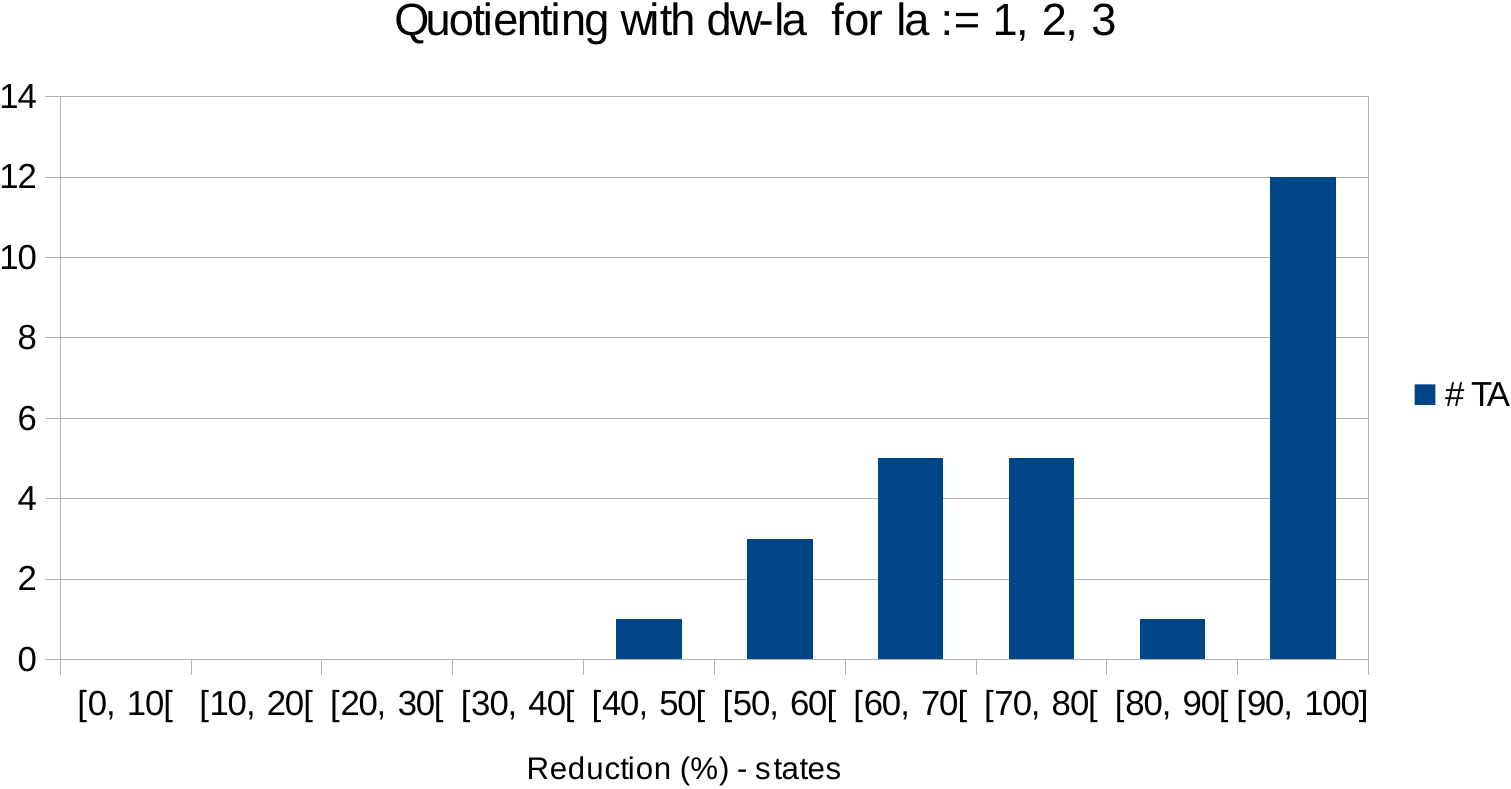}
\includegraphics[width=6cm]{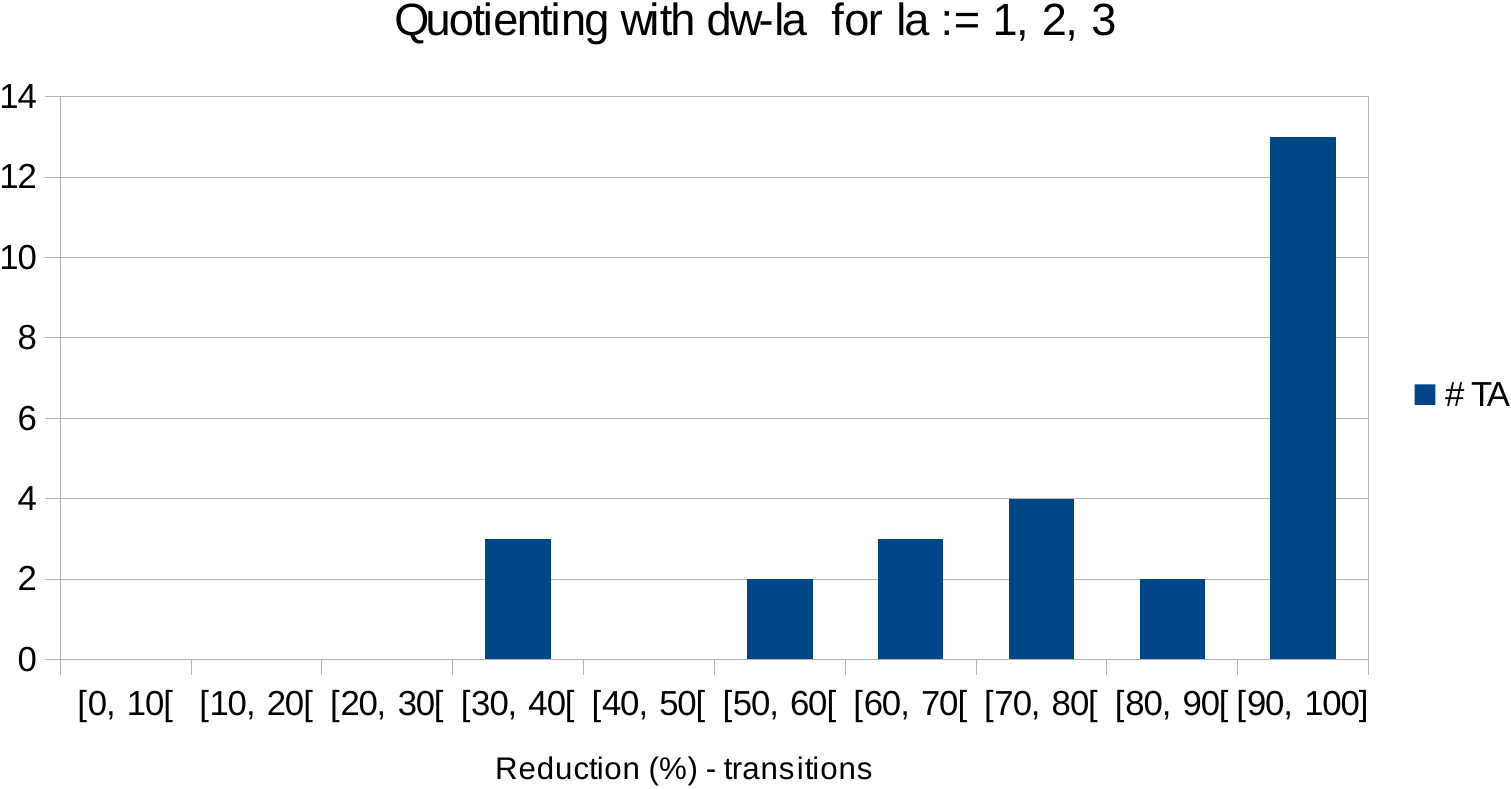}
\includegraphics[width=6cm]{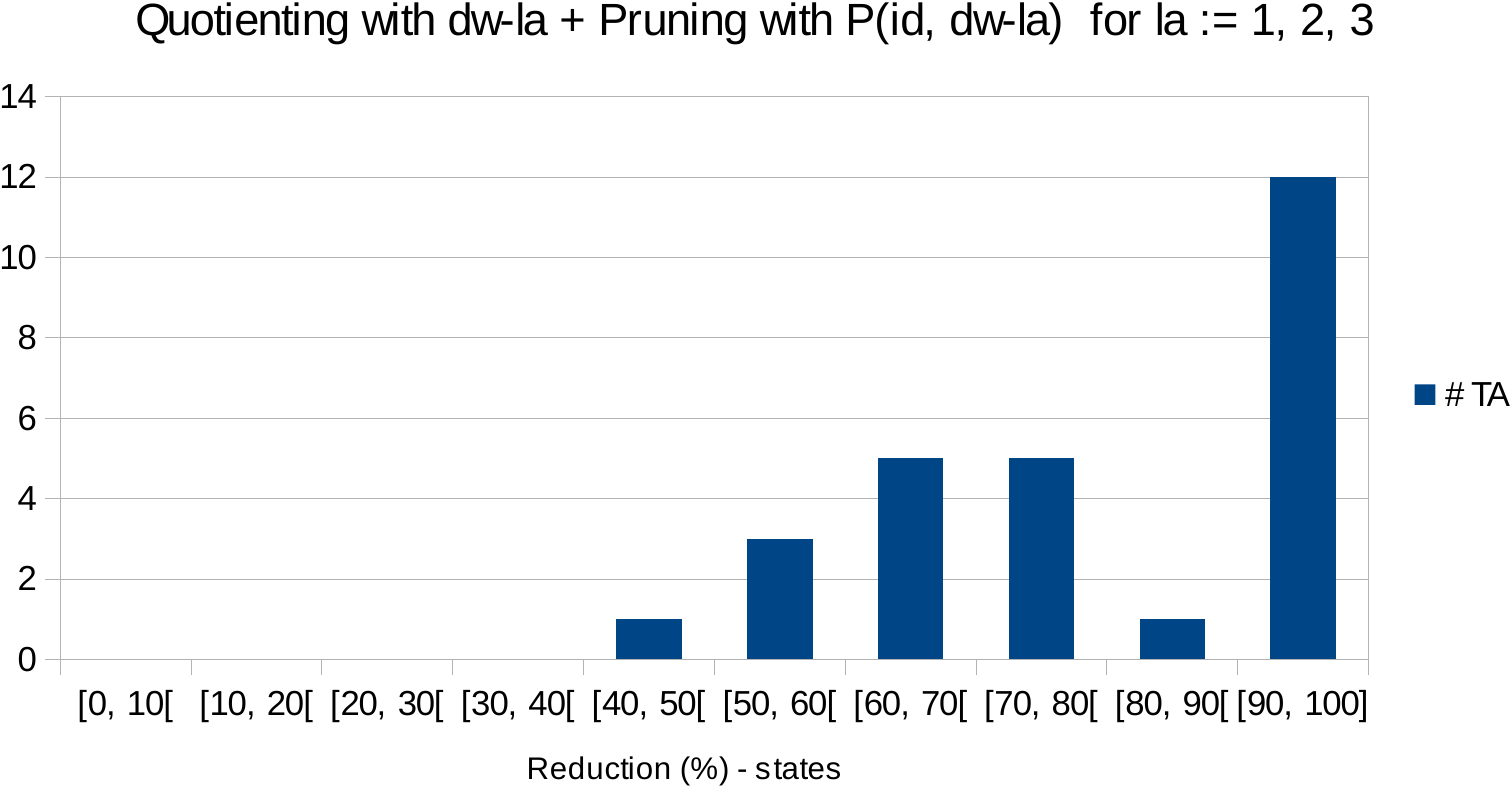}
\includegraphics[width=6cm]{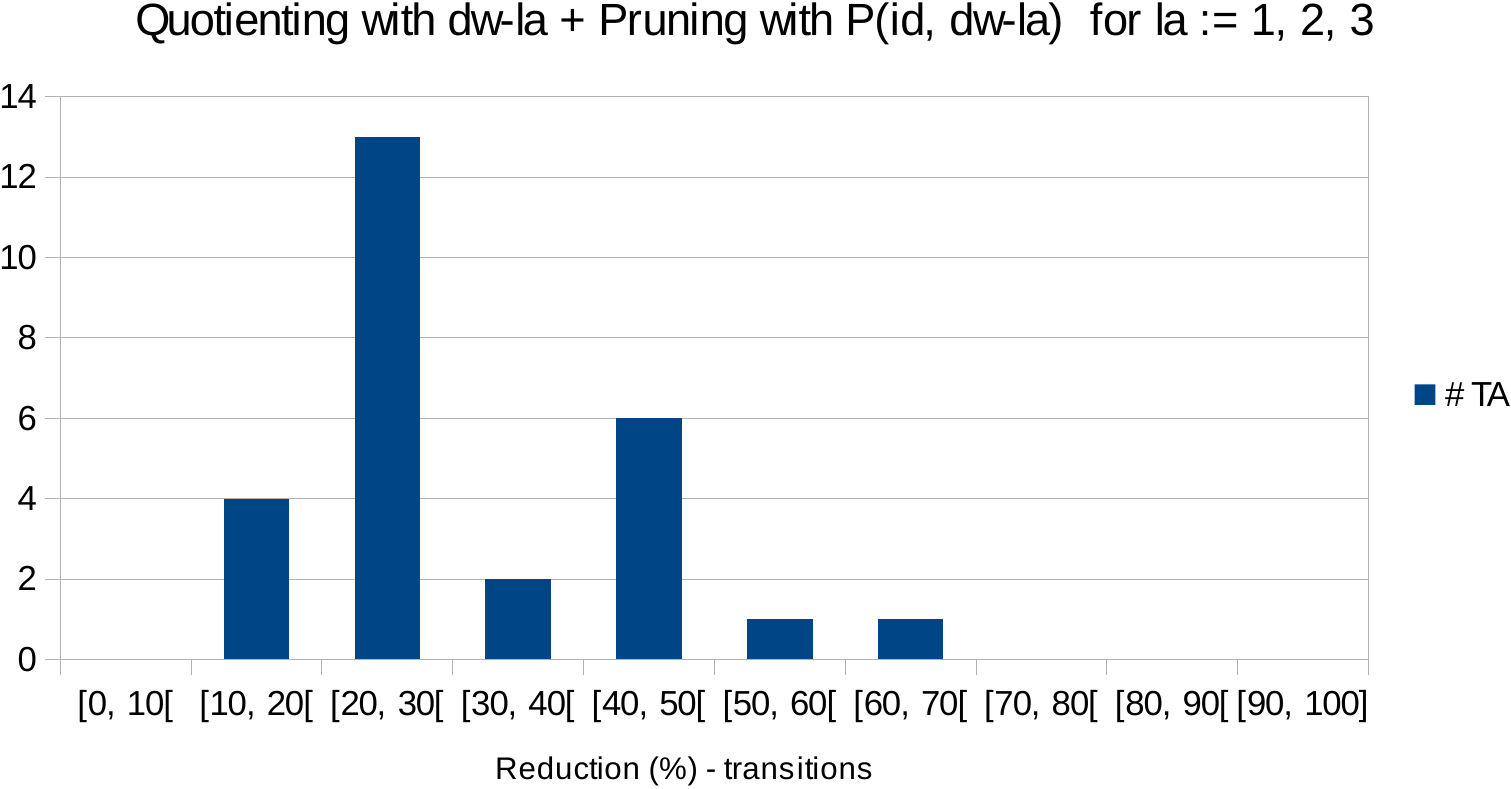}
\includegraphics[width=6cm]{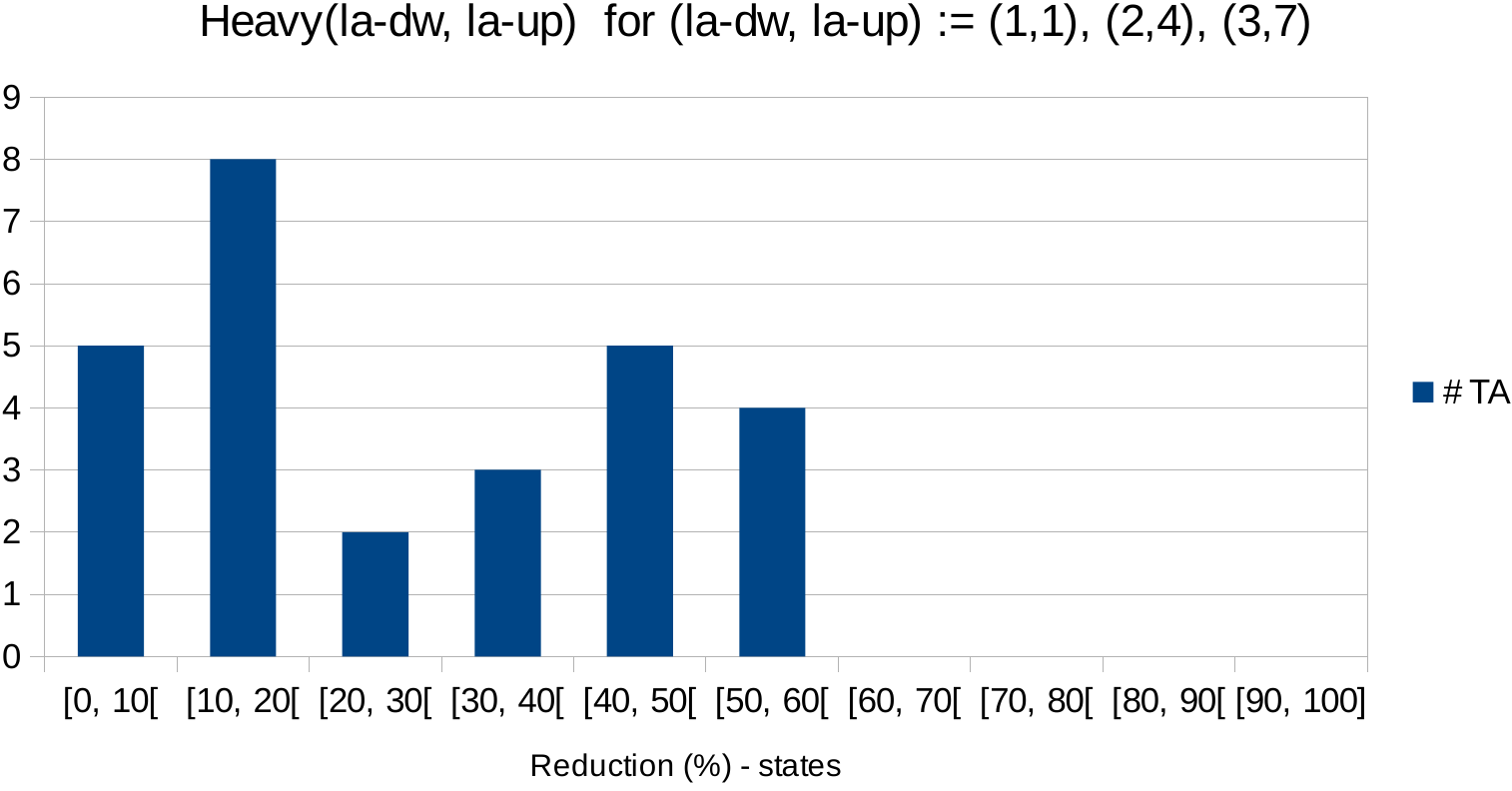}\;\,
\includegraphics[width=6cm]{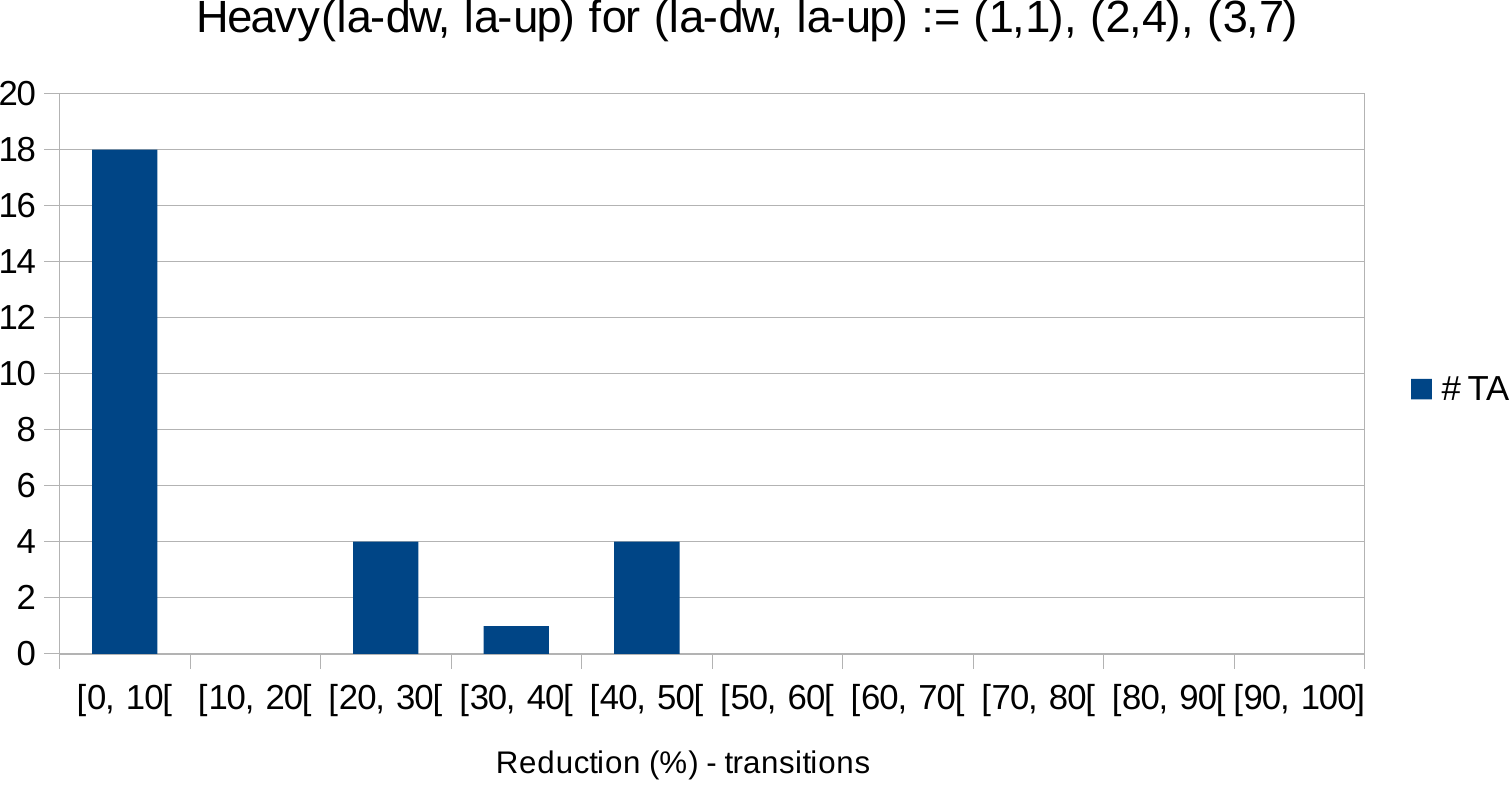}
\captionof{figure}{Reduction of 27 moderate-sized tree automata by methods RUQ (top row),
 RUQP (middle row),
 and Heavy (bottom row). 
A bar of height $h$ at an interval $[x,x+10[$ means that $h$ of the 27
    automata were reduced to a size between $x\%$ and $(x+10)\%$ of their
    original size. The reductions in the numbers of states/transitions are
    shown on the left/right, respectively.
    On this set of automata, the methods Heavy(2,4) and Heavy(3,7) gave
    exactly the same results as Heavy(1,1).}
\label{fig:moderate}
\vspace{20pt}
\end{minipage}\\

\smallskip
\begin{minipage}{12.2cm}
\includegraphics[width=12cm,height=4cm]{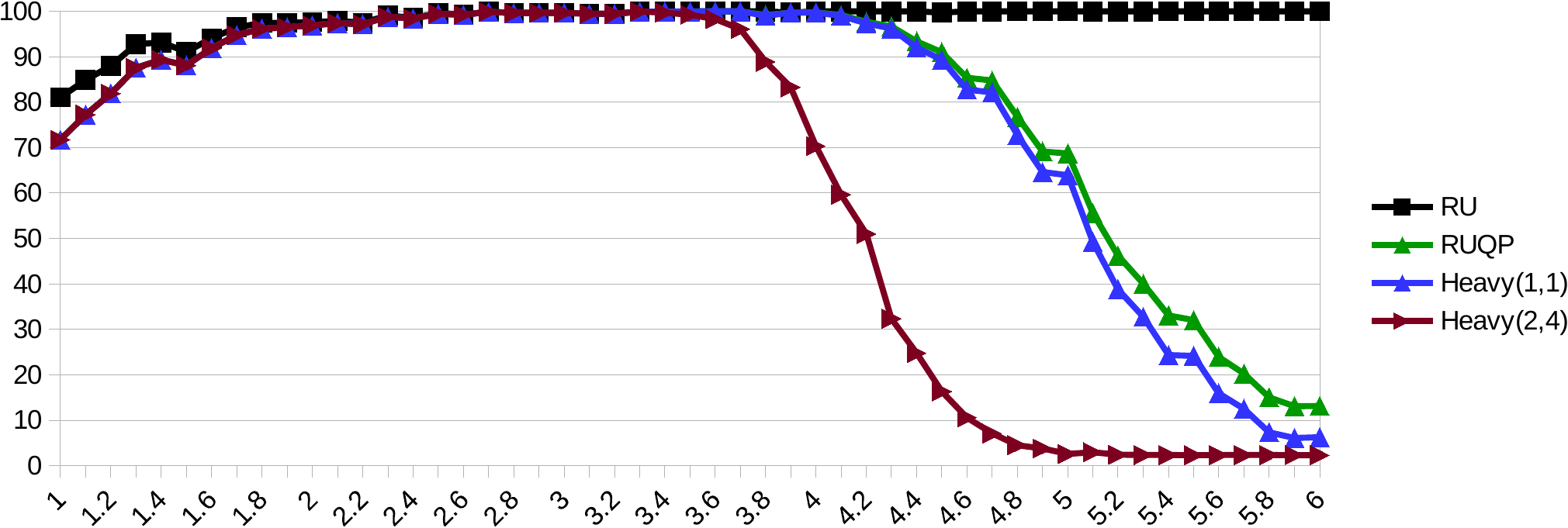}
\captionof{figure}{Reduction of Tabakov-Vardi random tree automata with $n=100, s=2$
  and ${\it ad}=0.8$. The $x$-axis gives the transition density ${\it td}$, and the
  $y$-axis gives the average number of states after reduction with the various
  methods (smaller is better). Each data point is the average of 400 random
  automata. Note that Heavy(2,4) reduces much better than Heavy(1,1) 
  for ${\it td} \ge 3.5$. Computing Heavy(x,y) for even higher $x,y$ is very slow
  on (some instances of) random automata. 
}\label{fig:random}
\end{minipage}
\end{figure}

The second set are 62 larger automata 
(586 states and 8865 transitions, on avg.)
derived from regular model checking applications.
Heavy(1,1), on avg., reduced the number of states and transitions
\emph{to} 4.2\% and 0.7\% of the original sizes.
In contrast,
RU did not perform any reduction in any case,
RUQ, on avg., reduced the number of states and transitions
to 75.2\% and 74.8\% of the original sizes
and RUQP reduced the number of states and transitions
to 75.2\% and 15.8\% of the original sizes;
cf.~Table~\ref{table:nonmoderate} in App.\ref{sec:app_experiments}.
The average computation times of Heavy(1,1), RUQP, RUQ and RU were, respectively,
2.7s, 2.1s, 0.2s and 0.02s.

The third set are 14,498 automata (57 states and 266 transitions on avg.)
from the shape analysis tool Forester \cite{tool:forester}. 
Heavy(1,1), on avg., reduced the number of states/transitions
\emph{to} 76.4\% and 67.9\% of the original, resp. 
RUQ and RUQP reduced the states and transitions only
to 94\% and 88\%, resp.
The average computation times of Heavy(1,1), 
RUQP, RUQ and RU were, respectively,
0.21s, 
0.014s, 0.004s, and 0.0006s. 

Due to the particular structure of the automata in these 3 sample sets, 
${\it Heavy}(2,4)$ and ${\it Heavy}(3,7)$ had hardly any advantage over ${\it Heavy}(1,1)$.
However, in general they can perform significantly better.

We also tested the algorithms on randomly generated tree automata, according
to a generalization of the Tabakov-Vardi model of random word automata
\cite{tabakov:model}. Given parameters $n, s$, ${\it td}$ (transition density)
and ${\it ad}$ (acceptance density), it
generates tree automata with $n$ states, $s$ symbols (each of rank 2), 
$n*{\it td}$ randomly assigned transitions for each symbol, and
$n*{\it ad}$ randomly assigned leaf rules. Figure~\ref{fig:random} 
shows the results of reducing automata of varying ${\it td}$ with different methods.

\section{Summary and Conclusion}\label{sec:conclusion}

The tables in Figure~\ref{fig:GFP_relations_trees} and
Figure~\ref{fig:GFQ_relations_trees} summarize all our results
on pruning and quotienting, respectively.
Note that negative results propagate to larger relations
and positive results propagate to smaller relations (i.e., GFP/GFQ is downward
closed).

The experiments show that our Heavy(x,y) algorithm can significantly reduce the size of
many classes of nondeterministic tree automata, and that it is sufficiently
fast to handle instances with hundreds of states and thousands of transitions.

\begin{figure}[htb]
\begin{minipage}{6.5cm}
      \begin{tabular}{c|c|ccccc}
	  \multicolumn{2}{c}{}				   & 	\multicolumn{5}{c}{$\drel$}		\\
	  \cline{3-7}	  
	  \multicolumn{2}{c|}{$\urel\backslash\irel$}	   		&	$\id$		&	$\strictdwsim$		&	$\dwsim$	& 	$\strictdwtraceinclusion$	&	$\dwtraceinclusion$	\\
	  \hline
	  $\id$					& $\id$		 	& 	$\tickNA$	&	$\tickOK$		&	$\tickNA$	&	$\tickOK$	&	$\tickNA$	\\
	  \hline
	  
	  \multirow{6}{*}{$\sqsubset^{\mathsf{up}}$}	& $\id$				& 	$\tickOK$	&	$\tickOK$	&	$\tickOK$	&	$\tickOK$	&	$\tickOK$	\\
							& $\strictdwsim$ 		& 	$\tickNO$	&	$\tickOK$	&	$\tickNO$	&	$\tickNO$	&	$\tickNO$	\\
							& $\dwsim$ 			& 	$\tickNO$	&	$\tickOK$	&	$\tickNO$	&	$\tickNO$	&	$\tickNO$	\\
							& downup-rel. 				& 	$\tickOK$	&	$\tickOK$	&	$\tickOK$	&	$\tickOK$	&	$\tickOK$	\\
							& $\strictdwtraceinclusion$ 	& 	$\tickNO$	&	$\tickNO$	&	$\tickNO$	&	$\tickNO$	&	$\tickNO$	\\
							& $\dwtraceinclusion$ 		& 	$\tickNO$	&	$\tickNO$	&	$\tickNO$	&	$\tickNO$	&	$\tickNO$	\\
	  \hline
	  
	  \multirow{6}{*}{$\sqsubseteq^{\mathsf{up}}$}	& $\id$				& 	$\tickNA$	&	$\tickOK$	&	$\tickNA$	&	$\tickNO$	&	$\tickNA$	\\
							& $\strictdwsim$ 		& 	$\tickNA$	&	$\tickOK$	&	$\tickNA$	&	$\tickNO$	&	$\tickNA$	\\
							& $\dwsim$ 			& 	$\tickNA$	&	$\tickOK$	&	$\tickNA$	&	$\tickNO$	&	$\tickNA$	\\
							& $\strictdwtraceinclusion$ 	& 	$\tickNA$	&	$\tickNO$	&	$\tickNA$	&	$\tickNO$	&	$\tickNA$	\\
	  						& $\dwtraceinclusion$ 		& 	$\tickNA$	&	$\tickNO$	&	$\tickNA$	&	$\tickNO$	&	$\tickNA$	\\
	  \hline					
	  
	  \multirow{6}{*}{$\subset^{\mathsf{up}}$}	& $\id$ 			& 	$\tickOK$	&	$\tickOK$	&	$\tickNO$	&	$\tickNO$	&	$\tickNO$	\\
							& $\strictdwsim$		& 	$\tickNO$	&	$\tickOK$	&	$\tickNO$	&	$\tickNO$	&	$\tickNO$	\\
							& $\dwsim$			& 	$\tickNO$	&	$\tickOK$	&	$\tickNO$	&	$\tickNO$	&	$\tickNO$	\\
							& $\strictdwtraceinclusion$ 	& 	$\tickNO$	&	$\tickNO$	&	$\tickNO$	&	$\tickNO$	&	$\tickNO$	\\
							& $\dwtraceinclusion$ 		& 	$\tickNO$	&	$\tickNO$	&	$\tickNO$	&	$\tickNO$	&	$\tickNO$	\\
	  \hline
	  
	  \multirow{6}{*}{$\subseteq^{\mathsf{up}}$}	& $\id$				& 	$\tickNA$	&	$\tickOK$	&	$\tickNA$	&	$\tickNO$	&	$\tickNA$	\\
							& $\strictdwsim$ 		& 	$\tickNA$	&	$\tickOK$	&	$\tickNA$	&	$\tickNO$	&	$\tickNA$	\\
							& $\dwsim$ 			& 	$\tickNA$	&	$\tickOK$	&	$\tickNA$	&	$\tickNO$	&	$\tickNA$	\\
							& $\strictdwtraceinclusion$	& 	$\tickNA$	&	$\tickNO$	&	$\tickNA$	&	$\tickNO$	&	$\tickNA$	\\
							& $\dwtraceinclusion$ 		& 	$\tickNA$	&	$\tickNO$	&	$\tickNA$	&	$\tickNO$	&	$\tickNA$	\\
      \end{tabular}

  \captionof{figure}{GFP relations $\makeprunerel{\urel(\irel)}{\drel}$ for tree automata.
  Relations which are GFP are marked with $\tickOK$, those which are not are marked with $\tickNO$
  and $\tickNA$ is used to mark relations where the test does not apply due to them being reflexive
  (and therefore not asymmetric).}
  \label{fig:GFP_relations_trees}
\end{minipage}
\hspace{4mm}
\begin{minipage}{4.7cm}
\vspace{23mm}
      \begin{tabular}{c|c|c}
	  \multicolumn{2}{c}{$R$}		&		\\
 	  \cline{1-2}
	  
	  \multicolumn{2}{c|}{$\dwtraceinclusion$}			& 	$\tickOK$	\\
	  \hline
	  
	  \multicolumn{2}{c|}{$\dwsim$}				& 	$\tickOK$	\\
	  \hline
	  
	  \multirow{6}{*}{$\sqsubseteq^\mathsf{up}$}		& $\id$		 		& 	$\tickOK$	\\
								& $\strictdwsim$ 		& 	$\tickNA$	\\
								& $\dwsim$ 			& 	$\tickNO$	\\
								& $\strictdwtraceinclusion$ 	& 	$\tickNA$	\\
								& $\dwtraceinclusion$ 		& 	$\tickNO$	\\
	  \hline
	  
	  \multirow{6}{*}{$\subseteq^\mathsf{up}$}		& $\id$		 		& 	$\tickOK$	\\
								& $\strictdwsim$ 		& 	$\tickNA$	\\
								& $\dwsim$ 			& 	$\tickNO$	\\
								& $\strictdwtraceinclusion$ 	& 	$\tickNA$	\\
								& $\dwtraceinclusion$ 		& 	$\tickNO$	\\
      \end{tabular}
  \captionof{figure}{GFQ relations $R$ for tree automata.
	    Relations which are GFQ are marked with $\tickOK$ and those which
            are not are marked with $\tickNO$.
            The relations marked with $\tickNA$ are not even reflexive in
            general (unless all transitions are linear; in this case we have
            a word automaton and these relations are the same as
            $\upsim{\id}$ and $\uptraceinclusion{\id}$, respectively).}
  \label{fig:GFQ_relations_trees}
\end{minipage}
\end{figure}

\newpage


\newpage
\appendix
\section{Examples and Counterexamples for Tree Automata}\label{sec:appendix_A}

\newcommand{\TreesDist}{\quad}
\newcommand{\LevelDist}{0.8cm}
\newcommand{\SiblDist}{1cm}
\begin{small}
\begin{figure}[ht]
\begin{minipage}{12.2cm}
\subfloat[
	$A_{BU} = (\Sigma, Q, \delta_{BU}, F = \{q_1,q_2\})$.
]{
	\begin{tikzpicture}[on grid, node distance=1cm and 1.5cm]
		\tikzstyle{vertex} = [smallstate] 
		\path node [vertex,accepting] (q1) {$q_1$};
		\path node [vertex,accepting] (q2) [right = 1.5cm of q1] {$q_2$};
		\path node [vertex] (q4) [below right= 1.3 and 1cm of q2] {$q_4$};
		
		\path node [vertex] (q3) [below left = 2 and 1cm of q1] {$q_3$};
		
		\path node [vertex] (q5) [below right = 1 and 1cm of q4] {$q5$};
		
		\path node [vertex,initial] (f) [below right = 3 and 1cm of q1] {$\spstate$};
 
		\draw [-triangle 45] (q1.south) node (tipq1) {};
		\draw [-triangle 45] (q2.south) node (tipq2) {};
	    
 		\path
 			(tipq1.south) edge node {} (q3)
 			(tipq1.south) edge node {} (q4)
 			(tipq2.south) edge node {} (q3)
 			(tipq2.south) edge node {} (q4);
 			
 		\path[->] 			
 			(q3) edge [dashed] node [above left] {$c$} (q4)
 			(q5) edge node [above right] {$c$} (q4)
  			(f) edge node [above left] {$d$} (q5)  			
  			(f) edge node [above right] {$e$} (q3);
  			
  		\path
			(q1) -- node [below right = 0.5 and 0.5cm] {$b$} (q2)
			(q1) -- node [below left = 0.5 and 0.5cm] {$a$} (q2);		
					
	\end{tikzpicture}
\label{buta:notationExample}
	}
\quad
\subfloat[
	$A_{TD} = (\Sigma, Q, \delta_{TD}, I = \{q_1,q_2\})$.
]{
\begin{tikzpicture}[on grid, node distance=1cm and 1.5cm]
		\tikzstyle{vertex} = [smallstate] 
		
		\path node [vertex,initial] (q1) {$q_1$};
		\path node [vertex,initial] (q2) [right = 1.5cm of q1] {$q_2$};
		\path node [vertex] (q4) [below right= 1.3 and 1cm of q2] {$q_4$};
		
		\path node [vertex] (q3) [below left = 2 and 1cm of q1] {$q_3$};
		
		\path node [vertex] (q5) [below right = 1 and 1cm of q4] {$q5$};
		
		\path node [vertex,accepting] (f) [below right = 3 and 1cm of q1] {$\spstate$};
 
 		\path[->]
 			(q1.south) edge node {} (q3)
 			(q1.south) edge node {} (q4)
 			(q2.south) edge node {} (q3)
 			(q2.south) edge node {} (q4)
 			
 			(q4) edge [dashed] node [above left] {$c$} (q3)
 			(q4) edge node [above right] {$c$} (q5)
 			
  			(q5) edge node [above left] {$d$} (f)  			
  			(q3) edge node [above right] {$e$} (f);
  			
  		\path
			(q1) -- node [below right = 0.4 and 0.5cm] {$b$} (q2)
			(q1) -- node [below left = 0.4 and 0.5cm] {$a$} (q2);

	\end{tikzpicture}
	\label{tdta:notationExample}
}
\vspace{10pt}
\end{minipage}
\begin{minipage}{12.2cm}
\begin{center}
\subfloat[The trees accepted by $A_{BU}$ and $A_{TD}$.]
{
\begin{tikzpicture}
   [level distance=\LevelDist,
   level 1/.style={sibling distance=\SiblDist}]
  \node {a}
    child {node {e}}
    child {node {c}
      child {node {d}}};
\end{tikzpicture}	\TreesDist
\begin{tikzpicture}
   [level distance=\LevelDist,
   level 1/.style={sibling distance=\SiblDist}]
  \node {a}
    child {node {e}}
    child {node {c}
      child {node {e}}};
\end{tikzpicture}	\TreesDist
\begin{tikzpicture}	
   [level distance=\LevelDist,
   level 1/.style={sibling distance=\SiblDist}]
  \node {b}
    child {node {e}}
    child {node {c}
      child {node {d}}};
\end{tikzpicture}	\TreesDist
\begin{tikzpicture}
   [level distance=\LevelDist,
   level 1/.style={sibling distance=\SiblDist}]
  \node {b}
    child {node {e}}
    child {node {c}
      child {node {e}}};
\end{tikzpicture}
}
\end{center}
\captionof{figure}{Let $\Sigma$ be a ranked alphabet such that $\Sigma_0 = \{d,e\}$, $\Sigma_1 = \{c\}$ and $\Sigma_2 = \{a,b\}$. 
	Consider the BUTA $A_{BU}$ and the TDTA $A_{TD}$, where $Q=\{q_1,\ldots,q_5\}$ and
	$\delta_{BU} = \{ \langle\spstate,e,q_4\rangle, \langle\spstate,d,q_5\rangle, \langle q_4,c,q_3\rangle ,
	\langle q_5,c,q_3\rangle, \langle q_3q_4,a,q_1\rangle, \langle q_3q_4,b,q_2\rangle \}$.
	$A_{TD}$ is obtained from $A_{BU}$ by reversing the transition rules in $\delta_{BU}$ and by swapping the 
	roles of the accepting and the final states.
	The language accepted by the automata is $L = \{a(e,c(d)), a(e,c(e)), b(e,c(d)), b(e,c(e)) \}$,
	as represented in c).}
	\label{fig:notationExample}
\end{minipage}
\begin{minipage}{12.2cm}
\vspace{13pt}
\begin{center}
\subfloat[
	Automaton $A_1$.
]{
	\begin{tikzpicture}[on grid, node distance= .6cm and 1.4cm]
		\tikzstyle{vertex} = [smallstate] 

		\path node [vertex, initial] (i) {};
		
		\path node [vertex,accepting] (p) [right = of i] {};
		\path node [vertex,accepting] (q) [right = of p] {};
		
		\path[->]
			(i) edge node [above left] {$a$} (p)
			(p) edge node [above left] {$a$} (q);
			
		\begin{pgfonlayer}{background}
		\end{pgfonlayer}
	\end{tikzpicture}
}
$\quad \quad $
\subfloat[
	Automaton $A_2$. 
]{
	\begin{tikzpicture}[on grid, node distance= .6cm and 1cm]
		\tikzstyle{vertex} = [smallstate]

		\path node [vertex, initial] (i) {};
		\path node [vertex,accepting] (p) [above right = of i] {};
		\path node [vertex] (q) [below right = of i] {};
		\path node [vertex,accepting] (r) [right = of q] {};

		\path[->]
			(i) edge node [above left] {$a$} (p)
			(i) edge node [above right] {$a$} (q)
			(q) edge node [above left] {$a$} (r);
	\end{tikzpicture}
}
\end{center}
\captionof{figure}{An example of two NFAs for which language inclusion holds but trace inclusion does not: 
the trace for $aa$ does not preserve acceptance in the second state in $A_2$.}
\label{fig:TrInclExample}
\end{minipage}
\end{figure}
\end{small}

In Figure~\ref{fig:notationExample}, we present two examples of TA, a BUTA and a TDTA, 
where the second is obtained from the first.
We draw the automata vertically, either bottom-up or top-down (depending on if it is a BUTA or a TDTA),
to make the reading of an input tree more natural. 
The example in Figure~\ref{fig:TrInclExample} shows that language inclusion on NFAs 
(and, more generally, on TA) does not imply trace inclusion.
\vspace{-20pt}
%
%
%
%
%
%
%


\newcommand{\htabA}{2.cm}	
\newcommand{\htabB}{1.5cm}
\newcommand{\htabC}{1.5cm}
\newcommand{\htabD}{0.5cm}
\newcommand{\htabE}{1cm}
\newcommand{\vtabA}{3cm}	
\newcommand{\vtabB}{4cm}
\newcommand{\vtabC}{0.9cm}
\newcommand{\vtabD}{1.0cm}
\newcommand{\vtabE}{1.0cm}
\newcommand{\vtabF}{1.5cm}
\newcommand{\vtabG}{1.3cm}
\newcommand{\bend}{20}
\newcommand{\hspacemacro}{150pt}

\begin{tiny}
\begin{sidewaysfigure}
\begin{center}
\subfloat[
Consider the macros $T_1$ and $T_2$. 
They are used in $b)$ to introduce new transitions and new states in the third level of the automaton as here defined.
Note that each of the intermediate states in $T_2$ is smaller w.r.t
$\strictdwtraceinclusion$ than the intermediate state in $T_1$.
Thus, when the first state in $T_2$ is smaller w.r.t $\strictupsim{\strictdwsim}$ than the first state in $T_1$,
each of the $d$-transitions in $T_2$ is a blue one w.r.t. the $d$-transition in $T_1$, which is red.
Let us also consider the symbols $x$ and $y$ of rank $1$.
We can extend a macro $T$ by adding transitions from the first state to some new state by $x$, by $y$ or by both 
and we denote these by $T_x$, $T_y$ or $T_{x+y}$, respectively.
]{
\hspace{\hspacemacro}
\begin{tikzpicture}[on grid, node distance=1.5cm and 1cm]
		\tikzstyle{vertex} = [smallstate] 

		\path node [vertex] (a1) {};
		\path node [vertex] (a2) [below = of a1] {};
		\path node [vertex,accepting] (apsi) [below = of a2] {$\psi$};
		
		\path[->]
			(a1) edge node [right] {$d$} (a2)
			(a2) edge node [right] {$b,c$} (apsi);
			
		\path node (label) [left = of a2] {\textbf{$T_1$:}};
		
\end{tikzpicture} \hspace{60pt}
\begin{tikzpicture}[on grid, node distance=1.5cm and 1cm]
		\tikzstyle{vertex} = [smallstate] 

		\path node [vertex] (a1) {};
		\path node [vertex] (a2) [below left = of a1] {};
		\path node [vertex] (a3) [below right = of a1] {};
		\path node [vertex,accepting] (apsi) [below right = of a2] {$\psi$};
		
		\path[->]
			(a1) edge node [left] {$d$} (a2)
			(a1) edge node [right] {$d$} (a3)
			(a2) edge node [left] {$b$} (apsi)
			(a3) edge node [right] {$c$} (apsi);
			
		\path node (label) [left = of a2] {\textbf{$T_2$:}};
		
\end{tikzpicture}
\hspace{\hspacemacro}
}
\end{center}

\subfloat[
We consider $\Sigma_0 = \{b,c\}$, $\Sigma_1 = \{d,x,y,z\}$ and $\Sigma_2 = \{a\}$.
The dashed arrows represent transitions by $z$ from/to some new state.
Each of the six initial states has an $a$-transition to one of the states from $7$ to $12$ on the left
and one of the states from $7'$ to $12'$ on the right.
Any state $n'$ on the right side of the automaton does exactly the same downwardly as the state $n$ on the left side, 
and thus needs not be expanded in the figure.
We abbreviate $\strictupsim{\strictdwsim}$ to simply $\sqsubset^\mathsf{up}$ 
and $\upsim{\strictdwsim}$ to $\sqsubseteq^\mathsf{up}$.
]{
\begin{tikzpicture}[on grid, node distance=1.5cm and 1cm]

		\tikzstyle{vertex} = [smallstate] 
		
		\path node (invis1) {};
		\path node (invis11) [below = 0.4cm of invis1] {};
		\path node [vertex,initial] (q6) [below = 0.4cm of invis1] {$6$};
		\path node [vertex,initial] (q5) [left =\htabA of q6] {$5$};
		\path node [vertex,initial] (q4) [left =\htabA of q5] {$4$};
		\path node [vertex,initial] (q3) [left =\htabA of q4] {$3$};
		\path node [vertex,initial] (q2) [left =\htabA of q3] {$2$};
		\path node [vertex,initial] (q1) [left =\htabA of q2] {$1$};
		\path node [vertex,initial] (q1B) [right = 2.cm of q6] {$1$};

		\path node [vertex] (q12) [below left = \vtabA and 2cm of q4] {$12$};
		\path node (invis3) [left = 4.cm of q12] {};
		\path node [vertex] (q11) [left = \htabB of q12] {$11$};
		\path node (q11invis) [above right = \vtabE and 0.5cm of q11] {};
		\path node [vertex] (q10) [left = \htabC of q11] {$10$};
		\path node [vertex] (q9) [left = \htabB of q10] {$9$};
		\path node (q9invis) [above right = \vtabE and 0.5cm of q9] {};
		\path node [vertex] (q8) [left = \htabC of q9] {$8$};
		\path node (q8invis) [above right = \vtabE and 0.5cm of q8] {};
		\path node [vertex] (q7) [left = \htabB of q8] {$7$};
		\path node (q7invis) [above right = \vtabE and 0.5cm of q7] {};
		\path node [vertex] (q12B) [left = \htabC of q7] {$12$};
		\path node [vertex] (q12') [right = 1.7cm of q12] {$12'$};
		\path node [vertex] (q7') [right = \htabC of q12'] {$7'$};
		\path node (q7'invis) [below left = \vtabE and 0.5cm of q7'] {};
		\path node [vertex] (q8') [right = \htabB of q7'] {$8'$};
		\path node (q8'invis) [below left = \vtabE and 0.5cm of q8'] {};
		\path node [vertex] (q9') [right = \htabC of q8'] {$9'$};
		\path node (q9'invis) [below left = \vtabE and 0.5cm of q9'] {};
		\path node [vertex] (q10') [right = \htabB of q9'] {$10'$};
		\path node [vertex] (q11') [right = \htabC of q10'] {$11'$};
		\path node [vertex] (q12'B) [right = \htabB of q11'] {$12'$};
		\path node (q11'invis) [below left = \vtabE and 0.5cm of q11'] {};
		
		\path node [vertex] (q13) [below = \vtabA and \htabB of q12B] {$13$};
		\path node [vertex] (q14) [right = \htabC of q13] {$14$};
		\path node [vertex] (q15) [right = of q14] {$15$};
		\path node [vertex] (q16) [right = \htabC of q15] {$16$};
		\path node [vertex] (q17) [right = 1.2cm of q16] {$17$};
		\path node [vertex] (q18) [right = \htabB of q17] {$18$};
		\path node (q18invis) [above left = \vtabE and 0.5cm of q18] {};
		\path node [vertex] (q19) [right = of q18] {$19$};
		\path node [vertex] (q20) [right = \htabC of q19] {$20$};
		\path node [vertex] (q21) [right = of q20] {$21$};
		\path node [vertex] (q22) [right = \htabC of q21] {$22$};
		\path node [vertex] (q23) [right = 1.2cm of q22] {$23$};
		\path node [vertex] (q24) [right = \htabB of q23] {$24$};
		\path node (q24invis) [above right = \vtabE and 0.5cm of q24] {};
		
 		\path[->] 			
 			(q1.south) edge node {} (q7.north)
 			(q1.south) edge node {} (q12'.north)
 			(q2.south) edge node {} (q8.north)
 			(q2.south) edge node {} (q7'.north)
 			(q3.south) edge node {} (q9.north)
 			(q3.south) edge node {} (q8'.north)
 			(q4.south) edge node {} (q10.north)
 			(q4.south) edge node {} (q9'.north)
 			(q5.south) edge node {} (q11.north)
 			(q5.south) edge node {} (q10'.north)
 			(q6.south) edge node {} (q12.north)
 			(q6.south) edge node {} (q11'.north)
 			
 			(q7.south) edge [blue, very thin] node {} (q13.north)
 			(q7.south) edge [blue, very thin] node {} (q15.north)
 			(q8.south) edge [red, very thick] node {} (q14.north)
 			(q8.south) edge [red, very thick] node {} (q16.north)
 			(q9.south) edge node {} (q17.north)
 			(q9.south) edge node {} (q19.north)
 			(q10.south) edge node {} (q18.north)
 			(q10.south) edge node {} (q20.north)
 			(q11.south) edge node {} (q21.north)
 			(q11.south) edge node {} (q23.north)
 			(q12.south) edge node {} (q22.north)
 			(q12.south) edge node {} (q24.north) 		
 			
 			(q7.north west) edge [bend left = \bend,dashed] node [above] {$z$} (q7invis)
 			(q8invis) edge [bend right = \bend,dashed] node [above] {$z$} (q8.north west)
 			(q9.north west) edge [bend left = \bend,dashed] node [above] {$z$} (q9invis)
 			(q11.north west) edge [bend left = \bend,dashed] node [above] {$z$} (q11invis)
 			(q7') edge [bend left = \bend,dashed] node [left] {$z$} (q7'invis)
 			(q8'invis) edge [bend right = \bend,dashed] node [left] {$z$} (q8')
 			(q9') edge [bend left = \bend,dashed] node [left] {$z$} (q9'invis)
 			(q11') edge [bend left = \bend,dashed] node [left] {$z$} (q11'invis)
 			(q18invis) edge [bend right = \bend,dashed] node [left] {$z$} (q18)
 			(q24invis) edge [bend left = \bend,dashed] node [left] {$z$} (q24);
 					
  		\path
			
			(q1) -- node [below right = \vtabC and 0.1cm of q1] {$a$} (q1)
			(q2) -- node [below right = \vtabC and 0.1cm of q2] {$a$} (q2)
			(q3) -- node [below right = \vtabC and 0.1cm of q3] {$a$} (q3)
			(q4) -- node [below right = \vtabC and 0.1cm of q4] {$a$} (q4)
			(q5) -- node [below = \vtabC of q5] {$a$} (q5)
			(q6) -- node [below = \vtabC of q6] {$a$} (q6)
			(q7) -- node [below left = \vtabD and 0.1cm of q6, blue] {$a$} (q7)
			(q8) -- node [below left = \vtabD and 0.1cm of q8, red] {$a$} (q8)
			(q9) -- node [below right = \vtabD and 0.3cm of q9] {$a$} (q9)
			(q10) -- node [below right = 1.1cm and 0.3cm of q10] {$a$} (q10)
			(q11) -- node [below right = 1.2cm and 0.8cm of q11] {$a$} (q11)
			(q12) -- node [below right = 1.2cm and 0.8cm of q12] {$a$} (q12)
						
			(q1) -- node [left = 0.2cm of q2] {$\sqsubseteq^\mathsf{up}$} (q2)
			(q2) -- node [left = 0.2cm of q2] {$\sqsubseteq^\mathsf{up}$} (q3)
			(q3) -- node [left = 0.2cm of q2] {$\sqsubseteq^\mathsf{up}$} (q4)
			(q4) -- node [left = 0.2cm of q2] {$\sqsubseteq^\mathsf{up}$} (q5)
			(q5) -- node [left = 0.2cm of q2] {$\sqsubseteq^\mathsf{up}$} (q6)
			(q6) -- node [left = 0.2cm of q1B] {$\sqsubseteq^\mathsf{up}$} (q1B)
			
			(q12B) -- node [] {$\strictdwsim$} (q7)		
			(q7) -- node [] {$\sqsubset^\mathsf{up}$} (q8)
			(q8) -- node [] {$\strictdwsim$} (q9)
			(q9) -- node [] {$\sqsubseteq^\mathsf{up}$} (q10)
			(q10) -- node [] {$\strictdwsim$} (q11)
			(q11) -- node [] {$\sqsubseteq^\mathsf{up}$} (q12)
			(q12') -- node [] {$\strictdwsim$} (q7')
			(q7') -- node [] {$\sqsubset^\mathsf{up}$} (q8')
			(q8') -- node [] {$\strictdwsim$} (q9')
			(q9') -- node [] {$\sqsubseteq^\mathsf{up}$} (q10')
			(q10') -- node [] {$\strictdwsim$} (q11')
			(q11') -- node [] {$\sqsubseteq^\mathsf{up}$} (q12'B)
			
			(q13) -- node [] {$\strictdwtraceinclusion$} (q14)
			(q15) -- node [] {$\strictdwtraceinclusion$} (q16)
			(q17) -- node [] {$\sqsubset^\mathsf{up}$} (q18)
			(q19) -- node [] {$\strictdwsim$} (q20)
			(q21) -- node [] {$\strictdwsim$} (q22)
			(q23) -- node [] {$\sqsubset^\mathsf{up}$} (q24)		
			
			(q13) -- node [below = \vtabE of q13] {$T_1 + y$} (q13)
			(q13) -- node [below = \vtabF of q13] {\rotatebox{-90}{$\cdots$}} (q13)
			(q14) -- node [below = \vtabE of q14] {$T_2 + x + y$} (q14)
			(q14) -- node [below = \vtabF of q14] {\rotatebox{-90}{$\cdots$}} (q14)
			(q15) -- node [below = \vtabE of q15] {$T_1$} (q15)
			(q15) -- node [below = \vtabF of q15] {\rotatebox{-90}{$\cdots$}} (q15)
			(q16) -- node [below = \vtabE of q16] {$T_2 + x$} (q16)
			(q16) -- node [below = \vtabF of q16] {\rotatebox{-90}{$\cdots$}} (q16)
			(q17) -- node [below = \vtabE of q17] {$T_2 + x + y$} (q17)
			(q17) -- node [below = \vtabF of q17] {\rotatebox{-90}{$\cdots$}} (q17)
			(q18) -- node [below = \vtabE of q18] {$T_1$} (q18)
			(q18) -- node [below = \vtabF of q18] {\rotatebox{-90}{$\cdots$}} (q18)
			(q19) -- node [below = \vtabE of q19] {$T_2 + x$} (q19)
			(q19) -- node [below = \vtabF of q19] {\rotatebox{-90}{$\cdots$}} (q19)
			(q20) -- node [below = \vtabE of q20] {$T_2 + x + y$} (q20)
			(q20) -- node [below = \vtabF of q20] {\rotatebox{-90}{$\cdots$}} (q20)
			(q21) -- node [below = \vtabE of q21] {$T_1$} (q21)
			(q21) -- node [below = \vtabF of q21] {\rotatebox{-90}{$\cdots$}} (q21)
			(q22) -- node [below = \vtabE of q22] {$T_1 + y$} (q22)
			(q22) -- node [below = \vtabF of q22] {\rotatebox{-90}{$\cdots$}} (q22)
			(q23) -- node [below = \vtabE of q23] {$T_2 + x + y$} (q23)
			(q23) -- node [below = \vtabF of q23] {\rotatebox{-90}{$\cdots$}} (q23)
			(q24) -- node [below = \vtabE of q24] {$T_1$} (q24)
			(q24) -- node [below = \vtabF of q24] {\rotatebox{-90}{$\cdots$}} (q24);
  		
		
	\end{tikzpicture}
}	
\caption{$P(\strictupsim{\strictdwsim},\strictdwtraceinclusion)$ is not GFP
since the automaton presented in $b)$ cannot read the full binary tree of $a$'s with height $3$
without using a blue transition:
a run starting in state $1$ encounters a blue transition from $7$, 
as illustrated in the figure;
and since $7'$ and $8'$ do the same downwardly as $7$ and $8$, respectively,
and since $7' \sqsubset^\mathsf{up} 8'$, 
we have that there is a blue transition from $7'$ as well, and so $2$ cannot be used either;
since $17 \sqsubset^\mathsf{up} 18$ we have that, 
as explained in $a)$, there is a blue transition departing from $17$, 
thus a run starting at $3$ too cannot be used;
and since $9$ is downwardly imitated by $9'$,
a run using this state finds a blue transition as well, and so $4$ is not safe;
since $23 \sqsubset^\mathsf{up} 24$, a blue transition from $23$ exists and so $5$ cannot be used;
finally, since  $11$ is imitated by $11'$, 
we have that a run using this state encounters a blue transition as well, 
and so $6$ too is not safe.
}\label{GFPcountEx:strictupsim(dwsim)strictdwtraceincl}
\end{sidewaysfigure}
\end{tiny}

\newpage

\begin{figure}[htbp]
 \begin{center}
  \subfloat[Automaton $A$.]{
      \begin{tikzpicture}[on grid]
		\tikzstyle{vertex} = [smallstate] 

		\path node [vertex,initial] (i)  {};
		\path node [vertex] (q) [below left = 1.5cm and 0.75cm of i] {$q$};
		\path node [vertex] (p) [left = 1.5cm of q] {$p$};
		\path node [vertex] (r) [below right = 1.5cm and 0.75cm of i] {$r$};
		\path node [vertex] (s) [right = 1.5cm of r] {$s$};
		\path node [vertex,accepting] (f) [below = 2.75cm of i] {$\spstate$};
		
		\path[->]
		    (i.south west) edge node [right = 0.2cm] {$c$} (p)
		    (i.south west) edge node {} (q.north)
		    (i.south) edge node [right = 0.1cm] {$c$} (q.north east)
		    (i.south) edge node [] {} (r.north west)
		    (i.south east) edge node [right] {$c$} (r.north)
		    (i.south east) edge node {} (s)
		    (p) edge node {} (f)
		    (q) edge node {} (f)
		    (r) edge node {} (f)
		    (s) edge node {} (f)
		    ;
		    
		\path
		    (p) -- node [below right = 0.2 and 0.1cm] {$a$} (q)
		    (q) -- node [below left = 0.2 and 0.1cm] {$a$} (r)
		    (r) -- node [below left = 0.2 and 0.5cm] {$b$} (s)
		    (r) -- node [below right = 0.2 and 0.2cm] {$b$} (s)
		    ;		    
		
      \end{tikzpicture}
  }	\quad
  \subfloat[Automaton $\A/\equiv$.]{
      \begin{tikzpicture}[on grid]
		\tikzstyle{vertex} = [smallstate] 

		\path node [vertex,initial] (i)  {};
		\path node [vertex] (qr) [below = 1.5cm of i] {$\{q,r\}$};
		\path node [vertex] (p) [left = 1.5cm of qr] {$\{p\}$};
		\path node [vertex] (s) [right = 1.5cm of qr] {$\{s\}$};
		\path node [vertex,accepting] (f) [below = 3.cm of i] {$\{\spstate\}$};
		
		\path[->]
		    (i.south west) edge node [right = 0.1cm] {$c$} (p)
		    (i.south west) edge node {} (qr.north west)
		    (i.south) edge node [right] {$c$} (qr.north west)
		    (i.south) edge node [] {} (qr.north east)
		    (i.south east) edge node [right] {$c$} (qr.north east)
		    (i.south east) edge node {} (s)
		    (p) edge node {} (f)
		    (qr) edge node {} (f)
		    (s) edge node {} (f)
		    ;
		    
		    \path
		    (p) -- node [below right = 0.4 and 0.cm] {$a$} (qr)
		    (qr) -- node [below left = 0.4 and 0.1cm] {$a,b$} (s)
		    (qr) -- node [below right = 0.4 and 0.1cm] {$b$} (s)
		    ;		    
      \end{tikzpicture}
  }
 \end{center}
  \caption{$\equiv \,:=\, \upsim{\dwsim \!\!\cap\! \dwsimlarg} \,\cap \upsimlarg{\dwsim \!\!\cap\! \dwsimlarg}$ is not
    GFQ. We are considering $\Sigma_0=\{a,b\}$ and $\Sigma_2=\{c\}$. \\
	    Computing all the necessary relations to quotient $\A$ w.r.t.\ $\equiv$, 
	    we obtain $\dwsim\,=\{(p,q),(r,s)\} = \dwsimlarg$ and 
	    $\upsim{\dwsim \!\!\cap\! \dwsimlarg}=\{(q,r),(r,q)\}$.
	    Thus $\equiv \;= \{(q,r),(r,q)\}$.
	    Computing $\A/\!\!\equiv$, 
	    we verify that $c(b,a)$ is now accepted by the automaton $\A/\!\!\equiv$, while it was not in the language of $\A$.}
  \label{fig:GFQ_counterexample}
\end{figure}

\newpage
\section{Proofs of Theorems}\label{sec:app_proofs}

{\bf\noindent Theorem~\ref{thm:id(id)strictdwtraceinclGFP}.}
For every strict partial order $R \,\subset\; \dwtraceinclusion$,
it holds that $P(\id, R)$ is GFP.
\begin{proof}
 Let $\A' = Prune(\A, P(\id, R))$. We show $L(\A) \subseteq L(\A')$.
 If $\tree \in L(A)$ then there exists an accepting 
 $\tree$-run $\run$ in $\A$.
 We show that there exists an accepting $\tree$-run $\pi'$ in $\A'$.
 
We will call an accepting $\tree$-run $\tilde\run$ in $\A$ \emph{$i$-good} if its first $i$ levels use only transitions of $\A'$. Formally, for every node $\node\in\dom(\tree)$ with $|\node|< i$,
$\langle \tilde\run(\node), \tree(\node), \tilde\run(\node 1) \ldots \tilde\run(\node \#(\tree(\node))) \rangle$ is a transition of $\A'$. 
By induction on $i$, we will show that there exists an $i$-good accepting run on $\tree$ for every $i\leq h(\tree)$. In the base case $i=0$, the claim is trivially true since every accepting $\tree$-run of $\A$, and particularly $\run$, is $0$-good.  

For the induction step, let us assume that the claim holds for some $i$.
  Since $\A$ is finite, 
  for every transition ${\it trans}$ there are only finitely many
  $\A$-transitions ${\it trans}'$ such that
  ${\it trans} \mathrel{P(\id,R)} {\it trans}'$. 
  And since $P(\id,R)$ is transitive and irreflexive,
  for each transition ${\it trans}$ in $\A$ we have that either 
  \textbf{1)} ${\it trans}$ is maximal w.r.t.\ $P(\id,R)$,
  or \textbf{2)} there exists a $P(\id,R)$-larger transition ${\it trans}'$ which is 
  maximal w.r.t.\ $P(\id,R)$. 
  Thus for every state $p$ and every symbol $\sigma$, there exists a transition by $\sigma$ 
  departing from $p$ which is still in $\A'$.

  Therefore, for every $i$-good accepting run $\run^i$ on $\tree$, 
  one easily obtains an accepting run $\run^{i+1}$ which is $(i+1)$-good.
  In the first $i$ levels of $\tree$,
  $\run^{i+1}$ is identical to $\run^{i}$.
  In the $(i+1)$-th level of $\tree$,
  we have that for any transition 
  ${\it trans}=\langle \run^i(\node), \tree(\node), \run^i(\node 1) \ldots \run^i(\node \#(\tree(\node)))\rangle$,
  for $|\node|=i$,
  either ${\it trans}$ is $P(\id,R)$-maximal, 
  and so we take $\run^{i+1}(\node j) := \run^i(\node j)$ for all $1\leq j \leq \#(\tree(\node))$,
  or
  there exists a $P(\id,R)$-larger transition 
  ${\it trans}'= 
  \langle \run^i(\node),\tree(\node),q_1 \ldots
  q_{\#(\tree(\node))}\rangle$ that is $P(\id,R)$-maximal. By the definition of $P(\id,R)$, we have that 
  $(\run^i(\node 1) \ldots \run^i(\node \#(\tree(\node)))) \mathrel{\hat{R}} (q_1 \ldots q_{\#(\tree(\node))})$,
  and we take $\run^{i+1}(\node j) := q_j$ for all $1 \leq j \leq \#(\tree(\node)))$.
  Since ${R} \subset {\dwtraceinclusion}$,
  we have that for every $1\leq j \leq \#\tree(v)$, 
   there is a run $\run_j$ of $A$ such that
  $\tree_{v_j} \stackrel{\run_j} {\Longrightarrow}q_j$. 
  The run $\run^{i+1}$ on $\tree$ can hence be completed from every $q_j$ by the run $\run_j$, which concludes the proof of the induction step.
%

 Since a $h(\tree)$-good run is a run in $\A '$, the theorem is proven.
 \qed                                    
\end{proof}                              

{\bf\noindent Theorem~\ref{thm:strictuptraceincl(id)idGFP}.}
For every strict partial order $R \,\subset\; \uptraceinclusion{\id}$,
it holds that $P(R, \id)$ is GFP.
\begin{proof}
Let $\A' = Prune(\A,P(R, \id))$.
We will show that for every accepting 
run $\run$ of $\A$ on a tree $\tree$, 
there exists an accepting run $\hat\run$ of $\A'$ on $\tree$.

Let us first define some auxiliary notation. 
For an accepting run $\run$ of $\A$ on a tree $\tree$,
${\it bad}(\run)$ is the smallest subtree of $\tree$ which contains \emph{all nodes} $v$ of $\tree$  
where $\run$ uses a transition of $\A-\A'$, i.e., a transition which is not $P(R, \id)$-maximal
(where by $\run$ using a transition at node $v$ we mean that the symbol of the transition is $\tree(v)$, 
$\run(v)$ is the left-hand side of the transition, 
and the vector of $\pi$-values of children of $v$ is its right-hand side).
We will use the following auxiliary claim.
\begin{itemize}
\item[(C)]
For every accepting run $\run$ of $\A$ on a tree $\tree$ with $|{\it bad}(\run)| > 1$,
there is an accepting run $\run'$ of $\A$ on $\tree$ where ${\it bad}(\run')$ 
is a proper subtree of ${\it bad}(\run)$.
\end{itemize}
\smallskip
To prove (C),
assume that $v$ is a leaf of ${\it bad}(\run)$ labeled by a transition $\trans p \sigma r n$.
By the definition of $P(R, \id)$ and by the minimality of ${\it bad}(\run)$, 
there exists a $P(R, \id)$-maximal transition $\tau = \trans {p'} \sigma r n$
where 
$p \mathrel{\strictuptraceinclusion{\id}} p'$. 
Since $p \mathrel{\strictuptraceinclusion{\id}} p'$, it follows from the 
definition of $\strictuptraceinclusion{\id}$
that there exists a run $\run'$ of $\A$ on $\tree$ that differs from $\run$
only in labels of prefixes of $v$ (including $v$ itself) with $\run'(v) = p'$.
In other words, 
${\it bad}(\run')$ differs from ${\it bad}(\run)$ only in that it does not 
contain a certain subtree rooted by some ancestor of $v$. 
This subtree contains at least $v$ itself, since $\run'$ uses the 
$P(R, \id)$-maximal transition $\tau$ to label $v$. 
The tree ${\it bad}(\run')$ is hence a proper subtree of ${\it bad}(\run)$, 
which concludes the proof of (C).

With (C) in hand, we are ready to prove the lemma. 
By finitely many applications of (C), starting from $\run$,
we obtain an accepting run $\hat\run$ on $\tree$ where 
${\it bad}(\hat\run)$ is empty 
(we only need finitely many applications since ${\it bad}(\run)$ is a finite
tree, and every application of (C) yields a run with a strictly smaller ${\it bad}$ subtree). 
Thus 
$\hat\run$ is using only $P(R, \id)$-maximal transitions.
Since $R$ and hence also $P(R, \id)$ are strict p.o., 
$\A' = Prune(\A,P(R, \id))$ contains all $P(R, \id)$-maximal transitions of $\A$, 
which means that $\hat\run$ is an accepting run of $\A'$ on $\tree$. 
\qed
\end{proof}

{\bf\noindent Theorem~\ref{thm:dwtrinclgfq}.}
$\equiv^\mathsf{dw}$ is GFQ.		
\begin{proof}
Let $A' := A/\!\!\equiv^\mathsf{dw}$. It is trivial that $L(A) \subseteq L(A')$. 
For the reverse inclusion, we will show by induction on the height $i$ of $\tree$, that for any tree $\tree$, 
if $\tree \in D_{A'}([q])$ for some $[q] \in [Q]$, then $\tree \in D_{A}(q)$.
This guarantees $L(A') \subseteq L(A)$ since if $[q] \in [I]$ then there is some $q' \in I$
 such that $q' \equiv^\mathsf{dw} q$ and thus, by the definition of $\equiv^\mathsf{dw}$, $D_{A}(q') = D_{A}(q)$.

In the base case $i=1$, $t$ is a leaf-node $\sigma$, for some $\sigma \in \Sigma$. 
By hypothesis, $t \in L(A')$. So there exists $[q] \in [I]$ such that $t \Longrightarrow_{A'} [q]$. 
So $\langle [q], \sigma, [\spstate] \rangle \in \delta_{A'}$. 
Since $[\spstate] = \{\spstate\}$, there exists $q' \in [q]$ 
such that $\langle q', \sigma, \spstate \rangle \in \delta_A$.
Since $[q] \in [I]$ there is some $q'' \in I$ with $q'' \equiv^\mathsf{dw} q \equiv^\mathsf{dw} q'$.
We have $\tree \in D_{A}(q') = D_{A}(q'') \subseteq L(A)$.

Let us now consider $i>1$. 
Let $\sigma$ be the root of the tree $t$, and let $t_1,t_2,\ldots,t_{n}$, where $n=\#(\sigma)$,
denote each of the immediate subtrees of $t$. 
As we assume $t \in L(A')$,
there exists $[q] \in [I]$ such that $\langle [q], \sigma, [q_1][q_2]\ldots [q_n] \rangle \in \delta_{A'}$, 
for some $[q_1],[q_2],\ldots,[q_n] \in [Q]$, such that $t_i \in D_{A'}([q_i])$ for every $i$.
By the definition of $\delta_{A'}$, there are 
$q_1' \in [q_1]$, $q_2' \in [q_2]$, $\ldots$, $q_n' \in [q_n]$ and $q' \in [q]$,
such that $\langle q', \sigma, q_1'q_2'\ldots q_n' \rangle \in \delta_A$.
By induction hypothesis, we obtain $t_i \in D_A(q_i)$ for every $i$. 
Since $q_i \equiv^\mathsf{dw} q_i'$, it follows that $t_i \in D_A(q_i')$ for every $i$ 
and thus $t \in D_A(q')$. By $q \equiv^\mathsf{dw} q'$, we conclude that $t \in D_A(q)$.
\qed
\end{proof}

{\bf\noindent Theorem~\ref{thm:upidtrinclgfq}.}
$\equiv^\mathsf{up}\!\!(\id)$ is GFQ.
\begin{proof}
 Let $\equiv \;:=\; \equiv^\mathsf{up}\!\!(\id)$ and $A' := A/\!\equiv$. It is trivial that $L(A) \subseteq L(A')$.
 For the reverse inclusion, we will show, by induction on the height $h$ of $\tree$, that for any tree $t$, 
 if $t \in D_{A'}([q])$ for some $[q] \in [Q]$, 
 then $t \in D_{A}(q')$ for some $q' \in [q]$.
 This guarantees $L(A') \subseteq L(A)$ since if $[q] \in [I]$ then,
 given that $\equiv$ preserves the initial states, $q' \in I$.
 
 In the base case $h=1$, the tree is a leaf-node $\sigma$, for some $\sigma \in \Sigma$. 
 By hypothesis, $t \in L(A')$. 
 So there exists a $[q] \in [I]$ such that $t \Longrightarrow_{A'} [q]$, 
 and so $\langle [q], \sigma, [\spstate] \rangle \in \delta_{A'}$. 
 By the definition of $\delta_{A'}$ and since $[\spstate] = \{\spstate\}$ ($\equiv$ preserves acceptance), 
 we have that there exists $q' \in [q]$ such that $\langle q', \sigma, \spstate \rangle \in \delta_A$, 
 and hence $t \Longrightarrow_A q'$.
 
 Let us now consider $h>1$. 
 As we assume $\tree \in D_{A'}([q])$, there must exist a transition
 $\langle [q], \sigma, [q_1]\ldots [q_n] \rangle \in \delta_{A'}$, 
 for $n=\#\sigma$ and some $[q_1],\ldots,[q_n] \in [Q]$ such that $\tree_i \in D_{A'}([q_i])$
 for every $i:0 \leq i \leq n$,
 where the $\tree_i$s are the subtrees of $\tree$.
 We define the following auxiliary notion: 
 a transition $\langle r, \sigma, r_1 \ldots r_n \rangle$ of $\A$ satisfying $r \in [q]$
 and $\forall_{1 \leq k \leq n} \ldotp r_k \in [q_k]$ 
 is said to be $j$-good iff 
 $\forall_{1 \leq k \leq j} \ldotp \tree_k \in D_\A(r_k)$. 
 We will use induction on $j$ to show that there is a $j$-good transition for any $j$,
 which implies that there is some state $\hat r\in [q]$ such that $\tree \in
 D_{\A}(\hat r)$.

 The base case is $j=0$. By the definition of $\delta_{A'}$ and the fact that 
 $\langle [q], \sigma, [q_1] \ldots [q_n] \rangle \in \delta_{A'}$,
 there exist $q_1' \in [q_1]$, $\ldots$, $q_n' \in [q_n]$ and $q' \in [q]$ 
 such that $\langle q', \sigma, q_1'\ldots q_n' \rangle \in \delta_A$.
 This transition is trivially $0$-good. 

 To show the induction step, assume a transition $\mathit{trans} = \langle r, \sigma, r_1 \ldots r_n \rangle$ that is $j$-good for $j\geq 0$, i.e., 
 each $r_i$ is in $[q_i]$, $r\in [q]$, and $\forall_{1 \leq i \leq j}\ldotp \tree_i\in D_A(r_i)$.
 By the hypothesis of the outer induction on $h$, 
 there is $r_{j+1}' \in [r_{i+1}]$ such that $\tree_{j+1} \in D_\A(r_{j+1}')$.
 Notice that $r_{j+1} \equiv r'_{j+1}$.
 Since $\mathit{trans}$ is a transition of $A$, 
 there is a run $\run'$ of $A$ on a tree $\tree'$ of the height 1 with the root symbol $\sigma$, and where $\run'(1)=r_1,\ldots, \run'(n)=r_n$, and $\run'(\epsilon) = r$.
 Since $r_{j+1} \equiv r'_{j+1}$,
 then, by the definition of $\equiv$, there is another $\tree'$-run $\run''$ such that 
 $r' = \run''(\epsilon)\in [q]$,
 $\run''(j+1) = r'_{j+1}$,
 and $\forall_{i \neq j+1} \ldotp \run''(i) 
 = \run'(i) = r_i$.
 This run uses the transition 
 $\mathit{trans}' = \langle r', \sigma, r_1 \ldots r_j r'_{j+1} r_{j+2} \ldots r_n \rangle$ in $\A$.
 Since $\mathit{trans}$ is $j$-good and $\tree_{j+1} \in D_\A(r'_{j+1})$,
 we have that $\mathit{trans}'$ is $(j+1)$-good.
 This concludes the inner induction on $j$,
 showing that there exists an $n$-good transition.
 Hence $\tree\in D_A(\hat r)$ for some $\hat r \in [q]$, which proves the outer induction on the height $h$ of the tree, concluding  the whole proof.
 \qed
\end{proof}

\newpage
\section{Combined Preorder}\label{sec:app_combinedPreorder}

In~\cite{lukas:framework2009},
the authors introduce the notion of \emph{combined preorder} on an automaton
and prove that its induced equivalence relation is GFQ.
Let $\oplus$ be an operator defined as follows:
given two preorders $H$ and $S$ over a set $Q$,
for $x,y \in Q$, $x (H \oplus S) y$ iff
$(i)\; x(H \circ S) y$ and
$(ii)\; \forall z \in Q \,:\, y H z \,\Longrightarrow\, x (H \circ S) z$.
Let $D$ be a downward simulation preorder and $U$ an upward simulation preorder induced by $D$.
A combined preorder $W$ is defined as $W \,=\, D \,\oplus\, U^{-1}$. 
Since we have $D \,\oplus\, U^{-1} \;\subseteq\; D \,\circ\, U^{-1}$,
for any states $x,y$ such that $x (D \,\oplus\, U^{-1}) y$,
there exists a state $z$, called a mediator, 
such that $x \,D\, z$ and $y \,U\, z$.

In the following, 
Lemmas~\ref{lem:combinedpreorder_lemma1} and \ref{lem:combinedpreorder_lemma2} are used
by Theorem~\ref{thm:combinedpreorder} to show that any quotienting with the equivalence
relation induced by a combined preorder is subsumed by Heavy(1,1).
We use the maximal downward simulation $\dwsim$ and the maximal upward simulation $\upsim{\dwsim}$
in our proof.
Note that any automaton $\A$ which has been reduced with Heavy(1,1) satisfies 
$(1)\; \A \,=\, \A/(\dwsim \!\cap \dwsimlarg) \,=\, \A/(\upsim{\id} \cap \upsimlarg{\id})$
due to the repeated quotienting,
and $(2)\; \A \,=\, Prune(\A, P(\upsim{\dwsim}, \strictdwsim))$ due to the repeated pruning.

\begin{lemma}
\rm{Let $\A$ be an automaton and $p$ and $q$ two states.
If  1) $\A \,=\, \A/(\dwsim \cap \dwsimlarg)$
and 2) $\A \,=\, Prune(\A, P(\upsim{\dwsim}, \strictdwsim))$,
then $(p \dwsim q \land p (\upsim{\dwsim}) q) \,\Longrightarrow\, p=q$.}
\label{lem:combinedpreorder_lemma1}
\end{lemma}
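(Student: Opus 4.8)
The plan is to argue by contradiction: assume $p \neq q$ while $p \dwsim q$ and $p \upsim{\dwsim} q$, and derive a contradiction from hypotheses~(1) and~(2). First I would use hypothesis~(1). Since $\dwsim$ is a preorder and $p \dwsim q$, there are two cases. If also $q \dwsim p$, then $p$ and $q$ lie in the same class of $\dwsim \cap \dwsimlarg$; but $\A = \A/(\dwsim \cap \dwsimlarg)$ forces every such class to be a singleton, so $p = q$, a contradiction. Hence I may assume the strict case $p \strictdwsim q$ (that is, $p \dwsim q$ but $q \not\dwsim p$), so that $p$ and $q$ are related by the strict part $\strictdwsim$.

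The core of the argument then combines this strictness with the upward simulation and hypothesis~(2). Suppose $p$ occurs as a child in some transition $t = \langle a, \sigma, r_1 \cdots r_n\rangle \in \delta$, say $r_i = p$. Applying the second defining condition of $\upsim{\dwsim}$ to $t$ at position $i$ yields a transition $t' = \langle a', \sigma, s_1 \cdots s_n\rangle \in \delta$ with $s_i = q$, with $a \upsim{\dwsim} a'$, and with $r_j \dwsim s_j$ for every $j \neq i$. Since moreover $r_i = p \strictdwsim q = s_i$, the two target tuples are related by the lifted strict order (componentwise $\dwsim$, and strict in coordinate $i$). Together with $a \upsim{\dwsim} a'$ this is exactly $t \mathrel{P(\upsim{\dwsim}, \strictdwsim)} t'$. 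Thus $t$ is not maximal and would be deleted by $Prune(\A, P(\upsim{\dwsim}, \strictdwsim))$, contradicting hypothesis~(2), which states that this pruning removes no transition. Therefore $p$ can occur as a child in no transition of $\A$.

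I expect the main obstacle to be precisely this remaining case: ruling out a state $p$ that never appears as a child. Such a $p$ can only sit at the root of a run, so it is relevant only when $p \in I$, and then the first defining condition of $\upsim{\dwsim}$ forces $q \in I$ as well; but transition pruning offers no leverage against a root-only state, so hypotheses~(1) and~(2) alone do not obviously separate $p$ from $q$ here. To close this gap I would lean on the reduced form in which the lemma is applied: after the $\ru$ step every state is useful, so a useful never-a-child state is necessarily initial, and I would then argue that such a $p$ is rendered redundant by the $\strictdwsim$-larger and $\upsim{\dwsim}$-larger state $q$. Making this redundancy argument precise — rather than the straightforward child case handled above — is where the real work lies.
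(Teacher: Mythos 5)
Your main argument is exactly the paper's own proof of this lemma: hypothesis (1) makes $\dwsim$ antisymmetric, so $p \dwsim q$ gives $p = q$ or $p \strictdwsim q$; in the strict case, any transition $t$ with $p$ as its $i$-th child is matched, via $p \upsim{\dwsim} q$, by a same-symbol transition $t'$ with $q$ at position $i$, an $\upsim{\dwsim}$-larger source, and $\dwsim$-larger remaining children; adding $p \strictdwsim q$ at coordinate $i$ yields precisely $t \mathrel{P(\upsim{\dwsim},\strictdwsim)} t'$, so $t$ would be pruned, contradicting (2). The paper phrases this directly (``there is no $k$ with $p_k \strictdwsim q_k$, in particular $\neg(p \strictdwsim q)$'') where you argue by contradiction, but it is the same proof.

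The residual case you isolate --- $p$ never occurring on the right-hand side of any transition --- is precisely the case the paper's proof silently assumes away: its concluding step ``in particular $\neg(p \strictdwsim q)$'' is valid only if at least one transition with $p$ as a child exists. So your proof is exactly as complete as the published one, and more candid about the hole. Moreover, your suspicion that (1) and (2) alone cannot separate $p$ from $q$ here is correct in the strongest sense: the lemma as stated fails in that case. Take $Q = \{p,q,\spstate\}$, $I = \{p,q\}$, $\Sigma_0 = \{a,b\}$, and transitions $\langle p,a,\spstate\rangle$, $\langle q,a,\spstate\rangle$, $\langle q,b,\spstate\rangle$. Then $\dwsim \;=\; \id \cup \{(p,q)\}$, so $\dwsim \cap \dwsimlarg = \id$ and (1) holds; the relation $\id \cup \{(p,q)\}$ is an upward simulation induced by $\dwsim$ (the transition condition for $(p,q)$ is vacuous and both states are initial), so $p \upsim{\dwsim} q$; and $P(\upsim{\dwsim},\strictdwsim)$ relates no pair of transitions, because the only same-symbol pair has target tuples $(\spstate)$ and $(\spstate)$, which the lifted strict relation does not relate (no coordinate is strict), so (2) holds. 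Yet $p \dwsim q$, $p \upsim{\dwsim} q$, and $p \neq q$.

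Concerning the repair you sketch: it is the right instinct, but it does not come for free. In the example above, $p$ and $q$ are $\upsim{\id}$-equivalent, so the quotienting with $\upsim{\id} \cap \upsimlarg{\id}$ performed by Heavy(1,1) (and assumed in Theorem~\ref{thm:combinedpreorder}) would indeed have merged them first. However, adding a self-loop $\langle q,c,q\rangle$ (with $c$ of rank $1$) restores all of those stronger hypotheses and still breaks the lemma: every state is then useful, $\A = \A/(\dwsim \cap \dwsimlarg) = \A/(\upsim{\id} \cap \upsimlarg{\id})$ (the equivalence $q \upsim{\id} p$ now fails, since the self-loop cannot be matched at $p$), condition (2) still holds, and yet $p \dwsim q$, $p \upsim{\dwsim} q$, $p \neq q$. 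So closing your ``remaining case'' genuinely requires strengthening the statement --- e.g., restricting the claim to states that occur as a child of some transition --- rather than a cleverer argument; this is a defect of the lemma itself (and of the proof of Theorem~\ref{thm:combinedpreorder}, which applies it to arbitrary states), not of your proof.
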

\begin{proof}
From 1) it follows that $\dwsim$ is antisymmetric,
so if $p \dwsim q$ then $p \strictdwsim q \lor p=q$.

From $p (\upsim{\dwsim}) q$,
it follows that for any transition $\langle p', \sigma, p_1 \ldots p_i \ldots p_{\#(\sigma)} \rangle$
with $p_i = p$
there exists a transition $\langle q', \sigma, q_1 \ldots q_i \ldots q_{\#(\sigma)} \rangle$
with $q_i = q$ such that $p' (\upsim{\dwsim}) q'$ and 
$p_j \dwsim q_j$ for all $j: 1\leq j\neq i \leq \#(\sigma)$.
From $p = p_i \dwsim q_i = q$,
we have that $(p_1 \ldots p \ldots p_{\#(\sigma)}) \,\hat{\sqsubseteq}^{\mathsf{dw}}\, (q_1 \ldots q \ldots q_{\#(\sigma)})$.
From 2) it follows that there is no $k: 1 \leq k \leq \#(\sigma)$ such that $p_k \strictdwsim q_k$.
In particular, $\neg(p \sqsubset^{\mathsf{dw}} q)$. 
Thus we conclude that $p=q$.
 \qed 
\end{proof}

\begin{lemma}\label{lem:combinedpreorder_lemma2}
\rm{Let $\A$ be an automaton and $p$ and $q$ two states.
If $\A \,=\, \A/(\dwsim \cap \dwsimlarg)$, 
then $(p \upsim{\dwsim} q) \land (q \upsim{\dwsim} p) \Longrightarrow (p \upsim{\id} q) \land (q \upsim{\id} p)$.}
\end{lemma}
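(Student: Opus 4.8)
The plan is to reduce both conclusions to a single containment and then verify one simulation condition. First I would record the easy half: since $\id \subseteq \dwsim$ and upward simulation is monotone in its inducing relation, $\upsim{\id} \subseteq \upsim{\dwsim}$, so $\upsim{\id}\cap\upsimlarg{\id} \subseteq \upsim{\dwsim}\cap\upsimlarg{\dwsim}$. It then remains to prove the reverse on the symmetric part, i.e.\ that $E := \upsim{\dwsim}\cap\upsimlarg{\dwsim}$ is contained in $\upsim{\id}$. Because $\upsim{\id}$ is the greatest upward simulation induced by $\id$, and $E$ is symmetric by construction, it suffices to show that $E$ is itself an upward simulation induced by $\id$; then $E\subseteq\upsim{\id}$ and, by symmetry of $E$, also $E\subseteq\upsimlarg{\id}$, which yields both $p\upsim{\id}q$ and $q\upsim{\id}p$ for every $(p,q)\in E$.

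Condition~1 of the definition (preservation of $\spstate$ and of $I$) is inherited directly from $p\upsim{\dwsim}q$. The work is in condition~2: given $(p,q)\in E$ and a transition $\langle s,\sigma,r_1\cdots r_n\rangle\in\delta$ with $r_i=p$, I must produce $\langle s',\sigma,r_1'\cdots r_n'\rangle\in\delta$ with $r_i'=q$, with \emph{equal} side branches $r_j'=r_j$ for $j\neq i$, and with source pair again in $E$. The idea is a double application that sandwiches the side branches. Applying $p\upsim{\dwsim}q$ yields a match $\langle s',\sigma,r_1'\cdots r_n'\rangle$ into $q$ with $s\upsim{\dwsim}s'$ and $r_j\dwsim r_j'$; applying $q\upsim{\dwsim}p$ to that transition yields a further transition $\langle s'',\sigma,r_1''\cdots r_n''\rangle$ back into $p$ with $s'\upsim{\dwsim}s''$ and $r_j'\dwsim r_j''$. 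Now the original transition and $\langle s'',\sigma,r_1''\cdots r_n''\rangle$ carry the same symbol and have $p$ in position $i$; their sources satisfy $s\upsim{\dwsim}s''$ by transitivity, and componentwise $r_k\dwsim r_k''$ (at $k=i$ both states are $p$). Here both standing properties of the reduced automaton enter: if some side branch were strictly larger, i.e.\ $(r_1\cdots r_n)\strictdwsim(r_1''\cdots r_n'')$, then the original transition would be pruned by $P(\upsim{\dwsim},\strictdwsim)$, contradicting $\A = Prune(\A,P(\upsim{\dwsim},\strictdwsim))$; hence no strict increase occurs, and with antisymmetry of $\dwsim$ (the hypothesis $\A=\A/(\dwsim\cap\dwsimlarg)$) the chain $r_j\dwsim r_j'\dwsim r_j''$ collapses to $r_j=r_j'=r_j''$. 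Thus the forward match already has exactly the side branches of the original transition. (I note that antisymmetry alone is insufficient for this step: without the pruning-maximality $\A=Prune(\A,P(\upsim{\dwsim},\strictdwsim))$ one can build a downward-reduced automaton and a pair in $E$ whose side branches genuinely differ, so property~\textbf{(2)} of a ${\it Heavy}(1,1)$-reduced automaton is essential and is used alongside the stated hypothesis.)

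It remains to keep the matched source pair inside $E$. The forward match gives $s\upsim{\dwsim}s'$, and applying the symmetric version (using $(q,p)\in E$) to the match gives, by the same equal-side-branches argument, a transition back into $p$ with source $\tilde s$ and $s'\upsim{\dwsim}\tilde s$; iterating produces a non-decreasing chain $s\upsim{\dwsim}s'\upsim{\dwsim}\tilde s\upsim{\dwsim}\cdots$ of source states all carrying context-identical transitions, which in the finite $\upsim{\dwsim}$-preorder on $Q$ must stabilise into a single $\upsim{\dwsim}$-equivalence class. The point to discharge — and what I expect to be the main obstacle — is that this lets us \emph{choose} the match so that $(s,s')\in E$ rather than only $s\upsim{\dwsim}s'$; i.e.\ that the Duplicator can always remain within the symmetric relation $E$ while matching with equal side branches. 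This is the delicate combinatorial step, since $\upsim{\dwsim}$ is not antisymmetric and the asymmetry between the held position $i$ and the side positions must be tracked; it is exactly here that antisymmetry of $\dwsim$ together with pruning-maximality and finiteness of $Q$ have to be combined to rule out an element of the source set that is strictly below, but never equivalent to, every eligible match.
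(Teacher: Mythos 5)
Your overall route is the paper's route: reduce both conclusions to showing that the symmetric relation $E := \upsim{\dwsim} \cap \upsimlarg{\dwsim}$ is itself an upward simulation induced by $\id$, force equal side branches using antisymmetry of $\dwsim$ plus extra structure, and then close the coinduction on the source pair. Two remarks on the comparison. First, your pruning-based sandwich for the side branches is sound and in fact tighter than the paper's argument: the paper builds an infinite alternating chain of matches and appeals to finiteness, which forces equal side branches only for transitions inside the eventual cycle, whereas your sandwich ($t \to t' \to t''$, then compare $t$ with $t''$ under $P(\upsim{\dwsim},\strictdwsim)$) applies to the initial transition itself. Second, your side remark is correct: the stated hypothesis $\A = \A/(\dwsim \cap \dwsimlarg)$ alone does not suffice, so any correct proof must invoke the pruning-maximality $\A = Prune(\A,P(\upsim{\dwsim},\strictdwsim))$, which the lemma omits (it is available where the lemma is applied, in Theorem~\ref{thm:combinedpreorder}). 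The genuine gap is the one you yourself flag: you never discharge the requirement that the matched source pair $(s,s')$ lies in $E$ (or at least in $\upsim{\id}$) rather than merely in $\upsim{\dwsim}$. The paper has exactly the same hole, papered over by the phrase ``by repeating the same argument for the new pair of states,'' which is unjustified because the matching clause yields only the one-directional $s \upsim{\dwsim} s'$, so the new pair need not satisfy the symmetric hypothesis that the repetition requires.

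Unfortunately, this step cannot be discharged at all: the implication is false even assuming both $\A = \A/(\dwsim \cap \dwsimlarg)$ and $\A = Prune(\A,P(\upsim{\dwsim},\strictdwsim))$, so no combinatorial refinement will close your gap. Take states $\{x,s_0,s_1,p,q,r,c,d,\spstate\}$ with $I=\{x\}$, binary symbols $\tau,\sigma$, transitions $\langle x,\tau,c\,s_0\rangle$, $\langle x,\tau,d\,s_1\rangle$, $\langle s_0,\sigma,r\,p\rangle$, $\langle s_1,\sigma,r\,p\rangle$, $\langle s_1,\sigma,r\,q\rangle$, and leaf rules under which $c$ accepts $\{h\}$, $d$ accepts $\{h,h'\}$, and each of $s_0,s_1,p,q,r$ accepts one private leaf symbol. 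Then $\dwsim\;=\;\id\cup\{(c,d)\}$ is antisymmetric, and $\upsim{\dwsim}\;=\;\id\cup\{(s_0,s_1),(p,q),(q,p)\}$; no pair of transitions lies in $P(\upsim{\dwsim},\strictdwsim)$, since every candidate is blocked by $s_0\not\dwsim s_1$, by $p\not\dwsim q$, or by irreflexivity of the strict lifting, so both standing assumptions hold. The pair $(p,q)$ is in $E$ and all transitions into $p$ and $q$ already carry the identical side branch $r$, yet $p \upsim{\id} q$ fails: the only possible match for $\langle s_0,\sigma,r\,p\rangle$ is $\langle s_1,\sigma,r\,q\rangle$, and $(s_0,s_1)\notin\upsim{\id}$ because the incoming side branch $c$ of $s_0$ can only be matched by $d\neq c$ (a discrepancy that pruning does not remove, precisely because $s_0$ and $s_1$ are downward incomparable). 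This is exactly the ``strictly below, never equivalent'' source you anticipated; it exists, so the lemma as stated is false, the paper's own proof breaks at the same point where your proposal stops, and the defect propagates to Theorem~\ref{thm:combinedpreorder}, whose proof invokes this lemma.
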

\begin{proof}
Since $\A = \A/(\dwsim \cap \dwsimlarg)$,
for any two states $x$ and $y$ we have that $(x \dwsim y) \,\Longrightarrow\, (x \sqsubset^{\mathsf{dw}} y \lor x=y)$.

Let $p$ and $q$ be states s.t. $p \upsim{\dwsim} q$ and $q \upsim{\dwsim} p$.
By the definition of $\upsim{\dwsim}$ it follows that for any transition
$\langle p', \sigma, p_1 \ldots p_i \ldots p_{\#(\sigma)} \rangle$ with $p_i = p$ there exists
a transition $\langle q', \sigma, q_1 \ldots q_i \ldots
q_{\#(\sigma)} \rangle$ with 
$p' \upsim{\dwsim} q'$ and
$q_i = q$
such that for any $j:1 \leq j\neq i \leq \#(\sigma) \cdot p_j \dwsim q_j$,
and vice-versa. We can thus construct an infinite sequence of matching
transitions where, for every index $j\neq i$, the sequence of states at component
$j$ is $\dwsim$-increasing.  
However, since $\A$ only has a finite number of states (and transitions),
all these sequences must converge to some equivalence class w.r.t.\
$\dwsim \cap \dwsimlarg$.
Thus, for any transition
$\langle p', \sigma, p_1 \ldots p_i \ldots p_{\#(\sigma)} \rangle$ with $p_i = p$ there exists
a transition $\langle q', \sigma, q_1 \ldots q_i \ldots
q_{\#(\sigma)} \rangle$ with 
$p' \upsim{\dwsim} q'$ and
$q_i = q$
such that for any $j:1 \leq j\neq i \leq \#(\sigma) \cdot 
p_j \dwsim q_j \wedge q_j \dwsim p_j$,
and vice-versa.
However, since $\A = \A/(\dwsim \cap \dwsimlarg)$, we obtain that $p_j = q_j$
for $j:1 \leq j\neq i \leq \#(\sigma)$.
By repeating the same argument for the new pair of states $p'$ and $q'$, we get
that $(p \upsim{\id} q) \land (q \upsim{\id} p)$ as required.
Hence $(\upsim{\dwsim} \cap (\upsim{\dwsim})^{-1}) \;\subseteq\; (\upsim{\id} \cap (\upsim{\id})^{-1})$.
\qed 
\end{proof}

\begin{theorem}
\rm{ Let $\A$ be an automaton such that: 
\\ $(1)\; \A \,=\, \A/(\dwsim \!\cap \dwsimlarg) \,=\, \A/(\upsim{\id} \cap \upsimlarg{\id})$,
and \\ $(2)\; \A \,=\, Prune(\A, P(\upsim{\dwsim}, \strictdwsim))$.
\\ Then $\A = \A / (W \cap W^{-1})$, where $W =\;\; \dwsim \oplus (\upsim{\dwsim})^{-1}$.}
\label{thm:combinedpreorder}
\end{theorem}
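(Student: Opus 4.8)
The plan is to prove that the equivalence $W \cap W^{-1}$ is the identity on the state set; once this is established, $\A/(W \cap W^{-1}) = \A$ follows at once. So the entire task reduces to showing that $p \mathrel{W} q$ and $q \mathrel{W} p$ together force $p = q$.

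First I would unfold both memberships. Since $W = \dwsim \oplus (\upsim{\dwsim})^{-1} \subseteq \dwsim \circ (\upsim{\dwsim})^{-1}$, the relation $q \mathrel{W} p$ yields a mediator $z$ with $q \dwsim z$ and $p \upsim{\dwsim} z$. The decisive move is to feed this mediator into clause (ii) of the definition of $\oplus$ as applied to $p \mathrel{W} q$: since $q \dwsim z$, clause (ii) gives $p \mathrel{(\dwsim \circ (\upsim{\dwsim})^{-1})} z$, i.e.\ there is a state $w$ with $p \dwsim w$ and $z \upsim{\dwsim} w$.

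Next I would collapse $w$ and $z$ onto $p$. Combining $p \upsim{\dwsim} z$ with $z \upsim{\dwsim} w$ and using transitivity of the preorder $\upsim{\dwsim}$ gives $p \upsim{\dwsim} w$; together with $p \dwsim w$, Lemma~\ref{lem:combinedpreorder_lemma1} (whose hypotheses are exactly conditions (1) and (2) of the theorem) forces $p = w$. Hence $z \upsim{\dwsim} p$, and since also $p \upsim{\dwsim} z$, Lemma~\ref{lem:combinedpreorder_lemma2} upgrades this to $p \upsim{\id} z$ and $z \upsim{\id} p$; because $\A = \A/(\upsim{\id} \cap \upsimlarg{\id})$, that equivalence is trivial and so $p = z$. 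The mediator equation $q \dwsim z$ now reads $q \dwsim p$. Running the same chain of steps with $p$ and $q$ interchanged yields $p \dwsim q$, and since $\A = \A/(\dwsim \cap \dwsimlarg)$ makes $\dwsim$ antisymmetric, $p \dwsim q$ and $q \dwsim p$ give $p = q$.

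The main obstacle is the single asymmetric step of plugging the mediator of $q \mathrel{W} p$ into clause (ii) of $p \mathrel{W} q$; everything afterwards is bookkeeping with transitivity and the two lemmas. The care needed lies in getting the composition directions right, so that the witness $w$ produced by clause (ii) ends up $\dwsim$-above $p$ and $\upsim{\dwsim}$-above $z$, which is exactly the configuration on which Lemma~\ref{lem:combinedpreorder_lemma1} can act. This is why the proof genuinely uses clause (ii) of $\oplus$ rather than merely the inclusion $W \subseteq \dwsim \circ (\upsim{\dwsim})^{-1}$.
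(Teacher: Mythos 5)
Your proposal is correct and takes essentially the same route as the paper's own proof: both reduce the claim to showing $W \cap W^{-1}$ is the identity, feed the mediator obtained from one of $p \mathrel{W} q$, $q \mathrel{W} p$ into clause (ii) of the other, and then collapse the resulting states using transitivity of $\upsim{\dwsim}$, Lemma~\ref{lem:combinedpreorder_lemma1}, Lemma~\ref{lem:combinedpreorder_lemma2} together with condition (1), and finally antisymmetry of $\dwsim$. The only cosmetic difference is that the paper runs the two symmetric chains in parallel (with mediators $r,s,t,u$) while you carry out one chain and appeal to symmetry for the other.
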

\begin{proof}
 We show that $(p \,W\, q) \land (q \,W\, p) \;\Longrightarrow\; p=q$,
 which implies $\A \,=\, \A/(W \cap W^{-1})$.
 Let $p \,W\, q$ and $q \,W\, p$, 
 then by the definition of $W$,
 there exist mediators $r$ such that $p \dwsim r$ and $q \upsim{\dwsim} r$ and $s$ such that
 $q \dwsim s$ and $p \upsim{\dwsim} s$.
 By the definition of $W$,
 we have that $p \,(\dwsim \circ (\upsim{\dwsim})^{-1})\, s$ 
 and $q \,(\dwsim \circ (\upsim{\dwsim})^{-1})\, r$.
 Thus, there exist mediators $t$ such that 
 $p \dwsim t$ and $s \upsim{\dwsim} t$ and $u$ such that $q \dwsim u$ and $r \upsim{\dwsim} u$.
 By the transitivity of $\upsim{\dwsim}$ we obtain that $p \upsim{\dwsim} t$ and $q \upsim{\dwsim} u$.
 From 1), 2) and Lemma~\ref{lem:combinedpreorder_lemma1} we obtain that $p=t$ and $q=u$.
 So we have $s \upsim{\dwsim} p$ and $r \upsim{\dwsim} q$.
 By Lemma~\ref{lem:combinedpreorder_lemma2} we obtain that $s \upsim{id} p$ and $p \upsim{id} s$, 
 and $r \upsim{id} q$ and $q \upsim{id} r$.
 Thus by (1) we obtain that $p=s$ and $q=r$.
 Since $p \dwsim r$ and $q \dwsim s$, we conclude that $p=q$.
 \qed 
\end{proof}

\newpage
\section{More Data from the Experiments}\label{sec:app_experiments}

\begin{table}[htbp]
\begin{tabular}{lrrrrrrr}
TA name	& $\#Q_i$	& $\#Delta_i$ &  $\#Q_f$ & $\#Delta_f$ & 	Q reduction &	Delta
reduction & Time(s) \\ \hline
A0053	 & 	54	 & 	159	 & 	27	 & 	66	 & 	50	 & 	41.509434	 & 	0.015	\\
A0054	 & 	55	 & 	241	 & 	28	 & 	93	 & 	50.909088	 & 	38.589211	 & 	0.024	\\
A0055	 & 	56	 & 	182	 & 	27	 & 	73	 & 	48.214287	 & 	40.10989	 & 	0.017	\\
A0056	 & 	57	 & 	230	 & 	24	 & 	55	 & 	42.105263	 & 	23.913044	 & 	0.017	\\
A0057	 & 	58	 & 	245	 & 	24	 & 	58	 & 	41.379311	 & 	23.67347	 & 	0.020	\\
A0058	 & 	59	 & 	257	 & 	25	 & 	65	 & 	42.372883	 & 	25.291828	 & 	0.019	\\
A0059	 & 	60	 & 	263	 & 	24	 & 	59	 & 	40	 & 	22.43346	 & 	0.022	\\
A0060	 & 	61	 & 	244	 & 	32	 & 	111	 & 	52.459015	 & 	45.491802	 & 	0.034	\\
A0062	 & 	63	 & 	276	 & 	32	 & 	112	 & 	50.793655	 & 	40.579708	 & 	0.029	\\
A0063	 & 	64	 & 	571	 & 	11	 & 	23	 & 	17.1875	 & 	4.028021	 & 	0.027	\\
A0064	 & 	65	 & 	574	 & 	11	 & 	23	 & 	16.923077	 & 	4.006969	 & 	0.024	\\
A0065	 & 	66	 & 	562	 & 	11	 & 	23	 & 	16.666668	 & 	4.092527	 & 	0.026	\\
A0070	 & 	71	 & 	622	 & 	11	 & 	23	 & 	15.492958	 & 	3.697749	 & 	0.016	\\
A0080	 & 	81	 & 	672	 & 	26	 & 	58	 & 	32.098763	 & 	8.630952	 & 	0.043	\\
A0082	 & 	83	 & 	713	 & 	26	 & 	65	 & 	31.325302	 & 	9.116409	 & 	0.047	\\
A0083	 & 	84	 & 	713	 & 	26	 & 	65	 & 	30.952381	 & 	9.116409	 & 	0.048	\\
A0086	 & 	87	 & 	1402	 & 	26	 & 	112	 & 	29.885057	 & 	7.988588	 & 	0.103	\\
A0087	 & 	88	 & 	1015	 & 	12	 & 	23	 & 	13.636364	 & 	2.26601	 & 	0.060	\\
A0088	 & 	89	 & 	1027	 & 	12	 & 	23	 & 	13.483146	 & 	2.239532	 & 	0.063	\\
A0089	 & 	90	 & 	1006	 & 	12	 & 	21	 & 	13.333334	 & 	2.087475	 & 	0.064	\\
A0111	 & 	112	 & 	1790	 & 	11	 & 	42	 & 	9.821428	 & 	2.346369	 & 	0.139	\\
A0117	 & 	118	 & 	2088	 & 	25	 & 	106	 & 	21.186441	 & 	5.076628	 & 	0.177	\\
A0120	 & 	121	 & 	1367	 & 	12	 & 	21	 & 	9.917356	 & 	1.536211	 & 	0.068	\\
A0126	 & 	127	 & 	1196	 & 	11	 & 	23	 & 	8.661418	 & 	1.923077	 & 	0.083	\\
A0130	 & 	131	 & 	1504	 & 	11	 & 	23	 & 	8.396947	 & 	1.529255	 & 	0.044	\\
A0172	 & 	173	 & 	1333	 & 	11	 & 	23	 & 	6.358381	 & 	1.725431	 & 	0.098	\\
A0177	 & 	178	 & 	1781	 & 	26	 & 	58	 & 	14.606741	 & 	3.256597	 & 	0.085	\\
\hline
Average & 87.07 & 816.04 & 19.78 & 53.59 & 26.97 & 13.94 & 0.052
\end{tabular}

\caption{Results on reducing the 27 moderate-sized tree automata (from
  \libvata's regular model checking examples) with our ${\it Heavy}(1,1)$ algorithm.
The columns give the name of each automaton, $\#Q_i$: its original number of
states, $\#Delta_i$: its original number of transitions,
$\#Q_f$: the number of states after reduction, $\#Delta_f$: the number of
transitions after reduction, the reduction ratio for states in percent $100*\#Q_f/\#Q_i$ (smaller is
better), the reduction ratio for transitions in percent $100*\#Delta_f/\#Delta_i$ (smaller is
better), and the computation time in seconds.
Note that the reduction ratios for transitions are smaller than the ones for
states, i.e., the automata get not only smaller but also sparser.
Experiments run on Intel 3.20GHz i5-3470 CPU.  
}\label{table:moderate}
\end{table}

Tables~\ref{table:moderate} and ~\ref{table:nonmoderate} show the results of reducing 
two automata samples from \libvata's regular model checking examples with our ${\it Heavy}(1,1)$ algorithm.
The first sample (Table~\ref{table:moderate}) contains 27 automata of moderate size while 
the second one (Table~\ref{table:nonmoderate}) contains 62 larger automata.
In both tables the columns give the name of each automaton, 
$\#Q_i$: original number of states, 
$\#Delta_i$: original number of transitions,
$\#Q_f$: states after reduction, 
$\#Delta_f$: transitions after reduction, 
the reduction ratio for states in percent $100*\#Q_f/\#Q_i$ (smaller is better), 
the reduction ratio for transitions in percent $100*\#Delta_f/\#Delta_i$ (smaller is better), 
and the computation time in seconds.
Note that the reduction ratios for transitions are smaller than the ones for states, i.e., 
the automata get sparser.
The experiments were run on Intel 3.20GHz i5-3470 CPU.

\begin{table}[htbp]
\vspace*{-28mm}
{\tiny
\begin{footnotesize}
\begin{tabular}{lrrrrrrr}
TA name	& $\#Q_i$	& $\#Delta_i$ &  $\#Q_f$ & $\#Delta_f$ & 	Q reduction &	Delta
reduction & Time(s) \\ \hline
A246	&	247	&	2944	&	11	&	42	&	4.45	&	1.43	&	0.40	\\
A301	&	302	&	4468	&	12	&	21	&	3.97	&	0.47	&	0.29	\\
A310	&	311	&	3343	&	24	&	52	&	7.72	&	1.56	&	0.59	\\
A312	&	313	&	3367	&	11	&	23	&	3.51	&	0.68	&	0.21	\\
A315	&	316	&	3387	&	24	&	52	&	7.59	&	1.54	&	0.58	\\
A320	&	321	&	3623	&	26	&	65	&	8.10	&	1.79	&	0.56	\\
A321	&	322	&	3407	&	24	&	52	&	7.45	&	1.53	&	0.62	\\
A322	&	323	&	3651	&	35	&	100	&	10.84	&	2.74	&	0.67	\\
A323	&	324	&	6199	&	26	&	112	&	8.02	&	1.81	&	1.48	\\
A328	&	329	&	3517	&	26	&	58	&	7.90	&	1.65	&	0.50	\\
A329	&	330	&	5961	&	24	&	100	&	7.27	&	1.68	&	1.36	\\
A334	&	335	&	3936	&	11	&	23	&	3.28	&	0.58	&	0.72	\\
A335	&	336	&	3738	&	26	&	58	&	7.74	&	1.55	&	0.56	\\
A339	&	340	&	5596	&	12	&	21	&	3.53	&	0.38	&	0.49	\\
A348	&	349	&	3681	&	11	&	23	&	3.15	&	0.62	&	0.27	\\
A354	&	355	&	3522	&	24	&	52	&	6.76	&	1.48	&	0.70	\\
A355	&	356	&	3895	&	25	&	55	&	7.02	&	1.41	&	0.45	\\
A369	&	370	&	4134	&	24	&	52	&	6.49	&	1.26	&	0.31	\\
A387	&	388	&	4117	&	24	&	52	&	6.19	&	1.26	&	0.51	\\
A390	&	391	&	5390	&	11	&	23	&	2.81	&	0.43	&	1.15	\\
A400	&	401	&	5461	&	11	&	23	&	2.74	&	0.42	&	1.36	\\
A447	&	448	&	7924	&	12	&	23	&	2.68	&	0.29	&	2.55	\\
A483	&	484	&	5592	&	25	&	55	&	5.17	&	0.98	&	0.51	\\
A487	&	488	&	4891	&	16	&	28	&	3.28	&	0.57	&	0.33	\\
A488	&	489	&	8493	&	12	&	21	&	2.45	&	0.25	&	2.86	\\
A489	&	490	&	8516	&	12	&	21	&	2.45	&	0.25	&	2.93	\\
A491	&	492	&	8708	&	12	&	21	&	2.44	&	0.24	&	3.03	\\
A493	&	494	&	7523	&	12	&	21	&	2.43	&	0.28	&	0.69	\\
A494	&	495	&	8533	&	12	&	21	&	2.42	&	0.25	&	2.97	\\
A496	&	497	&	8618	&	12	&	21	&	2.41	&	0.24	&	2.81	\\
A498	&	499	&	8612	&	12	&	21	&	2.40	&	0.24	&	3.10	\\
A501	&	502	&	8632	&	12	&	21	&	2.39	&	0.24	&	2.95	\\
A532	&	533	&	8867	&	12	&	23	&	2.25	&	0.26	&	3.20	\\
A569	&	570	&	8351	&	26	&	58	&	4.56	&	0.69	&	0.98	\\
A589	&	590	&	9606	&	12	&	21	&	2.03	&	0.22	&	3.20	\\
A620	&	621	&	9218	&	12	&	21	&	1.93	&	0.23	&	1.45	\\
A646	&	647	&	6054	&	19	&	34	&	2.94	&	0.56	&	0.65	\\
A667	&	668	&	8131	&	26	&	58	&	3.89	&	0.71	&	1.12	\\
A670	&	671	&	11021	&	34	&	76	&	5.07	&	0.69	&	5.80	\\
A673	&	674	&	11157	&	25	&	55	&	3.71	&	0.49	&	5.38	\\
A676	&	677	&	11043	&	34	&	76	&	5.02	&	0.69	&	5.85	\\
A678	&	679	&	11172	&	26	&	56	&	3.83	&	0.50	&	5.32	\\
A679	&	680	&	11032	&	34	&	76	&	5.00	&	0.69	&	5.88	\\
A689	&	690	&	11207	&	31	&	71	&	4.49	&	0.63	&	5.59	\\
A691	&	692	&	11047	&	34	&	76	&	4.91	&	0.69	&	5.61	\\
A692	&	693	&	11066	&	34	&	76	&	4.91	&	0.69	&	6.10	\\
A693	&	694	&	11188	&	34	&	76	&	4.90	&	0.68	&	6.05	\\
A694	&	695	&	11191	&	34	&	76	&	4.89	&	0.68	&	6.09	\\
A695	&	696	&	11070	&	34	&	76	&	4.89	&	0.69	&	5.80	\\
A700	&	701	&	11245	&	36	&	81	&	5.14	&	0.72	&	6.13	\\
A701	&	702	&	11244	&	36	&	83	&	5.13	&	0.74	&	6.00	\\
A703	&	704	&	11255	&	34	&	76	&	4.83	&	0.68	&	6.09	\\
A723	&	724	&	9376	&	26	&	58	&	3.59	&	0.62	&	1.28	\\
A728	&	729	&	11903	&	12	&	21	&	1.65	&	0.18	&	2.97	\\
A756	&	757	&	8884	&	26	&	58	&	3.43	&	0.65	&	1.34	\\
A837	&	838	&	13038	&	11	&	23	&	1.31	&	0.18	&	5.34	\\
A881	&	882	&	15575	&	12	&	21	&	1.36	&	0.13	&	3.36	\\
A980	&	981	&	21109	&	12	&	21	&	1.22	&	0.10	&	4.64	\\
A1003	&	1004	&	21302	&	12	&	21	&	1.20	&	0.10	&	3.99	\\
A1306	&	1307	&	19699	&	25	&	55	&	1.91	&	0.28	&	2.88	\\
A1404	&	1405	&	18839	&	24	&	52	&	1.71	&	0.28	&	3.09	\\
A2003	&	2004	&	30414	&	24	&	52	&	1.20	&	0.17	&	6.98	\\ \hline
Average	&	586.21	&	8865.85	&	21.32	&	47.74	&	4.19	&	0.72	&	2.69	\\
\end{tabular}
\end{footnotesize}

}
\caption{Results on reducing the 62 larger automata (those not called
  moderate-sized) from \libvata.}\label{table:nonmoderate}
\end{table}

\begin{figure}[htbp]
\includegraphics[width=6cm]{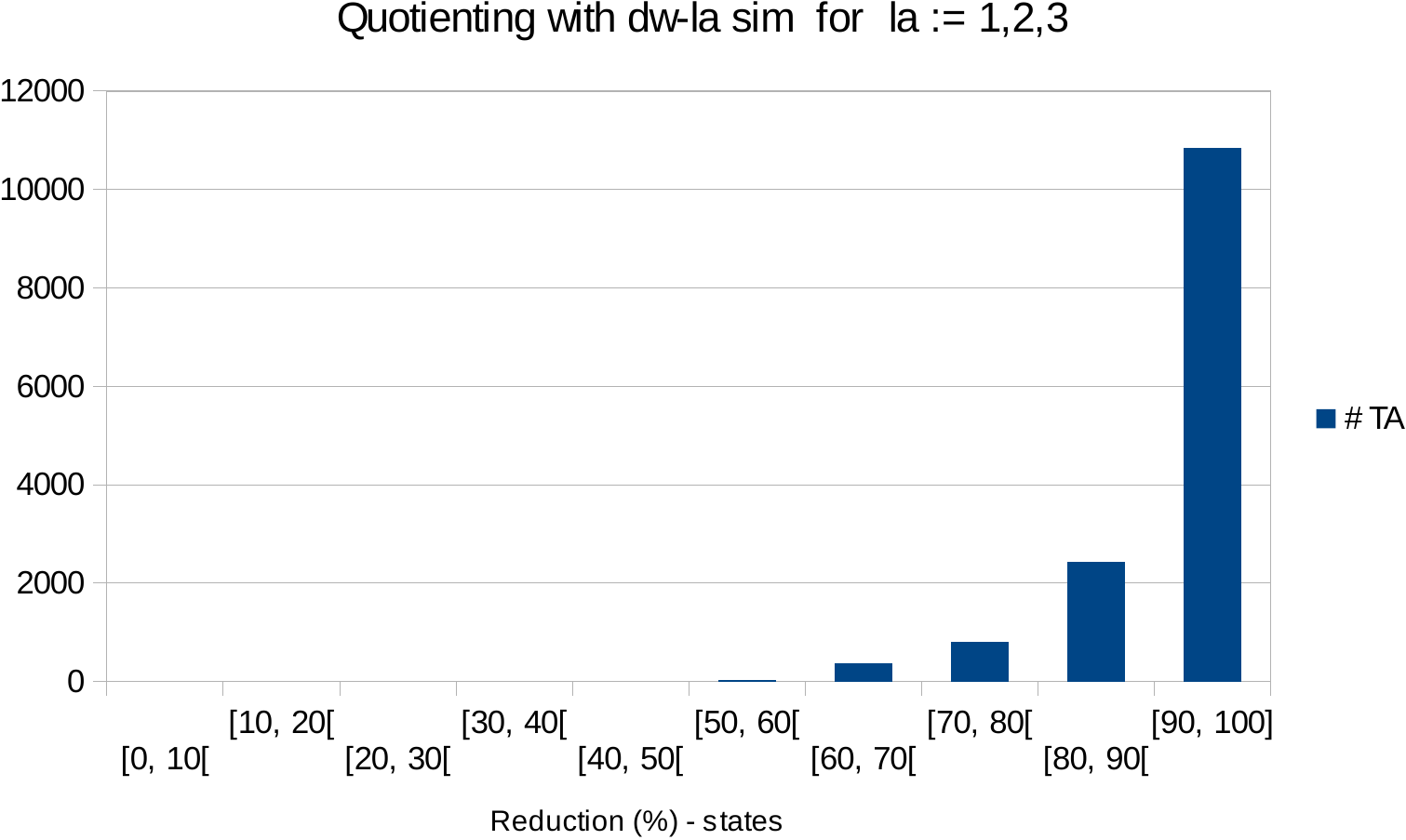}
\includegraphics[width=6cm]{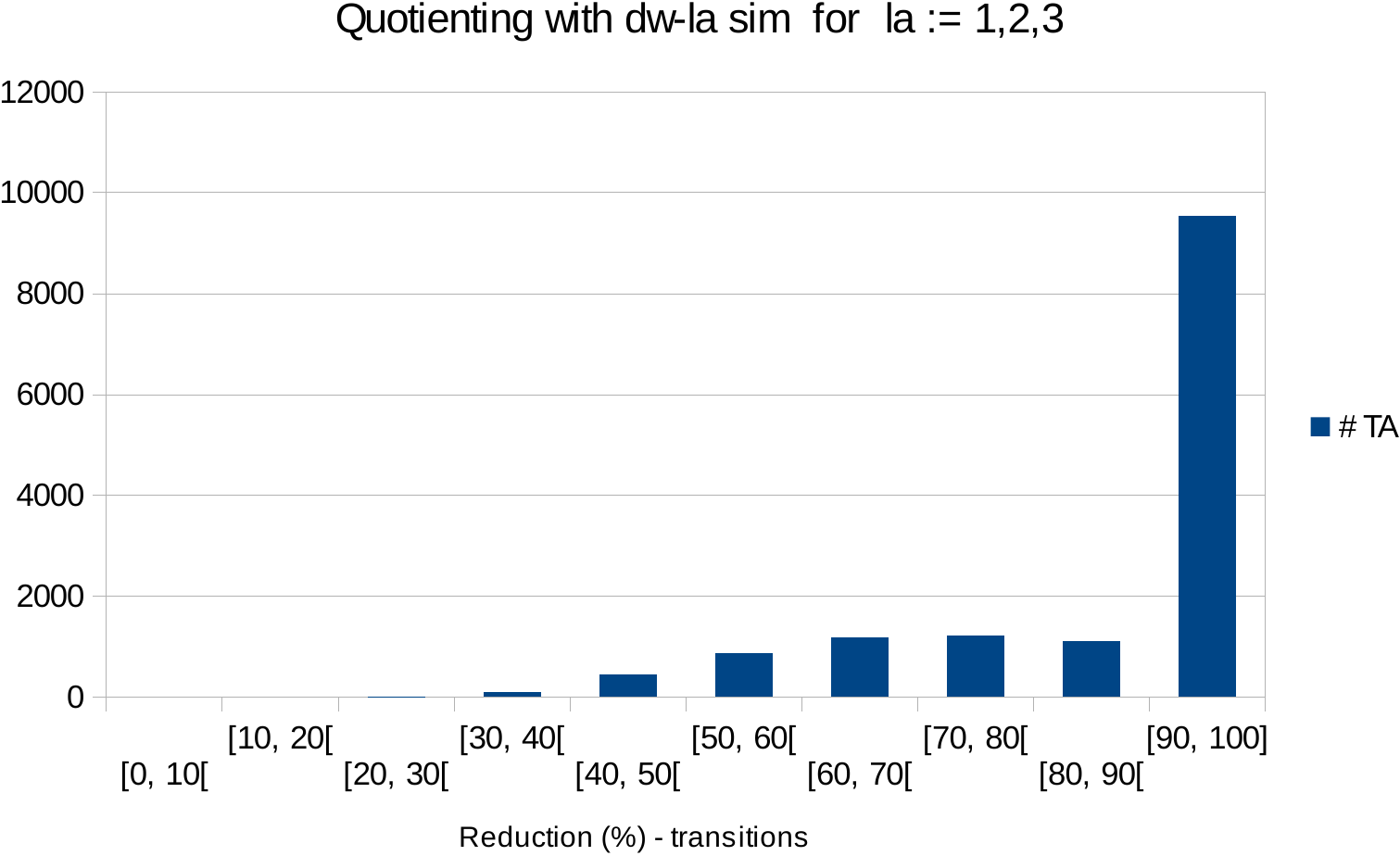}
\includegraphics[width=6cm]{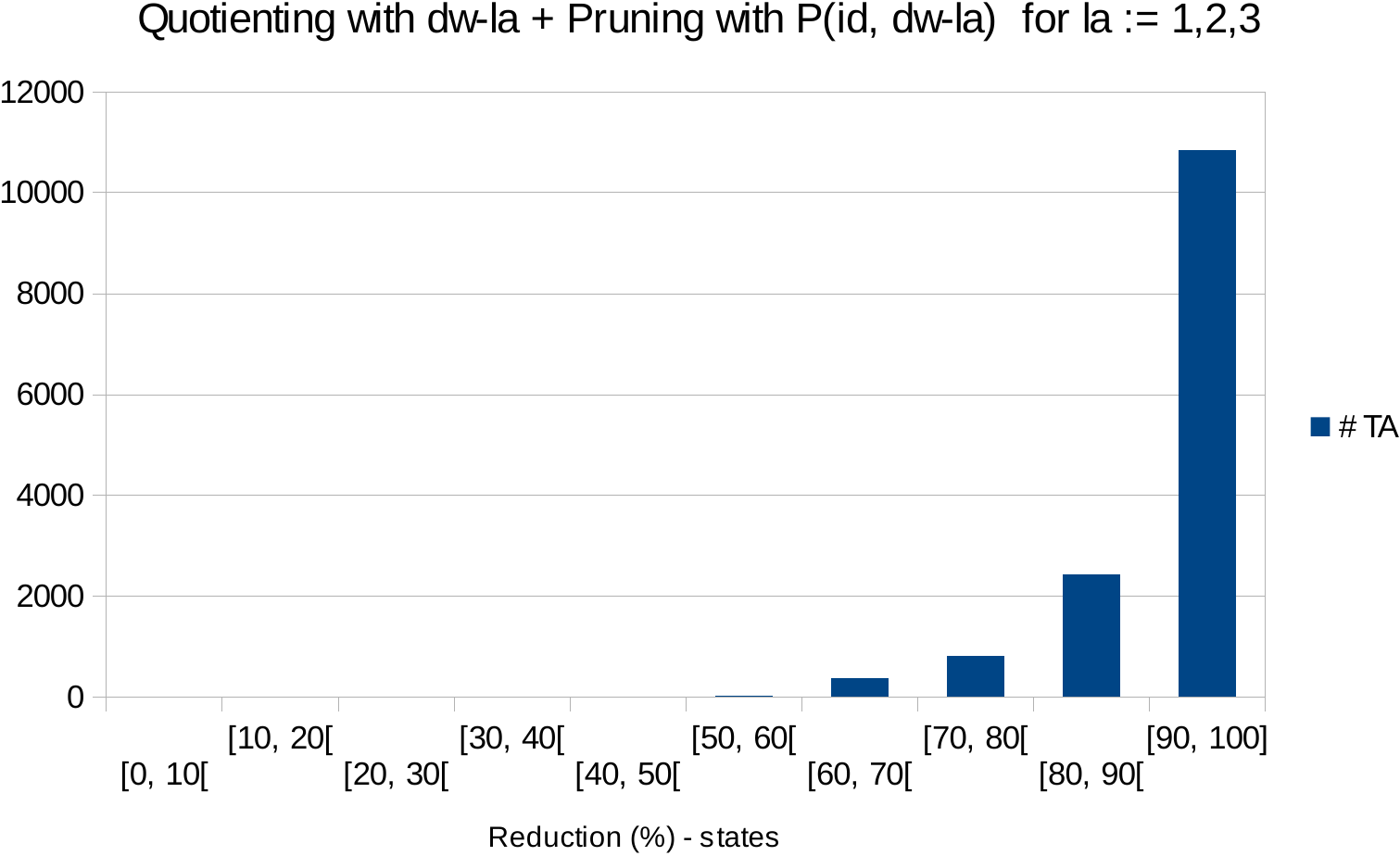} \,
\includegraphics[width=6cm]{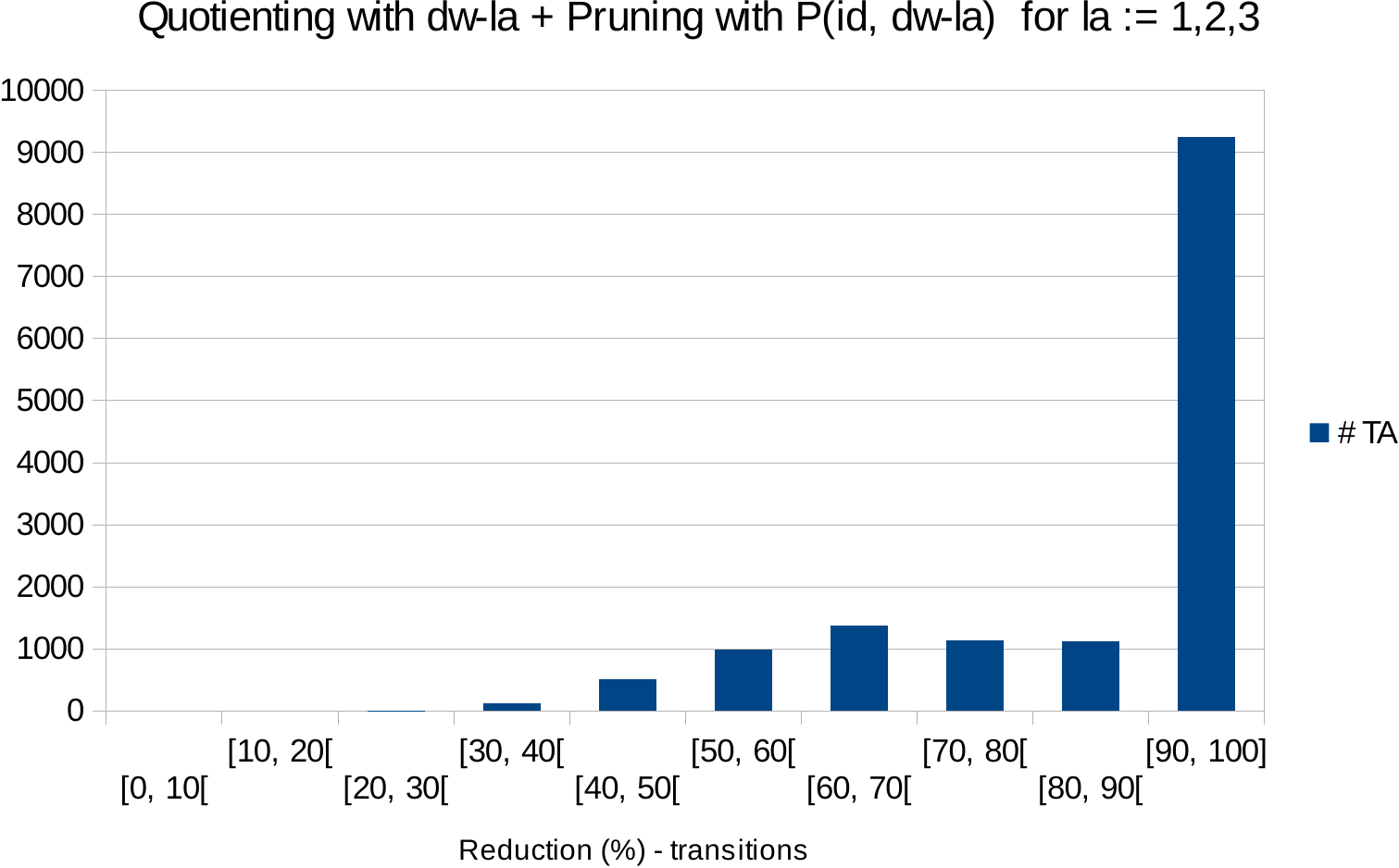}
\includegraphics[width=6cm]{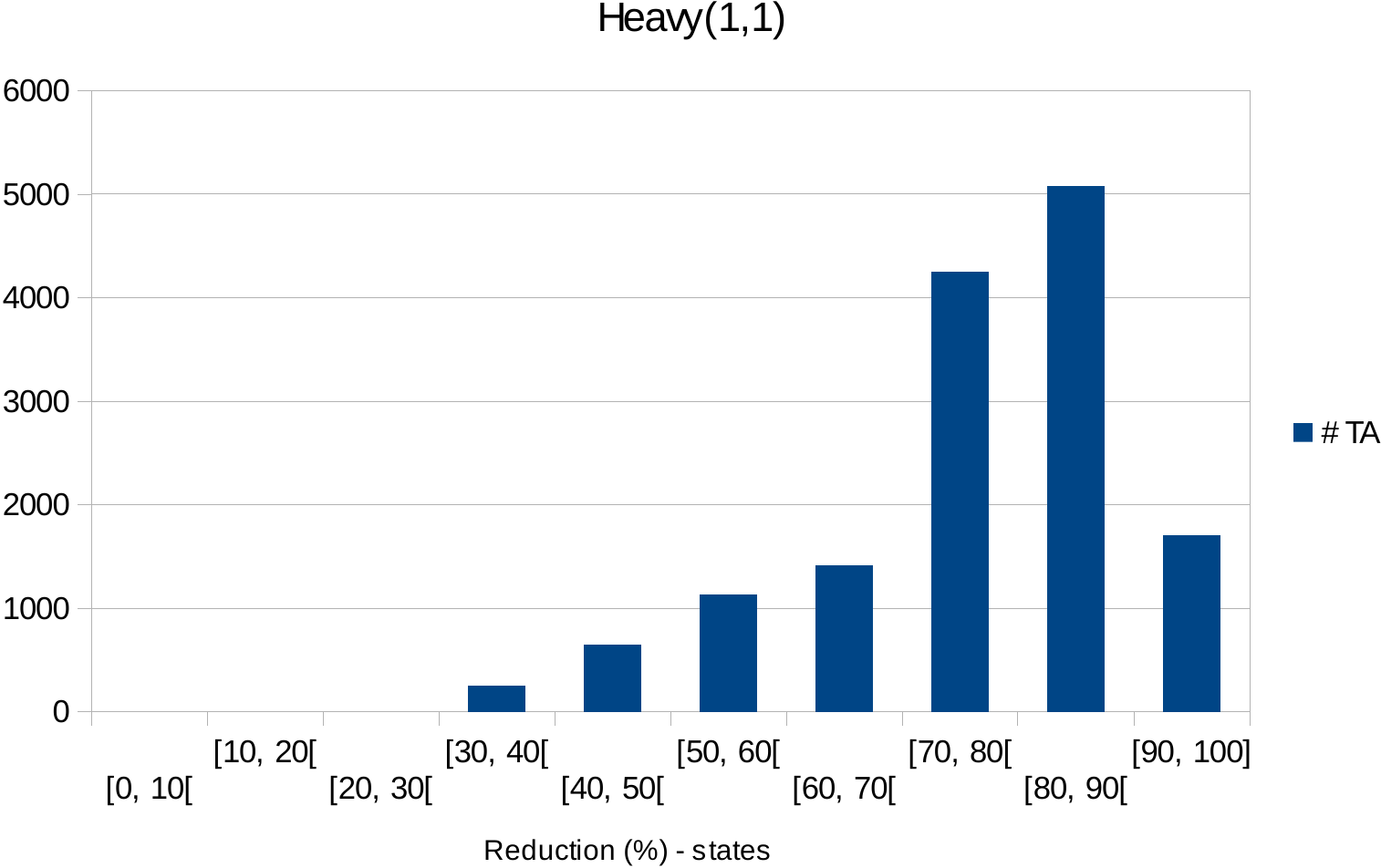} \;
\includegraphics[width=6cm]{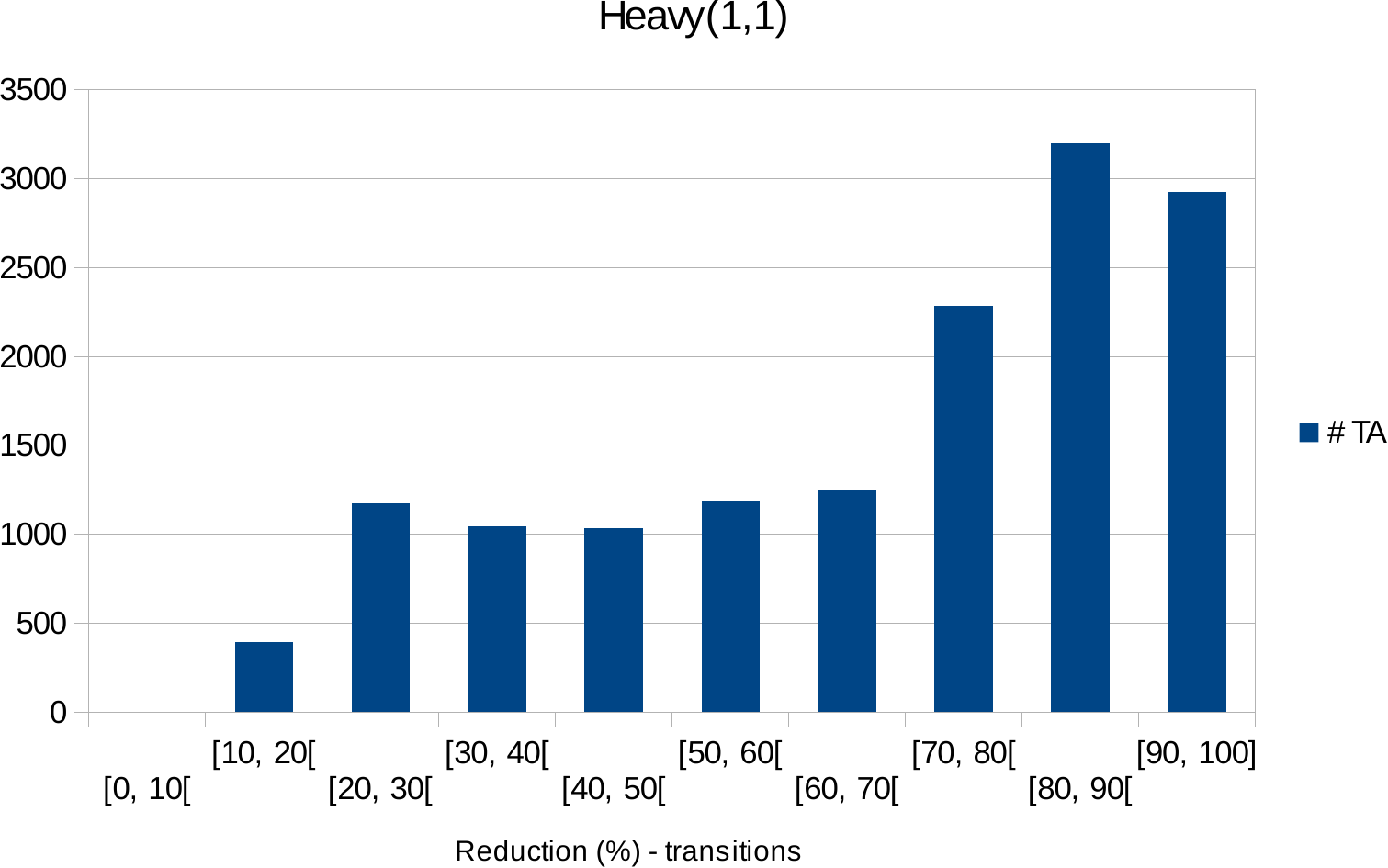}
\caption{Reduction of 14498 tree automata from the Forester tool
  \cite{tool:forester}, by methods RUQ (top row),
 RUQP (middle row),
 and Heavy (bottom row). 
A bar of height $h$ at an interval $[x,x+10[$ means that $h$ of the 14498
    automata were reduced to a size between $x\%$ and $(x+10)\%$ of their
    original size. The reductions in the numbers of states/transitions are
    shown on the left/right, respectively.
    Heavy(1,1) performed significantly better than RUQ and RUQP
    .
    Using 
    lookaheads higher than 1 made hardly any difference in this sample set.}
\label{fig:forester}
\end{figure}

\begin{figure}[htbp]
 \subfloat[This chart illustrates how the average number of transitions after
reduction (in percent of the original number) with each method ($y$-axis)
varied with the transition density $td$ of the sample ($x$-axis) being used (smaller is better).
Note that Heavy(2,4) reduces much better than Heavy(1,1) for ${\it td} \ge 3.5$.]
 {
 \includegraphics[width=12cm,height=4cm]{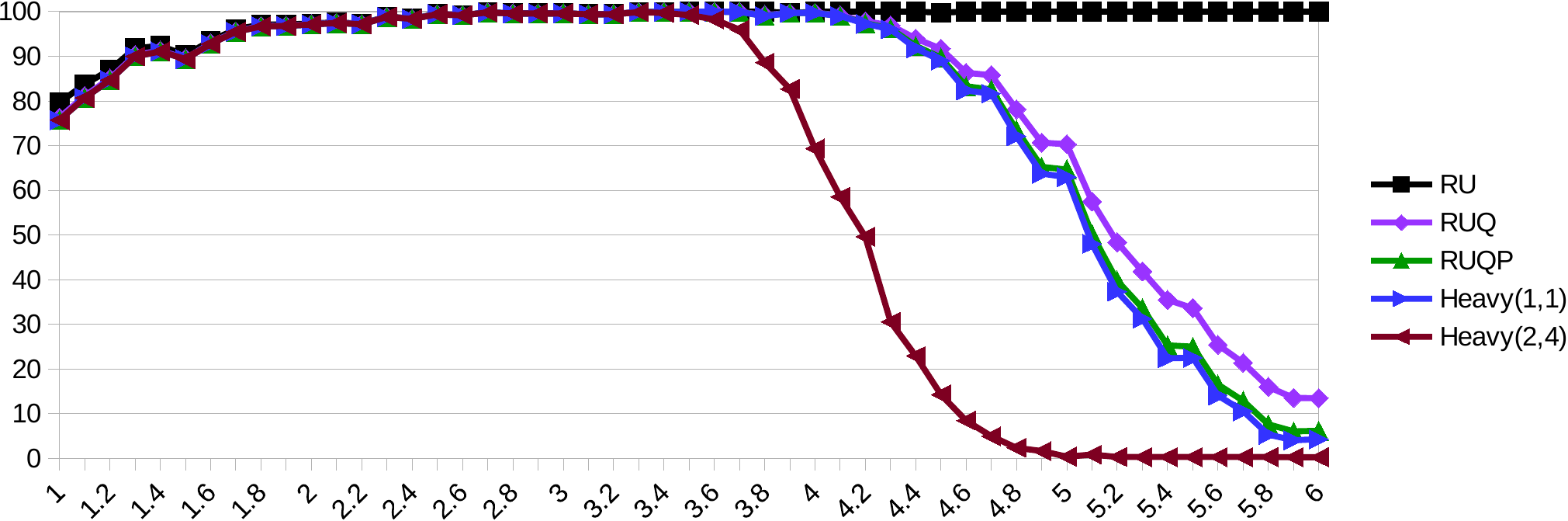}
 }
 \vspace{5mm}
  \subfloat[This chart illustrates how the average time (in seconds) taken by each method ($y$-axis)
varied with the transition density ${\it td}$ of the sample ($x$-axis) being used. 
Heavy(1,1) is significantly faster than Heavy(2,4), which has its time peak at ${\it td}=4.3$.
RUQP is slightly faster than Heavy(1,1) and at ${\it td}=4.5$ it has its highest average value ($0.13$s).]
 {
 \includegraphics[width=12cm,height=4.5cm]{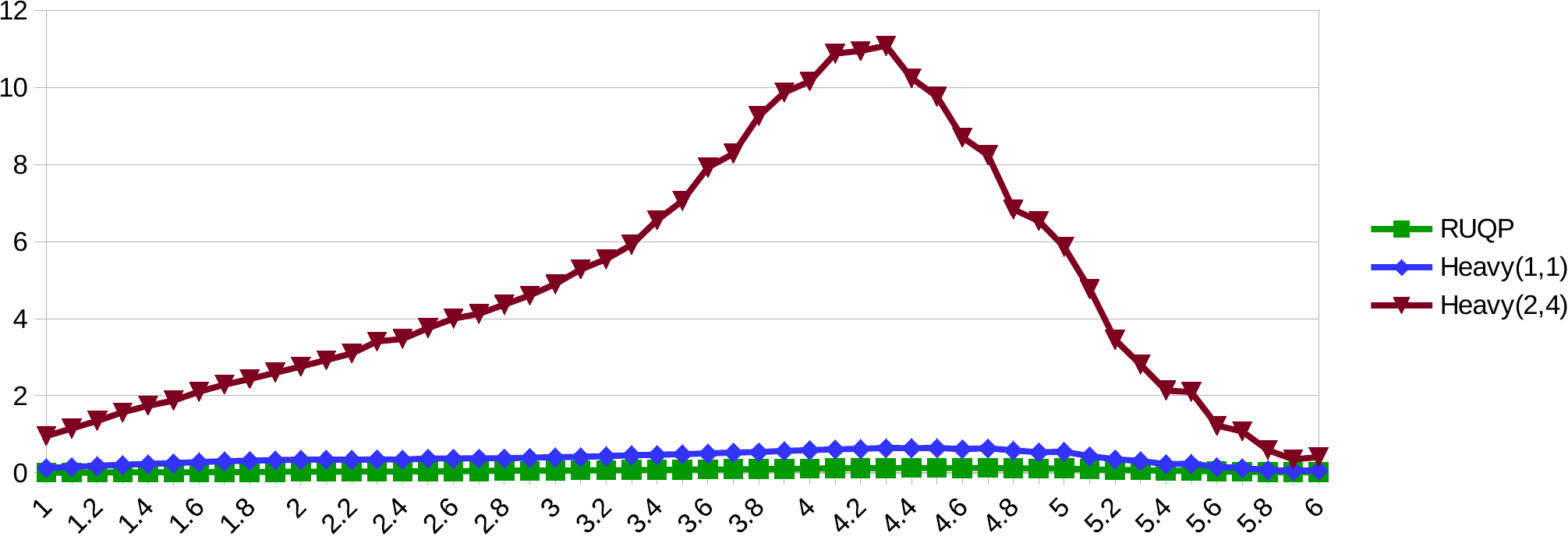}
 }
 \caption{Reduction of Tabakov-Vardi random tree automata with $n=100$, $s=2$ and ${\it ad}=0.8$. 
The top chart shows the average reduction in terms of number of transitions obtained with the various methods,
while the bottom chart shows how long they took.
Each data point in the charts is the average of 400 random automata.}
\end{figure}

\end{document}